\numberwithin{equation}{section}
\declaretheoremstyle[bodyfont=\it,qed=\qedsymbol]{noproofstyle}
\declaretheorem[numberlike=equation]{observation}
\declaretheorem[numberlike=equation,style=noproofstyle,name=Observation]{observationwp}
\declaretheorem[name=Observation,numbered=no]{observation*}
\declaretheorem[numberlike=equation]{problem}
\declaretheorem[numberlike=equation]{theorem}
\declaretheorem[name=Theorem,numbered=no]{theorem*}
\declaretheorem[numberlike=equation]{lemma}
\declaretheorem[name=Lemma,numbered=no]{lemma*}
\declaretheorem[numberlike=equation,style=noproofstyle,name=Lemma]{lemmawp}
\declaretheorem[numberlike=equation]{corollary}
\declaretheorem[name=Corollary,numbered=no]{corollary*}
\declaretheorem[numberlike=equation]{proposition}
\declaretheorem[name=Proposition,numbered=no]{proposition*}
\declaretheorem[name=Claim,numbered=no]{claim*}
\declaretheorem[name=Conjecture,numbered=no]{conjecture*}
\declaretheorem[name=Question,numbered=no]{question*}
\declaretheoremstyle[bodyfont=\it,qed=$\lozenge$]{defstyle} 
\declaretheorem[numberlike=equation,style=defstyle]{definition}
\declaretheorem[unnumbered,name=Definition,style=defstyle]{definition*}
\declaretheorem[unnumbered,name=Example,style=defstyle]{example*}
\declaretheorem[unnumbered,name=Notation=defstyle]{notation*}
\declaretheorem[unnumbered,name=Construction,style=defstyle]{construction*}
\declaretheorem[numberlike=equation,style=defstyle]{remark}
\declaretheorem[unnumbered,name=Remark,style=defstyle]{remark*}
\renewcommand{\phi}{\varphi}
\renewcommand{\epsilon}{\varepsilon}
\newcommand{\C}{\mathbb{C}}
\newcommand{\shortECCC}[2]{\texttt{\href{http://eccc.hpi-web.de/report/\ifnumcomp{#1}{>}{93}{19}{20}#1/#2/}{eccc:TR#1-#2}}}
\newcommand{\parseECCC}[1]{
\StrSubstitute{#1}{TR}{}[\tmpstring]%
\IfSubStr{\tmpstring}{/}{ 
\StrBefore{\tmpstring}{/}[\ecccyear]%
\StrBehind{\tmpstring}{/}[\ecccreport]%
}{
\StrBefore{\tmpstring}{-}[\ecccyear]%
\StrBehind{\tmpstring}{-}[\ecccreport]%
}%
\shortECCC{\ecccyear}{\ecccreport}}
\newif\ifnote
\newcommand{\phnote}[1]{\todo[color=red!100!green!33,size=\footnotesize]{ph: #1}}
\newcommand{\sriprahladhuvacha}[1]{\todo[color=red!100!green!33,inline,size=\small]{ph: #1}}
\newcommand{\RPnote}[1]{\textcolor{BrickRed}{\guillemotleft RP: #1 \guillemotright}}
\newcommand{\MKnote}[1]{\textcolor{Orange}{\guillemotleft MK: #1 \guillemotright}}
\newcommand{\SGnote}[1]{\textcolor{Blue}{\guillemotleft SG: #1 \guillemotright}}
\newcommand{\gitinfonotecolour}{Gray}
\newcommand{\phnote}[1]{}
\newcommand{\sriprahladhuvacha}[1]{}
\newcommand{\RPnote}[1]{}
\newcommand{\MKnote}[1]{}
\newcommand{\SGnote}[1]{}
\newcommand{\gitinfonotecolour}{white}
\newcommand{\ignore}[1]{}
\newcommand{\gitinfonote}{git info:~\gitAbbrevHash\;,\;(\gitAuthorIsoDate)\; \;\gitVtag}
\newcommand{\ehref}[1]{\href{mailto:#1}{#1}}
\newcommand{\round}[2]{{\lfloor #1 \rceil}_{#2}}
\newcommand*\samethanks[1][\value{footnote}]{\footnotemark[#1]}
\title{Fast Numerical Multivariate Multipoint Evaluation}
\author{{Sumanta Ghosh \thanks{California Institute of Technology, Pasadena, USA. \ehref{besusumanta@gmail.com}}}
\and 
{Prahladh Harsha\thanks{Tata Institute of Fundamental Research, Mumbai, India. \ehref{\{prahladh, mrinal, ramprasad\}@tifr.res.in}. Research supported by the Department of Atomic Energy, Government of India, under project 12-R\&D-TFR-5.01-0500. Research of second author partially supported by Google India Research Award.}}
\and 
{Simao Herdade\thanks{Yahoo Research, San Francisco, USA. \ehref{simaoherdade@gmail.com}} }
\and 
{Mrinal Kumar{\samethanks[2]}} 
\and 
{Ramprasad Saptharishi{\samethanks[2]}}
}
\date{}
\begin{document}
\maketitle

{\let\thefootnote\relax
\footnotetext{\textcolor{\gitinfonotecolour}{\gitinfonote (nothing to see here)}
}}


\begin{abstract}
We design nearly-linear time numerical algorithms for the problem of multivariate multipoint evaluation over the fields of rational, real and complex numbers. We consider both \emph{exact} and \emph{approximate} versions of the algorithm. The input to the algorithms are (1) coefficients of an $m$-variate polynomial $f$ with degree $d$ in each variable, and (2) points $\veca_1,\ldots, \veca_N$ each of whose coordinate has value bounded by one and bit-complexity $s$. 

\begin{description}
    \item{Approximate version:} Given additionally an accuracy parameter $t$, the algorithm computes rational numbers $\beta_1,\ldots, \beta_N$ such that $\abs{f(\veca_i) - \beta_i} \leq \sfrac{1}{2^t}$ for all $i$, and has a running time of $\inparen{(Nm + d^m)(s + t)}^{1 + o(1)}$ for all $m$ and all sufficiently large $d$. 
    \item{Exact version (when over rationals):} Given additionally a bound $c$ on the bit-complexity of all evaluations, the algorithm computes the rational numbers $f(\veca_1), \ldots, f(\veca_N)$, in time $\inparen{(Nm + d^m)(s + c)}^{1 + o(1)}$ for all $m$ and all sufficiently large $d$. . 
\end{description}

Our results also naturally extend to the case when the input is over the field of real or complex numbers under an appropriate standard model of representation of field elements in such fields. 

Prior to this work, a nearly-linear time algorithm for multivariate multipoint evaluation (exact or approximate) over any infinite field appears to be known only for the case of univariate polynomials, and was discovered in a recent work of Moroz \cite{Moroz2021}. In this work, we extend this result from the univariate to the multivariate setting. However, our algorithm is based on ideas that seem to be conceptually different from those of Moroz \cite{Moroz2021} and crucially relies on a recent algorithm of Bhargava, Ghosh, Guo, Kumar \& Umans \cite{BhargavaGGKU2022} for multivariate multipoint evaluation over finite fields, and known efficient algorithms for the problems of rational number reconstruction and fast Chinese remaindering in computational number theory. 
\end{abstract}

\section{Introduction}\label{sec: intro}

In this paper, we study the problem of designing fast algorithms for the following natural computational problem. 
\begin{quote}
    Given an $m$ variate polynomial $f$ of degree less than $d$ in each variable over an underlying field $\F$ as a list of coefficients, and (arbitrary) evaluation points $\veca_1, \veca_2, \ldots, \veca_N \in \F^m$, output $f(\veca_i)$ for every $i$.
\end{quote}

This computational task is referred to as \emph{Multivariate Multipoint Evaluation (MME)} in literature and fast algorithms for MME are of fundamental interest in computational algebra, not only due to the evident natural appeal of the problem but also due to potential applications of MME as an important subroutine for algorithms for many other algebraic problems (see \cite{KedlayaU2011} for a detailed discussion on these applications). 

The input for MME is specified by $(d^m + Nm)$ field elements, or alternatively $(d^m + Nm) \cdot s$ bits, where $s$ is an upper bound on the bit complexity of any field constant in the input. For finite fields, $s$ can be taken to be $\log |\F|$. Clearly, there is an algorithm for this problem that takes roughly $(d^m \cdot Nm)^{1 + o(1)}$  many field operations or about $(d^m \cdot Nm \cdot s)^{1 + o(1)}$ bit operations: we iteratively evaluate the polynomial one input point at a time. Obtaining significantly faster and ideally \emph{nearly-linear}\footnote{We say that an algorithm has time complexity nearly-linear in the input size if for all sufficiently large $n$, the algorithms runs on inputs of size $n$ in time $n^{1 + o(1)}$.} time algorithms for MME is the main question motivating this work. Here the time complexity of an algorithm could be measured either in terms of the number of field operations (in case the algorithm is ``algebraic\footnote{Algorithms for MME that only need arithmetic over the underlying field in their execution, or in other words can be modelled as an arithmetic circuit over the underlying field are  referred to as algebraic.}'' in the sense that only uses field operations over the underlying field, e.g. like the trivial algorithm outlined above) or the number of bit operations.

\subsection{Prior work}
Before describing the precise problem studied in this work and our main results, we start with a brief discussion of the current state of art of algorithms for MME. While the results in this paper are over infinite fields like reals, rationals and complex numbers, we begin our discussion of prior work on MME by recalling the state of affairs over finite fields. 

\subsubsection{Multipoint evaluation over finite fields} 
Multipoint evaluation of polynomials is a non-trivial problem even for the case of univariate polynomials, and a non-trivial algorithm is unclear even for this case over any (sufficiently large) field. When the set of input points have additional structure, for instance, they are all roots of unity of some order over the underlying field, the Fast Fourier Transform (FFT) gives us a nearly-linear time algorithm for this problem. However, it is not immediately clear whether ideas based on FFT can be easily extended to the case of arbitrary evaluation points. 

In a beautiful work in 1974, Borodin and Moenck \cite{BorodinM1974} designed a significantly faster algorithm for univariate multipoint evaluation by building on FFT and a fast algorithm for division with remainder for univariate polynomials. The algorithm of Borodin and Moenck worked over all fields and was algebraic, in the sense mentioned earlier, the number of field operations needed by the algorithm was $(N + d)^{1 + o(1)}$, nearly-linear in the number of field elements in the input. 

Extending these fast algorithms for multipoint evaluation from the univariate to the multivariate case proved to be quite challenging, even for the bivariate case. N\"usken and Ziegler \cite{NuskenZ2004} gave a non-trivially fast algorithm for this problem over all fields, although the precise time complexity of their algorithm was not nearly linear in the input size. The state of art for this problem saw a significant improvement in the works of Umans \cite{Umans2008} and Kedlaya \& Umans \cite{KedlayaU2008} (see also \cite{KedlayaU2011}) who gave fast algorithms for MME for the case when the number of variables $m$ is significantly smaller than the degree parameter $d$, i.e. $m = d^{o(1)}$, over fields of small characteristic and all finite fields respectively.  

This case of large number of variables was addressed recently in works of Bhargava, Ghosh, Kumar \& Mohapatra \cite{BhargavaGKM2022} and Bhargava, Ghosh, Guo, Kumar \& Umans \cite{BhargavaGGKU2022} who gave fast\footnote{Strictly speaking, these algorithms do not run in nearly-linear time, since the running time has $(\log |\F|)^c$ factor where $c$ is a fixed constant that can be greater than one. However, the dependence of the running time on the term $(d^m + Nm)$ is nearly-linear. } algorithms for MME over fields of small characteristic and over all finite fields respectively, for all sufficiently large $m, d$. 

We also note that the algorithms of Kedlaya \& Umans \cite{KedlayaU2008} and those of Bhargava, Ghosh, Guo, Kumar \& Umans \cite{BhargavaGGKU2022} for MME over all finite fields are not algebraic, and in particular rely on bit access to the input field elements and bit operations on them. This is in contrast to the algorithms of Umans \cite{Umans2008} and  Bhargava, Ghosh, Kumar \& Mohapatra \cite{BhargavaGKM2022} for MME over finite fields of small characteristic that are algebraic in nature. Designing algebraic algorithms for MME over all finite fields continues to be a very interesting open problem in this line of research. 
`
\subsubsection{Multipoint evaluation over infinite fields}

As we shall see, our understanding of the problem here is rather limited compared to that over finite fields. However, before moving on to the results, we first discuss some subtleties with the definition of this problem itself over infinite fields.

\paragraph{Field operations vs bit complexity:}

Field arithmetic over finite fields preserves the worst case bit complexity of the constants generated, but this is not the case over infinite fields. This increase in bit-complexity in intermediate computations leads to some issues that we discuss next. 

The first issue is that even the bit complexity of the output may not be nearly-linear in the input bit complexity, thereby ruling out any hope of having an algorithm with time complexity nearly-linear in the bit complexity of the input. The second issue is that even for inputs where the bit complexity of the input field elements and the output field elements are promised to be small, it might be the case that in some intermediate stage of its run, an algorithm for MME generates field elements of significantly large bit complexity. For instance, the classical algorithm of Borodin \& Moenck for univariate multipoint evaluation has near linear complexity in terms of the number of field operations, but it is not clear if the bit complexity of the algorithm is also nearly-linear in the input and output bit complexities. 

\paragraph{The input and output model:}

\SGnote{PH had a comment about decimal vs bit. Not sure what to do.}
For fields such as real or complex numbers, we need to specify a model for providing the inputs which potentially require infinite precision. The standard model used in numerical algorithms is via black-boxes that we refer to as \emph{approximation oracles} (formally defined in \cref{def:approx oracles}). Informally an approximation oracle for a real number $\alpha \in (-1, 1)$ provides, for every $k \in \N$, access to the first $k$ bits of $\alpha$ after the decimal, and its sign in time $\tilde{O}(k)$ (for complex numbers, we will assume the real and imaginary parts are provided via such oracles). 

For the output, we could either ask to compute the evaluations to the required precision, or compute the evaluations exactly when, say, in the case of rational numbers. In this paper, we consider both versions of these problems. \\

Note that computing a real number $\alpha \in (-1,1)$ within a given error $\epsilon < 1$ is essentially the same as computing the most significant $\Omega(\log \sfrac{1}{\epsilon})$ bits of the output correctly. In this sense, $O(\log \sfrac{1}{\epsilon})$ provides a natural upper bound on the bit complexity of the output for an instance of approximate multipoint evaluation. Perhaps a bit surprisingly, we did not know an algorithm for  multipoint evaluation with bit complexity nearly-linear in input size and $(\log \sfrac{1}{\epsilon})$ even for the setting of univariate polynomials till very recently. This is in contrast to the result of Borodin \& Moenck~\cite{BorodinM1974} that obtains an upper bound on the number of field operations (but not the number of bit operations) for (exact) univariate multipoint evaluation over all fields. 

In a beautiful recent work, Moroz \cite{Moroz2021} designed such an algorithm for the approximation version of univariate multipoint evaluation. Formally, he proved the following theorem. 

\begin{theorem}[Moroz \cite{Moroz2021}]\label{thm: moroz}
    There is a deterministic algorithm that takes as input a univariate polynomial $f(x) = \sum_{i = 0}^d f_ix^i \in \C[x]$ as a list of complex coefficients, with $(\abs{f}_1 := \sum_{i = 0}^d \abs{f_i} \leq 2^{\tau})$ and inputs $a_1, a_2, \ldots, a_d \in \C$ of absolute value less than one, and outputs $\beta_1, \beta_2, \ldots, \beta_d \in \C$ such that for every $i$, 
    \[
    \abs{{f(a_i) - \beta_i}} \leq \abs{f}_1 \cdot 2^{-t} \, ,
    \]
    and has bit complexity at most $\Tilde{O}(d(\tau + t))$. 
\end{theorem}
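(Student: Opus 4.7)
My plan is to adapt the classical Borodin--Moenck subproduct-tree algorithm for univariate multipoint evaluation and augment it with a precision control mechanism tailored to the fact that all evaluation points lie in the open unit disk. I would build a balanced binary tree over $a_1, \ldots, a_d$, store at each internal node $v$ the subproduct $M_v(x) = \prod_{i \in S_v}(x - a_i)$ where $S_v$ indexes the leaves below $v$, and in a top-down sweep maintain an approximate remainder $f_v \approx f \bmod M_v$ so that $f(a_i)$ emerges at each leaf. The crux of the argument lies not in the arithmetic-operation count (which Borodin--Moenck already handles) but in controlling the bit complexity of the intermediate polynomials, which would otherwise balloon and make the total bit complexity scale like $d^2 (\tau + t)$ rather than the target $\tilde{O}(d(\tau + t))$.

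The key idea would be to re-express $f_v$ in a shifted variable chosen to exploit the geometry of $S_v$. Concretely, I would assign to every node $v$ a center $c_v \in \C$ and a radius $r_v < 1$ with $S_v \subseteq \set{z \in \C : \abs{z - c_v} \leq r_v}$, and represent $f_v$ as a truncated Taylor polynomial in $y = x - c_v$. Since on $\abs{y} \leq r_v$ the tail of $f_v(c_v + y)$ decays geometrically in $r_v$, keeping only $O((\tau + t)/\log(1/r_v))$ terms should suffice to recover every $f(a_i)$ with $i \in S_v$ to accuracy $\abs{f}_1 \cdot 2^{-t}$. Descent from $v$ to its child $w$ would then consist of a fast Taylor shift from $c_v$ to $c_w$ (computable in nearly-linear time by convolution), a fast polynomial division by the correspondingly shifted $M_w$ via Newton iteration on the reciprocal, and a re-truncation to the smaller degree permitted by $r_w$. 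Carrying the coefficients as interval or ball approximations would provide rigorous error bookkeeping through each shift, division, and truncation.

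The choice of clustering is what makes the budget balance: I would pick the subproduct tree so that at every level the sum over clusters of the truncation degree $(\tau + t)/\log(1/r_v)$ remains $\tilde{O}(d)$, giving $\tilde{O}(d(\tau + t))$ work per level and $O(\log d)$ levels. I expect the principal obstacle to be the case of many $a_i$ lying arbitrarily close to the unit circle, where $\log(1/r_v)$ is tiny and truncation saves almost nothing; resolving this likely requires a more delicate geometric decomposition (for instance, an annular split separating points by their distance to the boundary, together with a change of variable that re-normalises the active disk at each level) and an amortised argument showing that, even when individual clusters degenerate, the total truncated degree across a level stays $\tilde{O}(d)$. Establishing this amortisation while keeping the end-to-end error analysis tight is the step where I anticipate the bulk of the new work beyond the Borodin--Moenck template.
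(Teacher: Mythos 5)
The paper does not prove this statement: it is quoted as a prior result of Moroz~\cite{Moroz2021}, and the paper's own contribution is the multivariate generalisation by a completely different route (rounding to an integer instance, fast Chinese remaindering, and the finite-field MME algorithm of Bhargava et al.~\cite{BhargavaGGKU2022}). The only description the paper gives of Moroz's proof is that it ``relies on geometric ideas, and does not involve any modular arithmetic,'' and that a multivariate extension of it was not clear to the authors. So there is no internal proof here to compare against; your sketch has to be judged on its own.

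On its own terms, the plan is honest about being incomplete --- you flag the amortisation over clusters near the unit circle and the end-to-end error analysis as ``the bulk of the new work'' --- but those are the content of the theorem, not a loose end. There is also a concrete obstruction the sketch does not address. Near the root of the subproduct tree the cluster radius $r_v$ is $\Theta(1)$, so Taylor truncation removes nothing: $f_v = f \bmod M_v$ still has degree $\Theta(d)$, and the coefficients of $M_v = \prod_{i \in S_v}(x - a_i)$ are elementary symmetric functions of $|S_v| = \Theta(d)$ complex numbers, hence of bit-length up to $\Theta(d)$. Writing down $M_{\mathrm{root}}$ alone already costs $\Theta(d^2)$ bits, which is exactly the super-linear blow-up that plain Borodin--Moenck suffers; your shifted-variable machinery only begins to pay off several levels down. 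Moroz's algorithm sidesteps this by never forming a subproduct or a remainder at all: it builds a \emph{point-independent} hierarchical cover of the unit disk by $\tilde{O}(d)$ small disks, storing on each a low-degree Taylor approximant of $f$ obtained by shifting and re-truncating the parent's approximant, and then answers each query $a_i$ independently in $\tilde{O}(\tau + t)$ time against the relevant piece. Your local-variable and Cauchy-estimate intuition is the right one; the subproduct tree wrapped around it is what reintroduces the coefficient growth you were trying to kill, and dropping the remainders in favour of a pure shift-and-truncate hierarchy over a cover of the disk is the step that would close the gap.
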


As our main result in this paper, we prove a generalization of \cref{thm: moroz} to the multivariate setting. 

\subsection{Our results} \label{sec: results}
Before stating our results, we formally define the problems that we study. The first question of approximate-MME is essentially the generalization of the univariate version of the problem studied by Moroz \cite{Moroz2021}. For convenience, we state the problem for the fields of rational and real numbers, but they extend in a straightforward manner to complex numbers and other natural subfields of it.

\begin{problem}[Approximate multivariate multipoint evaluation (approximate-MME)]\label{problem:approx-mme}
    We are given as input an $m$-variate polynomial $f \in \R[\vecx]$ of degree at most $(d-1)$ in each variable as a list of coefficients, points $\veca_1,\ldots, \veca_N \in \R^m$, and an accuracy parameter $t \in \N$. Here every field element is assumed to be in $(-1,1)$ and is given via an approximation oracle. 
    
    Compute rational numbers $\beta_1, \ldots, \beta_N$ such that $\abs{f(\veca_i) - \beta_i} < 1/2^t$ for all $i \in [N]$. 
\end{problem}

\noindent
We also study the following variant of MME in the paper. 
\begin{problem}[Exact multivariate multipoint evaluation (exact-MME)]\label{problem:exact-mme}
    We are given as input an $m$-variate polynomial $f \in \Q[\vecx]$ of degree at most $(d-1)$ in each variable as a list of coefficients, points $\veca_1,\ldots, \veca_N \in \Q^m$ and an integer parameter $s>0$, such that all rational numbers in the input and \emph{output} are expressible in the form $p/q$ for integers $p, q$ with $|p|, |q| < 2^s$ and every rational number in the input has absolute value less than one.  
    
    Compute $f(\veca_1), \ldots, f(\veca_N)$. 
\end{problem} 


The restriction that the absolute value of all constants is at most one requires a short discussion. The restriction on the coefficients of $f$ is without loss of generality (by scaling) but the restriction on the coordinates of points is \emph{not} without loss of generality, but is nevertheless well-motived. See \cref{remark:absolute-value-constants} for details. \\

\noindent
Our main result is fast algorithms for \cref{problem:approx-mme} and \cref{problem:exact-mme} for all sufficiently large $d$.  

\begin{theorem}[Approximate multipoint evaluation - informal] \label{thm: approx mme informal}
There is a deterministic algorithm for approximate-MME (\cref{problem:approx-mme}) that runs in time 
\[
\left((Nm + d^m)t\right)^{1 + o(1)}\, ,
\]
for all sufficiently large $d, t$ and all $m$. 
\end{theorem}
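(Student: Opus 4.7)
\noindent
The plan is to reduce the approximate multivariate multipoint evaluation problem over $\R$ to many instances of exact multipoint evaluation over prime-order finite fields, where the result of~\cite{BhargavaGGKU2022} applies as a black box, and then to stitch the answers together using fast Chinese remaindering. First I would use the approximation oracles to discretize: truncate each coefficient of $f$ and each coordinate of each $\veca_i$ to $s_0 = \Theta(s + t + m\log d + \log N)$ bits after the binary point, obtaining rational surrogates $\tilde f$ and $\tilde \veca_i$. A direct error analysis---bounding $|f|_1 \le d^m$ and each partial derivative by $md \cdot |f|_1$---gives $|f(\veca_i) - \tilde f(\tilde\veca_i)| \le 2^{-t-1}$, so it suffices to compute $\tilde f(\tilde\veca_i) \in \Q$ for each $i$ to accuracy $2^{-t-1}$. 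Clearing denominators, write $\tilde f(\tilde\veca_i) = A_i/B$ where $B = 2^{s_0(1+md)}$ and $A_i \in \Z$ with $|A_i| \le d^m B$; the task then reduces to computing each integer $A_i$, and dividing and rounding.

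To compute the $A_i$, I would pick a family of primes $p_1,\ldots,p_k$, each of bit-size $\Theta(\log d)$ (large enough so that $2^{s_0}$ is invertible modulo each prime and so that the BGGKU algorithm applies), whose product exceeds $2\max_i |A_i|$. For each $j$, I reduce $\tilde f$ and all $\tilde\veca_i$ modulo $p_j$ (using the modular inverse of $2^{s_0}$), obtaining a polynomial in $\F_{p_j}[\vecx]$ and points in $\F_{p_j}^m$, and invoke the BGGKU algorithm to compute all $N$ evaluations in $\F_{p_j}$ simultaneously, in time $(d^m + Nm)^{1+o(1)}(\log p_j)^c$. A fast Chinese-remaindering pass, applied per point, then yields $A_i$ from its residues; finally, rounding $A_i/B$ to the nearest $2^{-(t+2)}$-dyadic produces the desired output $\beta_i$.

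The main obstacle is verifying that the total cost fits the envelope $((Nm+d^m)t)^{1+o(1)}$. Since $A_i$ may have bit-length $\Theta(s_0 md)$, one is forced to take $k = \Theta(s_0 md/\log d)$ primes, giving aggregate BGGKU cost of order $s_0\,md \cdot (d^m + Nm)^{1+o(1)}(\log d)^{c-1}$. The delicate step is to check that, for all $m$ and for sufficiently large $d$ and $t$, the $m\log d$ and $md$ overheads are absorbed into the $(d^m+Nm)^{1+o(1)}(s+t)^{1+o(1)}$ factors of the target bound; the $(\log d)^{c-1}$ is harmless since each prime has size $\Theta(\log d)$ and so disappears into $d^{o(1)}$, and the fast CRT pass together with the existence of enough primes of the required size contribute only polylogarithmic overhead. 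Cleanly absorbing the $md$ factor, uniformly in $m$ and for all sufficiently large $d, t$, is where the main technical care must go.
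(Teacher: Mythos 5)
Your high-level plan---truncate using the approximation oracles, clear the powers of two from the denominators, CRT down to several small prime fields, invoke \cite{BhargavaGGKU2022} per prime, and recombine by fast CRT---is the same reduction the paper uses, modulo a small structural difference: the paper packages the middle steps as a standalone ``exact-MME over integers'' primitive (and inside that, a nearly-linear ``MME over prime fields'' primitive with its own inner CRT), whereas you do a single flat CRT from integers directly to small primes. Either organization is fine, and your flat version is arguably simpler, though the paper's modularization is reused later for exact-MME over rationals.

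There is, however, a genuine gap that your own last paragraph hints at without resolving: your proposal does not handle small (in particular, constant) $m$. The cost you quote for the \cite{BhargavaGGKU2022} call, $(d^m+Nm)^{1+o(1)}(\log p_j)^c$, omits the $\poly(m,d)$ factor that is actually in \cref{thm:MME-finite-fields}: the true cost is $(d^m+Nm)^{1+o(1)}\cdot\poly(m,d,\log|\F|)$. When $m$ is a constant, $\poly(d)$ is \emph{not} $(d^m)^{o(1)}$, so no amount of shrinking the primes makes a single BGGKU call nearly linear, and the additional factor of $k=\Theta(s_0 m d/\log d)$ primes (the $md$ you flag as needing ``technical care'') compounds the problem. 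Choosing primes of bit-size $\Theta(\log d)$ also fails to supply enough primes when $t \gg d$; one needs bit-size on the order of $\log(t\,md)$. The paper's fix is to apply an inverse-Kronecker map at the start (\cref{defn:kronecker}): when $m < \log\log d$, replace $d$ by $d'=\lfloor \log d\rfloor$ and $m$ by $m'm$ with $(d')^{m'}>d$, converting $f$ into a many-variate, low-individual-degree polynomial, after which $\poly(m,d)=(d^m+Nm)^{o(1)}$ and the $md$ and $\log p$ factors really can be absorbed. Doing this over $\R$ requires evaluating the coordinates at powers $a^{d'^j}$, for which one needs approximation oracles for those powers; this is exactly what \cref{lem:approximation-oracle-exponentiation} supplies, and it is a second ingredient missing from your sketch. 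Finally, two minor notes: approximate-MME as stated (\cref{problem:approx-mme}) has no input-bit-complexity parameter $s$, so $s$ should not appear in $s_0$; and the $\log N$ term in $s_0$ is unnecessary (the truncation error depends only on $d$ and $m$, as in \cref{obs:error-rounding-coeffs,obs:error-rounding-points}), though it is harmless.
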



\begin{theorem}[Exact multipoint evaluation - informal] \label{thm: exact mme informal}
There is a deterministic algorithm for exact-MME over rational numbers (\cref{problem:exact-mme}) that runs in time 
\[
\left((Nm + d^m)s\right)^{1 + o(1)} \, 
\]
for all sufficiently large $d, s$ and all $m$. 
\end{theorem}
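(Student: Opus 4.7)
The plan is to reduce \cref{problem:exact-mme} to the approximate-MME theorem (\cref{thm: approx mme informal}) by using fast rational number reconstruction to recover exact rational evaluations from sufficiently accurate approximations. The guarantee that both inputs and outputs are rationals with bit-complexity at most $s$ is exactly what makes this reduction viable.

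First, I would observe that each rational input $p/q$ with $|p|,|q|<2^s$ naturally furnishes an efficient approximation oracle: the first $k$ bits of its binary expansion can be produced in time $\tilde{O}(k+s)$ by a single fast integer division. Hence an instance of exact-MME is already a valid instance of approximate-MME in the oracle model required by \cref{thm: approx mme informal}. I would then invoke that theorem with accuracy parameter $t := 2s+2$, obtaining rationals $\beta_1,\ldots,\beta_N$ with $|f(\veca_i)-\beta_i|<1/2^{2s+2}$ in time $((Nm+d^m)(s+t))^{1+o(1)} = ((Nm+d^m)s)^{1+o(1)}$.

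Next, for each $i$, I would apply the classical continued-fraction (or half-gcd / extended Euclidean) based rational reconstruction algorithm to $\beta_i$ to recover the unique rational $p_i/q_i$ with $|p_i|,|q_i|\le 2^s$ lying within $1/2^{2s+1}$ of $\beta_i$, and output it. Uniqueness of this rational is the key correctness observation: any two distinct rationals $a/b\neq c/d$ with $|b|,|d|\le 2^s$ satisfy $|a/b - c/d| \ge 1/(bd) \ge 1/2^{2s}$. Since \cref{problem:exact-mme} guarantees that $f(\veca_i)$ is itself such a rational, and $|f(\veca_i)-\beta_i|<1/2^{2s+2}$, the reconstruction is forced to output $f(\veca_i)$ exactly. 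Each reconstruction costs $\tilde{O}(s)$, contributing $\tilde{O}(Ns)$ overall, which is dominated by the approximate-MME call.

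The only real subtlety is matching the precision budgets: $t$ must be large enough that rational reconstruction uniquely recovers the promised small rational (forcing $t \ge 2s + \Omega(1)$), yet small enough that the approximate-MME runtime stays within the target $((Nm+d^m)s)^{1+o(1)}$; the choice $t=\Theta(s)$ satisfies both. Beyond this precision bookkeeping, the argument relies only on standard computational-number-theory primitives (fast integer division to realize the oracle and fast rational reconstruction via half-gcd), so I do not anticipate a substantial obstacle once \cref{thm: approx mme informal} is in hand.
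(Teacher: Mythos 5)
Your proposal is correct and follows essentially the same route as the paper: reduce to approximate-MME (\cref{thm: approx mme informal}) with accuracy $t = \Theta(s)$ by turning rational inputs into approximation oracles via fast integer division, then invoke fast rational number reconstruction via continued fractions / extended Euclid on each output, with correctness resting on the uniqueness of a small-denominator rational in a sufficiently small ball. The paper uses $t = 2s+1$ rather than $2s+2$ and spells out the continued-fraction machinery (\cref{lem:convergent-properties}, \cref{cor:fast-computation-convergents}, \cref{lem:fast-rnr}) in detail, but the reduction, the precision budget, and the running-time accounting are the same as yours.
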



\cref{thm: approx mme informal} is a generalization (by scaling coefficients) of \cref{thm: moroz} of Moroz in the sense that it handles an arbitrarily large number of variables. Interestingly, our proof is \emph{not} an extension of the proof of \cref{thm: moroz} to larger number of variables.  It relies on a different set of ideas and appears to be conceptually different from the proof of Moroz \cite{Moroz2021}. Moroz's algorithm relies on geometric ideas, and does not involve any modular arithmetic, whereas ours crucially relies on various reductions from an instance of MME (approximate or exact) over rational, real or complex numbers to instances of MME over finite fields. In fact, a generalization of Moroz's univariate algorithm to higher dimensions is not immediately clear to us, and would be interesting to understand. 

As discussed in the introduction, while measuring the complexity of algorithms for MME over the field of rational numbers in terms the number of bit operations, the dependence of the running time on the bit complexity of the output, as in \cref{thm: exact mme informal} is quite natural and essentially unavoidable. However, the fact that \cref{thm: exact mme informal} takes the bit complexity of the output  as a part of its input does not seem very natural and desirable. It would be very interesting to have an algorithm for exact-MME over rationals that does not need a bound on the output complexity as a part of the input, but runs in time nearly-linear in the input and output bit complexity. 

\subsection{Overview of the proofs} \label{sec: proof overview}
In this section, we outline the main ideas in the proofs of  \cref{thm: approx mme informal} and \cref{thm: exact mme informal}. For this discussion, we assume for simplicity that the input is over the field of rational numbers, and the field constants in the input are given to us exactly. The ideas here generalize to the setting of real inputs (for approximate-MME) by some clean and simple properties of approximation oracles.

\subsubsection{A \naive{} first attempt}

We start by setting up some notation. Let $f \in \Q[\vecx]$ be an $m$ variate polynomial of degree at most $(d-1)$ in each variable and let $\veca_1, \veca_2, \ldots, \veca_N \in \Q^m$ be the input points of interest. For now, let us assume that our goal is to output the value of $f$ on each $\veca_i$ exactly. We are also given the positive integer $t$ such that the numerator and the denominator of each of the field constants in the input, and the output are at most $2^t$.

From the prior work of Bhargava, Ghosh, Guo, Kumar and Umans \cite{BhargavaGGKU2022} we have fast algorithms for MME over all finite fields. Therefore, a natural strategy for solving MME over rational numbers is to somehow reduce the problem over rationals to instances of the problem over finite fields, and use the known algorithms for this problem over finite fields to solve these instances. A first step towards such a reduction would be to clear all the denominators in the input instance by taking their LCM  and obtain an instance of MME over the ring of integers, and then work modulo a large enough prime (or several small enough primes if needed for efficiency reasons), thereby reducing to instances of MME over finite fields. However, this seemingly natural approach runs into fundamental issues even for the simpler setting where each evaluation point $\veca_i$ has integer coordinates, and the only rational numbers appear in the coefficients of the polynomial $f$. We now briefly elaborate on this issue. 

Let us consider an input instance where every denominator in the coefficient vector of $f$ is a distinct prime. For instance, we can get such an instance where each of the first $d^m$ primes appears as a denominator of some coefficient of $f$. Note that the input bit complexity parameter $t$ needs to be at most $\poly(\log d, m)$ for this case. Since the length of this coefficient vector is $d^m$,  the LCM of these denominators is a natural number that is at least as large as the product of the first $d^m$ primes, which is at least $2^{d^m}$, and hence has bit complexity at least $d^m$. Thus, if we clear out the denominators of the coefficients of $f$ to obtain a polynomial $\hat{f}$ with integer coefficients, each of the coefficients of $\hat{f}$ can have bit complexity as large as $d^m$. In this case, the total bit complexity of the coefficient vector of $\hat{f}$ is at least $d^{2m}$, which is roughly quadratic in the original input size, and thus, any algorithm obtained via this approach will have prohibitively large time complexity.

In both our algorithms for approximate-MME and exact-MME, we indeed crucially rely on the algorithms for MME over finite fields due to Bhargava et al \cite{BhargavaGGKU2022}. However, this reduction is somewhat delicate and needs some care. On the way we rely on some well known tools from  computational algebra and number theory, like fast Chinese remaindering, fast rational reconstruction, Kronecker and inverse Kronecker maps. Perhaps a bit surprisingly, our algorithm for exact-MME uses the algorithm for approximate-MME as a subroutine. 

\begin{figure}[H]
    \caption{Overview of reductions}
    \label{fig:reductions}
    \begin{center}
        \begin{tikzpicture}
            \draw (-2,-1) rectangle (2,1) node [align=center, pos = 0.5] {Exact-MME \\over integers\\ {\tiny Fast-CRT + Faster \cite{BhargavaGGKU2022}}};
            \draw[->, ultra thick] (-5.5, 0) -- (-2,0);
            \draw[->, ultra thick] (2, 0) -- (6.5,0);

            \node[font=\footnotesize, align=center, anchor=center] at (-3,0.5) {truncate \\ + scale};
            \node[font=\footnotesize, align=center, anchor=center] at (3,0.5) {un-scale};
            \node[font=\footnotesize, align=center, anchor=center] at (5,0.75) {Fast\\continued\\fractions};

            \draw (-4,-1.5) rectangle (4,2);
            \node at (0,1.7) {Approximate-MME};

            \draw (-5,-2) rectangle (6,3);
            \node at (0,2.7) {Exact-MME over rationals};
        \end{tikzpicture}
    \end{center}
\end{figure}
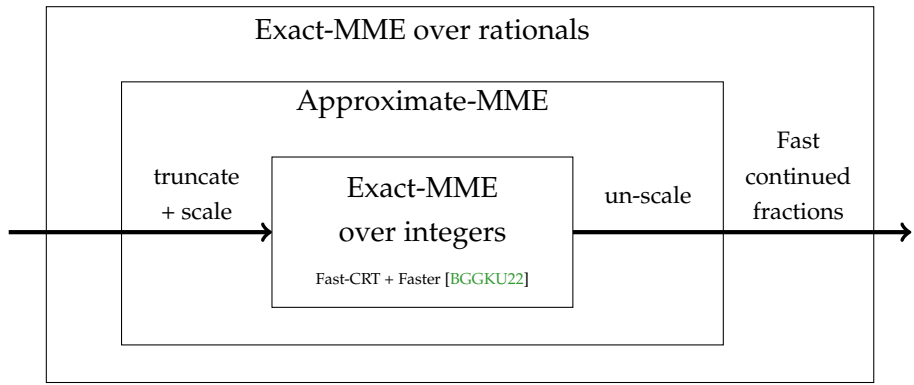

We now give an overview of the main ideas in these algorithms. We start with a very simple algorithm for exact-MME for the special case of  integer inputs that serves as a crucial subroutine for the algorithm for approximate-MME.   

\subsubsection{Algorithm for exact-MME over integers}
For this problem, all the field elements in the input are integers and the absolute value of each of these input field elements and those in the output is at most $2^s$ for a given parameter $s$.

The algorithm for MME for this case simply does this computation by working modulo a large enough prime (based on the given input and output complexities), thereby giving us a reduction from the problem over integers to that over a large enough finite field. At this point, we essentially invoke the algorithm of Bhargava et al for MME over finite fields to solve this problem. One subtlety here is that as stated in their paper \cite{BhargavaGGKU2022}, the algorithm does not quite run in nearly-linear time due to two factors. The first issue is that the running time has a $\poly(\log |\F|)$ term, where the degree of $\poly()$ term can be strictly larger than one. The other issue is that even in terms of the dependence on $d, m$, their algorithm is nearly-linear time only when $m$ is growing. So, for constant $m$, we cannot directly invoke the algorithm in \cite{BhargavaGGKU2022} for our applications. 

We get around both these issues using some simple ideas. To address the issue of a constant number of variables, we artificially increase the number variables, while reducing the individual degree bound by applying an inverse-Kronecker map to the polynomial. Then, to deal with the issue of  dependence of the running time on the field size, we first do a lift to integers and a Chinese remaindering to reduce this problem to many such instances of MME over smaller finite fields. This is essentially the same as the reduction used by Kedlaya \& Umans in \cite{KedlayaU2011}. To keep the running time nearly-linear, we do the Chinese remaindering using the well known nearly-linear time algorithms. The details can be found in \cref{sec:optimization over finite fields} and \cref{sec:exact-MME-integers}. 

\subsubsection{Algorithm for approximate-MME}

Recall that for approximate-MME, we do not need to compute the value of the polynomial on the input points exactly, but only require the outputs to be within some error of the actual evaluations. For simplicity, let us assume that the input polynomial and the evaluation points are all rational numbers, and are given exactly. As alluded to earlier in this section, it seems difficult to simply clear the denominators (via an LCM) and reduce to the integer case since there are instances, like when the denominators are all distinct primes, where this process prohibitively blows up the size of the coefficients. However, working with approximations gives us the necessary wiggle room to make something close to this idea work. 

As the first step of the algorithm, we approximate all the field constants, the coefficients of the given polynomial as well as the coordinates of the input points by truncating their decimal representation to $k$ bits after decimal (for some sufficiently large $k$ to be chosen carefully). Rounding a real number $\alpha$ of absolute value at most $1$ like this gives us a rational number $\hat{\alpha}$ of the form $\sfrac{a}{2^k}$ for some integer $a$ with $|a| \leq 2^k$. Moreover, we have that $\abs{\alpha - \hat{\alpha}} < \sfrac{1}{2^k}$. We now solve MME on this instance obtained after rounding. The crucial advantage now is that since all the denominators in this rounded instance are $2^k$, their LCM is just $2^k$, and clearing the denominator no longer incurs a prohibitive increase in the bit complexity. We now invoke the algorithm for exact-MME for integer instances described in the earlier subsection. The details can be found in \cref{sec:approx mme}. 

\subsubsection{Algorithm for exact-MME over rationals}\label{sec: exact MME overview}
For our algorithm for exact-MME, we start by first invoking the algorithm for approximate-MME on the instance for a sufficiently good accuracy parameter $t$. The choice of $t$ depends upon the output bit complexity that is given to us as a part of the input. From the guarantees on the output of the approximate-MME algorithm, we know that the approximate-MME outputs rational numbers that are at most $\sfrac{1}{2^t}$ away from the true evaluations. If we can somehow recover the true evaluations from these approximations, we would be done! What we have here are instances of the following problem: our goal is to find a hidden rational number, denoted by $\sfrac{a}{b}$ (the true evaluation) and we have access to another rational number, denoted by $A/B$ (an approximation to the true evaluation), with the guarantee that $\abs{\sfrac{A}{B} - \sfrac{a}{b}} < \sfrac{1}{2^t}$ and $|A|, |B| < 2^{O(t)}$. Crucially, we also have a parameter $s$ (given to us as a part of the input) and a guarantee that $|a|, |b| < 2^s$. 

This is essentially an instance of rational number reconstruction, which is a well-studied and classical problem of interest in computational algebra and number theory. We rely on these results (essentially in a black-box manner), and in particular the notion and properties of continued fractions to solve this problem efficiently. We observe that our choice of the parameter $t$ (as a function of $s$)  implies that $\sfrac{a}{b}$ is a \emph{convergent} (a rational number obtained by a truncation of the continued fraction representation of $\sfrac{A}{B}$). This observation along with some of the properties of convergents lets us find $\sfrac{a}{b}$ in nearly-linear time given $\sfrac{A}{B}$. The details can be found in \cref{sec:exact mme rationals}.  

\section{Preliminaries}\label{sec:prelims}

\subsection*{Notation}
\begin{itemize}
    \item We will use boldface letters $\veca, \vecb$ etc. for finite-dimensional vectors. We will also use this to denote tuples of variables $\vecx = (x_1,\ldots, x_n)$ etc. Usually the dimension of the vectors would be clear from context. 
    \item For \emph{exponent vectors} $\vece = (e_1,\ldots, e_n) \in \Z_{\geq 0}^n$ and a vector $\vecx = (x_1,\ldots, x_n)$, we will use $\vecx^\vece$ to denote the monomial $x_1^{e_1} \cdots x_n^{e_n}$. 
    \item For a real number $\alpha$, we use $\round{\alpha}{}$ to denote the closest integer to $\alpha$. When $\alpha=a+\frac{1}{2}$ for some integer $a$, $\round{\alpha}{}$ is defined as $a$.
\end{itemize}

\subsection{Useful inequalities}

\begin{lemma}[Bounds on binomial series]\label{lem:bounds-binomial}
For $d \in \N$ and $\epsilon > 0$ with $|\epsilon| < \sfrac{1}{d^2}$, we have
\[
    1 + d\epsilon \leq (1 + \epsilon)^d \leq 1 + d\epsilon + d^2\epsilon^2.
\]
\end{lemma}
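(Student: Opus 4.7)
The lower bound $1 + d\epsilon \leq (1+\epsilon)^d$ is just Bernoulli's inequality, which holds for any real $\epsilon > -1$ and integer $d \geq 1$; the hypothesis $|\epsilon| < 1/d^2$ plays no role here, so I would dispose of this direction in a single line citing Bernoulli.

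The content is in the upper bound. My plan is to expand via the binomial theorem, separate out the first two terms, and bound the tail:
\[
(1+\epsilon)^d \;=\; 1 + d\epsilon + \sum_{k=2}^{d} \binom{d}{k}\epsilon^k,
\]
so the task reduces to showing $\sum_{k=2}^{d} \binom{d}{k}\epsilon^k \leq d^2 \epsilon^2$. Using the standard bound $\binom{d}{k} \leq d^k/k!$, the tail is at most $\sum_{k=2}^{d} (d\epsilon)^k / k!$. Writing $u := d\epsilon$, the hypothesis $\epsilon < 1/d^2$ gives $0 < u < 1/d \leq 1$, so I can factor out $u^2$ and bound the remaining factor by a constant:
\[
\sum_{k=2}^{d} \frac{u^k}{k!} \;=\; u^2 \sum_{k=2}^{d} \frac{u^{k-2}}{k!} \;\leq\; u^2 \sum_{k=2}^{\infty} \frac{1}{k!} \;=\; (e-2)\, u^2 \;<\; u^2 \;=\; d^2\epsilon^2.
\]
Substituting back yields $(1+\epsilon)^d \leq 1 + d\epsilon + d^2\epsilon^2$, as required.

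There is no real obstacle: the only place the hypothesis $|\epsilon| < 1/d^2$ enters is to ensure $u = d\epsilon \leq 1$, which is what justifies replacing $u^{k-2}$ by $1$ in each tail term and thereby collapsing the whole tail into a uniform constant multiple of $u^2$. The slack factor $(e-2) \approx 0.718 < 1$ is comfortable, so there is no need to sharpen the estimate; in particular, a weaker hypothesis like $d\epsilon \leq 1$ would already suffice for the argument, but stating it with $|\epsilon| < 1/d^2$ is presumably what the downstream applications in the paper need. I would also briefly note the edge case $d=1$, where both inequalities become equalities up to the nonnegative $d^2\epsilon^2$ slack, so nothing special is needed.
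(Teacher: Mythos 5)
Your proof is correct, and it takes a genuinely different route from the paper on both halves. For the lower bound, you invoke Bernoulli's inequality directly (and indeed, since the statement also assumes $\epsilon > 0$, the lower bound is immediate from nonnegativity of the binomial terms); the paper instead derives it from the binomial expansion, absorbing the order-$\geq 3$ tail into the $\binom{d}{2}\epsilon^2$ term, so it needs the tail estimate even for this direction. For the upper bound, you use $\binom{d}{k} \leq d^k/k!$ and compare the tail to the exponential series $\sum_{k\geq 2} 1/k! = e-2 < 1$, whereas the paper uses $\binom{d}{i} \leq \binom{d}{2}\, d^{i-2}$ for $i \geq 3$ and compares to a geometric series $\sum_{i\geq 1} (d\epsilon)^i$ with ratio $d\epsilon < 1/d$. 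Both hinge on the hypothesis only through the inequality $d\epsilon < 1$ (or $<1/d$), and both deliver slack comfortably below $d^2\epsilon^2$. Your exponential-series comparison is arguably a touch slicker and handles $d=1,2$ uniformly (the $k\geq 2$ sum is empty or a single term), while the paper's geometric bound needs $d \geq 3$ and dispatches $d=1,2$ separately; on the other hand, the paper's single binomial decomposition yields both inequalities at once. Either way, the result and its strength are the same.
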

\begin{proof}
The inequalities are clearly true for $d = 1, 2$, so for the rest of this discussion, we assume without loss of generality that $d \geq 3$. 

For any $i \geq 3$, we have $\binom{d}{i} \leq \binom{d}{2} \cdot d^{i-2}$. Hence,
\begin{align*}
\abs{\sum_{i=3}^d \binom{d}{i} \epsilon^i} \leq \sum_{i=3}^d \binom{d}{i} \abs{\epsilon^i} \leq \binom{d}{2} \cdot \epsilon^2 \cdot \sum_{i=1}^{d-2} \abs{d\epsilon}^i < \binom{d}{2} \cdot \epsilon^2
\end{align*}
where the last inequality uses $\epsilon < \sfrac{1}{d^2}$. Therefore,
\begin{align*}
(1 + \epsilon)^d & = 1 + d\epsilon + \binom{d}{2}\epsilon^2 + \sum_{i=3}^d \binom{d}{i} \epsilon^i \geq 1 + d\epsilon
\end{align*}
and 
\begin{align*}
    (1 + \epsilon)^d & = 1 + d\epsilon + \binom{d}{2}\epsilon^2 + \sum_{i=3}^d \binom{d}{i} \epsilon^i\\
                     & \leq 1 + d\epsilon + \binom{d}{2}\epsilon^2 + \binom{d}{2}\epsilon^2\\
                     & \leq 1 + d\epsilon + d^2\epsilon^2.\qedhere
    \end{align*}
\end{proof}

\subsection{Kronecker map for base-$d$}

The Kronecker map is a commonly used tool used to perform a variable reduction without changing the underlying sparsity. This map is defined formally as follows.

\begin{definition}[Kronecker map for base-$d$]\label{defn:kronecker}
    The \emph{$c$-variate Kronecker map for base-$d$}, denoted by $\Phi_{d,m; c}$ maps $cm$-variate polynomials into a $c$-variate polynomials via
    \[
    \Phi_{d,m;c}(f(x_{1,1},\ldots, x_{1,m}, \ldots, x_{c,1}, \ldots, x_{c,m})) = f\inparen{1, y_1^d, y_1^{d^2}, \ldots, y^{d^{m-1}}, \ldots, 1, y_c^d, y_c^{d^2}, \ldots, y_c^{d^{m-1}}}.
    \]
    If $f$ is a polynomial of individual degree less than $d$, then the monomial $\vecx_1^{\vece_1} \cdots \vecx_c^{\vece_c}$ is mapped to the monomial $y_1^{e_1}\cdots y_c^{e_c}$ where $\vece_i$ is the base-$d$ representation of $e_i$. 

    In the same spirit, we define the \emph{inverse Kronecker}, denoted by $\Phi_{d,m;c}^{-1}$, that maps a $c$-variate  polynomial of individual degree less than $d^m$ into a $cm$-variate polynomial of individual degree less than $d$, given via extending the following map linearly over monomials:
    \[
        \Phi_{d,m;c}^{-1}(y_1^{e_1} \cdots y_c^{e_c}) = \vecx_1^{\vece_1} \cdots \vecx_m^{\vece_m}
    \]
    where $\vecx_i = (x_{i,1}, \ldots, x_{i,m})$ and $\vece_i \in \set{0,\ldots, d-1}^m$ is the base-$d$ representation of $e_i < d^m$. \\

    Associated with the inverse Kronecker map, we also define $\psi_{d,m;c}:\F^c \rightarrow \F^{cm}$ that acts on points, given by 
    \[\psi_{d,m;c}:(a_1,\ldots, a_c) \mapsto (1, a_1^d, \ldots, a_1^{d^{m-1}}, \ldots, 1, a_c^d, \ldots, a_c^{d^{m-1}}).\] 
\end{definition}

The inverse Kronecker map is defined so that we have the following observation.

\begin{observationwp}[Kronecker maps and evaluations]\label{obs:kronecker-evaluations}
If $f(x_1,\ldots, x_c)$ is a polynomial of individual degree less than $d^m$, then for any $\veca\in \F^c$, we have that $\Phi_{d,m;c}^{-1}(f)(\psi_{d,m;c}(\veca)) = f(\veca)$. 
\end{observationwp}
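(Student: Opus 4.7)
The plan is to reduce the claim to checking a single monomial and then verify it by directly unfolding the definitions, since both sides of the equation depend linearly on $f$.

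First I would observe that $\Phi_{d,m;c}^{-1}$ is defined to extend linearly over monomials, and that for any fixed point the evaluation map $g \mapsto g(\psi_{d,m;c}(\veca))$ is also linear in $g$. Composing these, the map $f \mapsto \Phi_{d,m;c}^{-1}(f)(\psi_{d,m;c}(\veca))$ is linear in $f$, and so is $f \mapsto f(\veca)$. It therefore suffices to check the identity on an arbitrary monomial $y_1^{e_1}\cdots y_c^{e_c}$ with $0 \leq e_i < d^m$.

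Next I would fix such a monomial and write each exponent in base $d$ as $e_i = \sum_{j=1}^{m} e_{i,j}\, d^{j-1}$ with $e_{i,j} \in \{0, \ldots, d-1\}$, so that $\vece_i = (e_{i,1}, \ldots, e_{i,m})$ is the base-$d$ representation used in the definition of $\Phi_{d,m;c}^{-1}$. Applying $\Phi_{d,m;c}^{-1}$ then turns the monomial into $\prod_{i=1}^{c}\prod_{j=1}^{m} x_{i,j}^{e_{i,j}}$. Now I would evaluate this at $\psi_{d,m;c}(\veca)$, which by definition substitutes $x_{i,j} = a_i^{d^{j-1}}$ for every $i \in [c]$ and $j \in [m]$. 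A direct computation gives
\[
\prod_{i=1}^{c}\prod_{j=1}^{m} \bigl(a_i^{d^{j-1}}\bigr)^{e_{i,j}} \;=\; \prod_{i=1}^{c} a_i^{\sum_{j=1}^{m} e_{i,j} d^{j-1}} \;=\; \prod_{i=1}^{c} a_i^{e_i},
\]
which is exactly the value of the monomial $y_1^{e_1}\cdots y_c^{e_c}$ at $\veca$.

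There is no real obstacle here beyond keeping the indexing straight: the only thing to verify is that the digits produced by the base-$d$ expansion line up correctly with the exponents $d^{j-1}$ appearing in $\psi_{d,m;c}$, which is exactly what the two definitions were set up to ensure. Linearity then promotes the monomial-level identity to the full claim.
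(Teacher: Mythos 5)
The paper states this observation without proof (it is declared with the \texttt{observationwp} environment, which uses a no-proof style), treating it as an immediate consequence of \cref{defn:kronecker}. Your argument is exactly the unfolding of definitions that the paper has in mind: reduce to a single monomial by linearity of both $\Phi_{d,m;c}^{-1}$ and evaluation, then check that the base-$d$ digits in $\Phi_{d,m;c}^{-1}$ recombine with the exponents $d^{j-1}$ coming from $\psi_{d,m;c}$ to recover $a_i^{e_i}$. This is correct and complete.

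One small remark worth being aware of: as literally written, \cref{defn:kronecker} gives $\psi_{d,m;c}(a_1,\ldots,a_c) = (1, a_1^d, \ldots, a_1^{d^{m-1}}, \ldots)$, i.e.\ the first coordinate in each block is the constant $1$ rather than $a_i = a_i^{d^0}$ (and likewise the displayed formula for $\Phi_{d,m;c}$ has a $1$ where one expects $y_i$). Your substitution $x_{i,j} \mapsto a_i^{d^{j-1}}$ for all $j \in [m]$ is the intended reading — it is the only one under which the stated identity, and the companion claim that $\vecx_1^{\vece_1}\cdots\vecx_c^{\vece_c} \mapsto y_1^{e_1}\cdots y_c^{e_c}$, actually hold — so you are silently correcting what appears to be a typo in the paper's definition. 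It would be worth flagging this explicitly so a reader does not try to reconcile your computation with the letter of the definition.
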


The above observation would be useful to \emph{trade-off} degree with the number of variables as needed in some of our proofs. 

\subsection{Computing all primes less than a given number}

The classical Prime Number Theorem \cite{Hadamard1896,Poussin1897} asserts that there are $\Theta(N/\log N)$ primes numbers less than $N$, asymptotically. We can compute all prime numbers less than $N$ in deterministic $\tilde{O}(N)$ time. 

\medskip

\begin{algorithm}[H]
    \caption{\texttt{PrimeSieve}}
    \label{alg:prime-sieve}
    \SetKwInOut{Input}{Input}\SetKwInOut{Output}{Output}

    \Input{An integer $N > 1$}
    \Output{All prime numbers less than $N$.}
    \BlankLine

    Initialise an array $S$ indexed with $2,3,\ldots, N$ with all values set to \textsc{True}. 

    \For{$i \leftarrow 2$ to $\sqrt{N}$}{
        \If{$S[i]$ is \textsc{True}}{

            Set $j \gets 2i$

            \While{$j \leq N$}{
                Set $S[j]$ to \textsc{False}.

                $j \gets j + i$.
            }
        }
    }
    \Return{$\setdef{i}{\text{$S[i]$ is \textsc{True}}}$.}
\end{algorithm}

\medskip

\begin{lemmawp}[Computing primes less than a given number] \label{lem:finding-prime-list}
    There is a deterministic algorithm (\cref{alg:prime-sieve}) that computes the set of all primes less $N$ in deterministic time $\tilde{O}(N)$. 
\end{lemmawp}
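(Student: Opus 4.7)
The plan is to verify both correctness and the claimed running time of the classical Sieve of Eratosthenes presented in \cref{alg:prime-sieve}. For correctness, I would prove the invariant that at the end of the outer loop, $S[n]$ is \textsc{True} if and only if $n$ is prime (for $2 \leq n \leq N$). The key number-theoretic fact is that every composite $n \leq N$ admits a prime factor $p \leq \sqrt{n} \leq \sqrt{N}$. So it suffices to show:
\begin{itemize}
    \item If $n$ is composite, then $S[n]$ is set to \textsc{False} at some point: let $p$ be the smallest prime factor of $n$; then $p \leq \sqrt{N}$, and one can show by induction on the outer loop index that $S[p]$ remains \textsc{True} when $i$ reaches $p$ (since any previous index $i' < p$ that could have toggled $S[p]$ would need $i' \mid p$, impossible as $p$ is prime). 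Hence the inner loop starting at $j = 2i = 2p$ will eventually hit $j = n$ and set $S[n]$ to \textsc{False}.
    \item If $n$ is prime, then $S[n]$ is never set to \textsc{False}, because any write $S[j] \gets \textsc{False}$ happens only for $j = ki$ with $k \geq 2$ and $i \geq 2$, which forces $j$ to be composite.
\end{itemize}

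For the running time, I would count the total number of array operations. The outer loop runs for $i = 2, \ldots, \lfloor\sqrt{N}\rfloor$, and the inner loop for a given $i$ contributes $\lfloor N/i \rfloor$ writes whenever it executes. Bounding crudely,
\[
\sum_{i=2}^{\lfloor \sqrt{N} \rfloor} \frac{N}{i} \;\leq\; N \cdot H_{\sqrt{N}} \;=\; O(N \log N),
\]
where $H_k = \sum_{j=1}^{k} 1/j$ is the $k$-th harmonic number. (A sharper $O(N \log \log N)$ bound follows from Mertens' theorem by restricting the outer sum to primes, but this is not needed.) Each of these operations is a memory access or comparison/increment on integers of magnitude at most $N$, hence has bit complexity $O(\log N)$. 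The initialization of the array and the final collection of indices $i$ with $S[i] = \textsc{True}$ each contribute $O(N \log N)$ additional bit operations. In total, the algorithm runs in $O(N \log^2 N) = \tilde{O}(N)$ bit operations.

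I do not anticipate a genuine obstacle here: the algorithm and its analysis are textbook, and the only small care needed is in the correctness invariant (namely, ensuring that truncating the outer loop at $\sqrt{N}$ is sufficient, which follows from the observation that every composite $\leq N$ has a prime divisor $\leq \sqrt{N}$). The writeup would be a short two-paragraph proof along the above lines.
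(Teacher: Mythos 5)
The paper states this lemma in a \texttt{lemmawp} environment (``lemma without proof''), treating the Sieve of Eratosthenes and its $\tilde{O}(N)$ running time as a standard fact with no argument supplied. Your proposal supplies the canonical textbook proof: the correctness invariant (a composite $n\le N$ has a prime factor $p\le\sqrt N$, so truncating the outer loop at $\sqrt N$ suffices, while writes only hit indices $j=ki$ with $k,i\ge 2$ and hence never a prime), and the crude harmonic-sum bound $\sum_{i\le\sqrt N} N/i = O(N\log N)$ on array writes, times $O(\log N)$ bit-cost per operation, giving $O(N\log^2 N)=\tilde O(N)$. This is correct, and the observation that one could sharpen to $O(N\log\log N)$ via Mertens but need not is apt. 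Nothing is missing; the argument is complete and matches what the paper implicitly relies on.
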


\subsection{Fast Chinese Remaindering}
We also rely on the following two theorems concerning fast algorithms for questions related to the Chinese Remainder Theorem (CRT). We refer the reader to the book by von zur Gathen and Gerhard \cite{GathenG-MCA} for proofs. 

\begin{lemma}[Fast-CRT: moduli computation]\label{lem:CRT-moduli-computation}
    There is an algorithm that, when given as input coprime positive integers $p_1,\ldots, p_r$ and a positive integer $N$ with $N < \prod p_i < 2^c$, computes the remainders $a_i = N \bmod p_i$ for $i = 1,\ldots, r$ in deterministic $\tilde{O}(c)$ time. 
\end{lemma}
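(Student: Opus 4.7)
The plan is to implement the standard \emph{remainder tree} (a.k.a.\ subproduct tree) construction, relying on fast integer multiplication (e.g.\ Harvey--van der Hoeven or Schönhage--Strassen), which multiplies two $n$-bit integers in $\tilde{O}(n)$ time, and the consequent fast division-with-remainder in the same complexity. Since each $p_i \geq 2$ and $\prod_i p_i < 2^c$, we have $r \leq c$, so $\log r = O(\log c)$.

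First I would build a (roughly balanced) binary tree $T$ whose leaves, in order, are $p_1,\ldots,p_r$. For each internal node $v$ I would store the product $P_v := \prod_{i \in \mathrm{leaves}(v)} p_i$, computed bottom-up by pairwise multiplication; the root stores $P = \prod_i p_i$ and satisfies $N < P < 2^c$. Next, I would perform the remainder pass top-down: starting from $M_{\mathrm{root}} := N$, at each internal node $v$ with children $\ell,r$ and stored remainder $M_v$, set $M_\ell := M_v \bmod P_\ell$ and $M_r := M_v \bmod P_r$ via fast division-with-remainder. A trivial induction (using that $P_v = P_\ell \cdot P_r$ and $M_v \equiv N \pmod{P_v}$) gives $M_v = N \bmod P_v$, so at the leaves we read off $a_i = N \bmod p_i$.

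For the running time, the key observation is that at any fixed level of $T$, the sum of bit-lengths of the $P_v$'s at that level is at most $\log_2 P \leq c$. Since multiplying two numbers of bit-lengths $n_1, n_2$ costs $\tilde{O}(n_1+n_2)$ and division-with-remainder of an $n$-bit dividend by an $n'$-bit divisor ($n \geq n'$) costs $\tilde{O}(n)$, the total work at each level (both in the bottom-up product pass and in the top-down remainder pass) is $\tilde{O}(c)$. Summed over the $O(\log c)$ levels, this yields a total of $\tilde{O}(c)$, as claimed.

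The main subtlety I would expect is keeping the tree balanced \emph{in terms of bit-length} when the $p_i$ have very unequal sizes, so that each $P_v$ in the top-down pass has bit-length at most the sum of its two children's bit-lengths and the work invariant at each level remains $O(c)$. A clean fix is to build the tree greedily, repeatedly merging the two smallest current products (a Huffman-style merge), which keeps the cumulative bit-length at each level bounded by $c$ while ensuring that a node with product of bit-length $b$ lies at depth $O(\log(c/b))$. Verifying these invariants carefully is the main book-keeping needed to obtain the $\tilde{O}(c)$ bound; a fully-worked treatment is in von zur Gathen and Gerhard \cite{GathenG-MCA}.
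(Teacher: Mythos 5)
Your remainder-tree argument is correct and is precisely the standard proof; the paper itself supplies no proof, instead citing \cite[Theorem 10.24]{GathenG-MCA}, which establishes the lemma by exactly this subproduct-tree construction. (One small remark: the Huffman-style rebalancing you flag is not actually needed for the stated $\tilde{O}(c)$ bound, since even with a balanced-by-count tree the per-level work is $\tilde{O}(c)$ and there are only $O(\log r) = O(\log c)$ levels, a factor the $\tilde{O}$ absorbs.)
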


For proof of the above lemma see \cite[Theorem 10.24]{GathenG-MCA}.

\begin{lemma}[Fast-CRT: reconstruction]\label{lem:CRT-reconstruction}
    There is an algorithm that, when given as input coprime positive integers $p_1,\ldots, p_r$ and $a_1,\ldots, a_r$ such that $0 \leq a_i < p_i$ outputs the unique integer $0 \leq N < \prod p_i$ such that $N = a_i \bmod{p_i}$ for $i=1,\ldots, r$ in deterministic $\tilde{O}(c)$ time where $\prod p_i < 2^c$. 
\end{lemma}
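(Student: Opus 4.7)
The plan is a classical divide-and-conquer based on a subproduct tree. I would first organize the moduli $p_1,\ldots,p_r$ as the leaves of a balanced binary tree $T$ of height $O(\log r)$, storing at each internal node $v$ the product $P_v$ of the moduli in its subtree; note $P_{\text{root}} = M := \prod_i p_i$. Building this tree amounts to $O(\log r)$ levels of integer multiplications where, at each level, the total bit-length of all products is at most $c$. Using a fast integer multiplication subroutine (running in time $\tilde{O}(n)$ on $n$-bit inputs), each level costs $\tilde{O}(c)$, for a grand total of $\tilde{O}(c)$.

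The reconstruction formula I will use is $N = \left(\sum_{i=1}^r d_i \cdot (M/p_i)\right) \bmod M$, where $d_i = a_i \cdot \big((M/p_i)^{-1} \bmod p_i\big) \bmod p_i$. To compute the values $M/p_i \bmod p_i$ for all $i$, I walk down $T$ maintaining at each node $v$ a value $u_v$ with $u_v \equiv M/P_v \pmod{P_v}$, starting from $u_{\text{root}} = 1$; at an internal node $v$ with children $L, R$ I set $u_L = u_v \cdot P_R \bmod P_L$ and $u_R = u_v \cdot P_L \bmod P_R$ (a short calculation using $P_v = P_L P_R$ shows this preserves the invariant). At each leaf $i$ one then recovers $u_i = M/p_i \bmod p_i$. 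Each level of the descent has total bit-cost $\tilde{O}(c)$ and the divisions with remainder are nearly-linear-time, so the descent costs $\tilde{O}(c)$ overall. I then invert each $u_i$ modulo $p_i$ using the fast extended Euclidean algorithm and multiply by $a_i$ modulo $p_i$ to obtain $d_i$; since $\sum_i \log p_i \le c$, this step also costs $\tilde{O}(c)$.

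Finally I combine the $d_i$'s back up $T$. At each internal node $v$ with children $L, R$, given partial values $S_L, S_R$ satisfying $S_L \equiv \sum_{i \in L} d_i \cdot (P_L/p_i) \pmod{P_L}$ (and similarly for $R$), I compute $S_v = (S_L \cdot P_R + S_R \cdot P_L) \bmod P_v$. A straightforward induction shows that at the root this yields $\left(\sum_i d_i \cdot (M/p_i)\right) \bmod M = N$, the unique integer in $[0, M)$ with the desired residues. Each level of this upward pass once again fits in a total bit-budget of $\tilde{O}(c)$, giving $\tilde{O}(c)$ overall.

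The main obstacle, and the step requiring the most care, is the down-the-tree computation of $M/p_i \bmod p_i$. A naive approach of reducing $M$ separately modulo each $p_i$ would cost $\tilde{O}(rc)$, which is too slow. The key insight is the invariant $u_v \equiv M/P_v \pmod{P_v}$: it (i) preserves the level-wise bit-budget of $c$ because the sizes of the $P_v$'s at any fixed level sum to $c$, and (ii) specializes correctly at each leaf to the desired $M/p_i \bmod p_i$. All other pieces — the subproduct-tree build, the modular inversions at the leaves, and the bottom-up combine — reduce to routine level-by-level arguments once fast integer arithmetic is in hand.
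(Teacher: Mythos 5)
The paper does not prove this lemma itself; it simply cites von zur Gathen and Gerhard, Theorem 10.25. Your proof is a correct, self-contained rendering of the standard subproduct-tree argument that the citation points to, so in substance you are reproducing the textbook proof rather than taking a different route.

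A few small remarks. Your down-the-tree invariant $u_v \equiv M/P_v \pmod{P_v}$ with the update $u_L = u_v \cdot P_R \bmod P_L$ is a clean way to organize the computation of the Lagrange-type data $M/p_i \bmod p_i$; some textbook treatments instead reduce $M$ modulo $p_i^2$ down a remainder tree and then divide by $p_i$, but the two are interchangeable and have the same cost profile. Your bottom-up invariant $S_v \equiv \sum_{i \in v} d_i (P_v/p_i) \pmod{P_v}$ is also correct: writing $S_L = \sum_{i\in L} d_i (P_L/p_i) + kP_L$ shows $P_R S_L \equiv P_R \sum_{i \in L} d_i (P_L/p_i) \pmod{P_v}$ since $P_L P_R = P_v$, and the analogous statement for $R$ gives the step. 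One cosmetic point worth making explicit when you invoke $O(\log r)$ tree levels is that $r \le c$ (after discarding trivial moduli $p_i = 1$, each remaining $p_i \ge 2$, so $2^r \le \prod p_i < 2^c$), which is what lets you absorb $\log r$ into the $\tilde{O}(\cdot)$. With that noted, the level-by-level bit-budget argument and the leaf-level inversions via fast extended Euclid give the claimed $\tilde{O}(c)$ bound, and the residue check $\sum_j d_j (M/p_j) \equiv d_i (M/p_i) \equiv a_i \pmod{p_i}$ confirms correctness.
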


For proof of the above lemma see \cite[Theorem 10.25]{GathenG-MCA}

\subsection{Input model for arbitrary precision reals}

Throughout this section, we will assume that all real numbers that are ``inputs'' (namely the coefficients of the polynomial and the coordinates of the evaluation points) are in the range $(-1,1)$ and are provided via \emph{approximation oracles} with the following guarantees:

\begin{definition}[Approximation oracle]\label{def:approx oracles}
The \emph{approximation oracle} for $\alpha \in (-1,1)$, can provide the ``sign'' $\alpha$ in $O(1)$ time, and on input $k$ returns an integer $b_k \in [-2^k, 2^k]$ satisfying
\[
    \abs{\alpha - \sfrac{b_k}{2^k}} < \sfrac{1}{2^k}.
\]
We will use $\round{\alpha}{k}$ to refer to the fraction $\sfrac{b_k}{2^k}$ obtained from the approximation oracle. 

The running time of the approximation oracle is the time taken to output $b_k$. We will say that the approximation oracle is \emph{efficient} if the running time is $\tilde{O}(k)$. 
\end{definition}

Such efficient approximation oracles can be obtained for any ``natural'' real number from any sufficiently convergent series. For algebriac reals of the form $\sqrt{2}$ etc., the standard Taylor series is sufficient. Even for ``natural'' transcendental numbers, we may have such approximation oracles:
\begin{align*}
e & = 1 + \frac{1}{1!} + \frac{1}{2!} + \cdots,\\
\pi & = 4 \cdot \tan^{-1}(1)\\
    & = 4 \cdot \inparen{\tan^{-1}(\sfrac{1}{2}) + \tan^{-1}(\sfrac{1}{3})}\\
    & = 4 \cdot \Bigl(\sfrac{1}{2} - \frac{\sfrac{1}{2^3}}{3} + \frac{\sfrac{1}{2^5}}{5} - \cdots \quad\quad+\quad\quad \sfrac{1}{3} - \frac{\sfrac{1}{3^3}}{3} + \frac{\sfrac{1}{3^5}}{5} - \cdots \Bigr).
\end{align*}

Any explicit series with $\tilde{O}(k)$ terms of the series having an error less than $\sfrac{1}{2^k}$ would qualify as an efficient approximation oracle for the purposes of the approximate-MME algorithm over reals. 

\SGnote{PH says 'reorder'. Not sure what he means}

\begin{lemma}[Repeated exponentiation for approximation oracles]\label{lem:approximation-oracle-exponentiation}
    Given an approximation oracle $A$ for a real number $\alpha \in (-1,1)$ with running time $T(k)$, and any positive integer $D$, we can build an approximation oracle $A^D$ for $\alpha^D$ with running time $T(k + O(\log D)) + \tilde{O}(k \log D)$. 
\end{lemma}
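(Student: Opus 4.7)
The plan is to query the oracle $A$ at a boosted precision $k' = k + \Theta(\log D)$ to obtain $\hat\alpha = \round{\alpha}{k'}$, and then to approximate $\hat\alpha^D$ via fast exponentiation while truncating all intermediate values to $k'' = k + \Theta(\log D)$ bits after the binary point. Rounding the final value to $k$ bits produces the output of the claimed oracle $A^D$ at precision $k$.

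The error analysis will split naturally into two parts. For the error introduced by replacing $\alpha$ by $\hat\alpha$, I would use the factorisation
\[
\hat\alpha^D - \alpha^D = (\hat\alpha - \alpha) \sum_{j=0}^{D-1} \hat\alpha^{D-1-j}\alpha^j,
\]
combined with the bounds $|\alpha| < 1$ and $|\hat\alpha| \leq 1 + 1/2^{k'}$, and \cref{lem:bounds-binomial} applied with $\epsilon = 1/2^{k'}$ and $k' \geq 2\log D + O(1)$ (so that $(1 + 1/2^{k'})^D \leq 2$), to conclude $|\hat\alpha^D - \alpha^D| \leq 2D/2^{k'}$. Choosing $k' = k + \log D + O(1)$ drives this below $1/2^{k+1}$.

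For the arithmetic error accumulated during truncated squaring, I would write $D = \sum_\ell 2^{i_\ell}$ in binary, compute $c_{i+1} := \round{c_i^2}{k''}$ starting from $c_0 := \hat\alpha$ for $i < \lfloor \log D \rfloor$, and then multiply together the relevant $c_{i_\ell}$'s (again truncating to $k''$ bits after each product). Setting $\delta_i := |c_i - \hat\alpha^{2^i}|$, the triangle inequality gives $\delta_{i+1} \leq \delta_i \cdot (|c_i| + |\hat\alpha^{2^i}|) + 1/2^{k''}$; as long as $\delta_i \leq 1$, the bounds $|\hat\alpha^{2^i}| \leq 2$ and $|c_i| \leq 3$ yield the recursion $\delta_{i+1} \leq 5\delta_i + 1/2^{k''}$. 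This gives $\delta_{\lfloor \log D \rfloor} = O(D^{\log_2 5}/2^{k''})$, and an analogous recursion controls the error from the final product over the bits of $D$. Choosing $k'' = k + (\log_2 5)\log D + O(1)$ keeps the total arithmetic error below $1/2^{k+1}$.

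Summing the two contributions gives a final error strictly below $1/2^k$, as required for a valid approximation oracle at precision $k$. Time-wise, the oracle query takes $T(k') = T(k + O(\log D))$ time, and fast exponentiation performs $O(\log D)$ multiplications of numbers of bit-length $O(k'')$, each followed by truncation, for a total of $\tilde{O}(k'' \log D) = \tilde{O}(k \log D)$. This matches the claimed running time $T(k + O(\log D)) + \tilde{O}(k \log D)$. The most delicate step will be controlling the geometric blow-up in the squaring recursion; once the intermediate magnitudes are shown to remain bounded (using \cref{lem:bounds-binomial} to control $(1 + 1/2^{k'})^D$), the rest is routine bookkeeping.
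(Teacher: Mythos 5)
Your approach is sound and arrives at the right running time, but it is genuinely different from the paper's. You truncate the oracle's output once to $\hat\alpha$, then run an \emph{iterative} square-and-multiply on $\hat\alpha$ at a single fixed precision $k''$ and track the accumulated error via a multiplicative recursion $\delta_{i+1}\leq 5\delta_i + 2^{-k''}$, compensating by inflating $k''$ to $k + O(\log D)$. The paper instead uses the \emph{recursive} form of square-and-multiply: to approximate $\alpha^D$ within $2^{-k}$, recursively approximate $\alpha^{D/2}$ (or $\alpha^{(D-1)/2}$ and $\alpha$) within $2^{-(k+3)}$ (resp.\ $2^{-(k+4)}$), square, and round. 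Because $|\alpha|<1$ and the recursive output is clamped to $[-2^t,2^t]/2^t$, intermediate magnitudes never exceed $1$, so the error exactly triples per level rather than quintupling, and the precision only needs to rise by a fixed 3--4 bits per recursion depth. The invariant ``error $< 2^{-(\text{precision at this depth})}$'' replaces your explicit $\delta_i$ bookkeeping and makes the analysis shorter; both schemes give $T(k+O(\log D)) + \tilde{O}(k\log D)$. What you gain from your route is that it is more obviously an implementation (fixed working precision, explicit binary expansion of $D$); what you lose is that the constants are looser and the bound relies on auxiliary facts (e.g.\ $(1+2^{-k'})^D\leq 2$, $|c_i|\leq 3$) that the paper's version avoids by construction.

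Two loose ends worth tightening. First, you invoke \cref{lem:bounds-binomial} with $\epsilon=2^{-k'}$, which requires $k' > 2\log D$, but then set $k'=k+\log D+O(1)$; when $k < \log D$ these conflict, so you should take $k' = k + 2\log D + O(1)$ throughout (still $k + O(\log D)$, so the running-time claim survives). Second, your error budget as written gives roughly $2^{-(k+1)}$ from replacing $\alpha$ by $\hat\alpha$, $2^{-(k+1)}$ from the truncated exponentiation, and a further $2^{-(k+1)}$ from the final rounding to $k$ bits, which sums to more than $2^{-k}$; the three contributions need to be targeted at, say, $2^{-(k+2)}$ each so the total is strictly below $2^{-k}$. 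Both are purely a matter of constants in $k'$ and $k''$ and do not affect the asymptotics.
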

\begin{proof}
    On an input $k$, we wish to find an integer $r_k \in [-2^k, 2^k]$ such that $\abs{\alpha^D - \sfrac{r_k}{2^k}} <\sfrac{1}{2^k}$. 

    Let us first consider the case when $D$ is even. Let $t = k + 3$ and suppose we recursively compute an integer $a_t \in [-2^t, 2^t]$ such that $\abs{\alpha^{D/2} - \sfrac{a_t}{2^t}} < \sfrac{1}{2^t}$. Let $\delta = \sfrac{a_t}{2^t} - \alpha^{D/2}$.
    \begin{align*}
        \abs{\alpha^D - \sfrac{a_t^2}{2^{2t}}} & = \abs{(\alpha^{D/2})^2 - (\alpha^{D/2} + \delta)^2} < 4\cdot \sfrac{1}{2^t} \leq \sfrac{1}{2^{k+1}}
    \end{align*}
    Thus, if $R_k = r_k \cdot 2^{2t - k}$ is the multiple of $2^{2t - k}$ that is closest to $a_t^2$, then
    \begin{align*}
        \abs{\alpha^D - \sfrac{r_k}{2^{k}}} & \leq \abs{\alpha^D - \sfrac{a_t^2}{2^{2t}}} + \abs{\sfrac{a_t^2}{2^{2t}} - \sfrac{r_k 2^{2t - k}}{2^{2t}}}\\
        & < \sfrac{1}{2^{k+1}} + \sfrac{2^{2t - k - 1}}{2^{2t}} \leq \sfrac{1}{2^k}.
    \end{align*}

    If $D$ is odd, then let $t = k + 4$. We use the approximation oracle $A$ to obtain an integer $b_t \in [-2^t, 2^t]$ such that $\abs{\alpha - \sfrac{b_t}{2^t}} < \sfrac{1}{2^t}$, and recursively compute an integer $a_t \in [-2^t, 2^t]$ such that $\abs{\alpha^{\sfrac{(D-1)}{2}} - \sfrac{a_t}{2^t}} < \sfrac{1}{2^t}$. Then,
    \begin{align*}
        \abs{\alpha^D - \sfrac{a_t^2 b_t}{2^{3t}}} & \leq \abs{\alpha}\abs{\inparen{\alpha^{\sfrac{(D-1)}{2}}}^2 - \sfrac{a_t^2}{2^{2t}}} + \abs{\sfrac{a_t^2}{2^{2t}}} \abs{\alpha - \sfrac{b_t}{2^t}}\\  
         & < 4 \cdot\sfrac{1}{2^t} + \sfrac{1}{2^t} \leq \sfrac{1}{2^{k+1}}.
    \end{align*}
    Similarly, if $R_t = r_t \cdot 2^{3t - k}$ is the multiple of $2^{3t - k}$ that is closest to $a_t^2 b_t$, then 
    \[
        \abs{\alpha^D - \sfrac{r_t}{2^k}} < \sfrac{1}{2^k}.
    \]

    \bigskip

    If $\mathcal{T}(k,D)$ is the running time of this algorithm (namely \cref{alg:approximation-oracle-powering}) to compute $r_k \in [-2^k, 2^k]$ such that $\abs{\alpha^D - \sfrac{r_k}{2^k}} \leq \sfrac{1}{2^k}$, then we have
    \begin{align*}
        \mathcal{T}(k,D)  & \leq \mathcal{T}(k + 4, D/2) + \tilde{O}(k)\\
                & \leq \mathcal{T}(k + O(\log D), 1) + \tilde{O}(k \log D)\\
                & = T(k + O(\log D)) + \tilde{O}(k \log D).\qedhere
    \end{align*}
\end{proof}

\begin{algorithm}
    \caption{\texttt{ApproximationOracle-Powering}}
    \label{alg:approximation-oracle-powering}
    \SetKwInOut{Input}{Input}\SetKwInOut{Output}{Output}

    \Input{An approximation oracle $A$ for a real number $\alpha$, an integer $D > 0$, and an integer $k > 0$.}
    \Output{An integer $r_k \in [-2^k,2^k]$ such that $\abs{\alpha^D - \sfrac{r_k}{2^k}} < \sfrac{1}{2^k}$.}
    \BlankLine

    \If{$D = 1$}{
        \Return{$r_k = A(k)$.}
    }
    \If{$D$ is even}{
        Let $t = k + 3$.

        Compute $a_t = \texttt{ApproximationOracle-Powering}(A, \sfrac{D}{2}, t)$.

        \Return{$\round{\sfrac{a_t^2}{2^{2t - k}}}{}$.}
    }\Else{
        Let $t = k + 4$.

        Compute $b_t = A(t)$. 
        Compute $a_t = \texttt{ApproximationOracle-Powering}(A, \sfrac{(D-1)}{2}, t)$.

        \Return{$\round{\sfrac{a_t^2 \cdot b_t}{2^{3t - k}}}{}$.}
    }
\end{algorithm}

We also note that this notion of approximation oracles naturally extends to representation of complex numbers. Here, each complex number is given by two such oracles, corresponding to the real and the imaginary part respectively. 

\section{Revisiting MME over prime fields} \label{sec:optimization over finite fields}

We recall the result of Bhargava, Ghosh, Guo, Kumar and Umans \cite{BhargavaGGKU2022}. 

\begin{theorem}[Fast multivariate multipoint evaluation over finite fields \cite{BhargavaGGKU2022}]
    \label{thm:MME-finite-fields}
    There is a deterministic algorithm that when given as input the coefficient vector of an $m$ variate polynomial $f$ of degree less than $d$ in each variable over some finite field $\F$, and $N$ points $\veca_1, \veca_2, \ldots, \veca_N \in \F^m$ outputs $f(\veca_1), f(\veca_2), \ldots, f(\veca_N)$ in time 
    \[
    (d^m + Nm)^{1 + o(1)}\cdot \poly(m, d, \log |\F|), 
    \]
    for all $m \in \N$ and sufficiently large $d \in \N$. 
\end{theorem}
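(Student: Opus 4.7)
The plan is to reduce the general problem to multipoint evaluation over a field of \emph{small} characteristic, where the nearly-linear algorithm of Bhargava, Ghosh, Kumar and Mohapatra can be invoked as a black box. First I would handle the prime case $\F = \F_p$; the extension case $\F_{p^k}$ reduces to it by representing each $\F_{p^k}$-element as a length-$k$ vector over $\F_p$ and carrying out all arithmetic inside $\F_p[y]/(g(y))$ for a fixed irreducible $g$ of degree $k$, incurring only a $\poly(k) = \poly(\log |\F|)$ overhead.

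For the prime case, view each coefficient of $f$ and each coordinate of the points $\veca_i$ as an integer in $\set{0, 1, \ldots, p-1}$. Since polynomial evaluation commutes with the ring surjection $\Z \twoheadrightarrow \F_p$, it suffices to compute $f(\veca_i) \in \Z$ and then reduce modulo $p$. Over $\Z$ this value is bounded by $d^m \cdot (p-1)^{m(d-1)+1} \leq p^{O(md)}$, and hence has bit length $B = O(md \log p)$. I would then pick small primes $r_1,\ldots, r_t$ each of magnitude $(\log p + m + d)^{O(1)}$ whose product $M$ exceeds $2^{B+1}$; by the Prime Number Theorem $t = \tilde{O}(B)$ primes suffice, and they can be enumerated using \cref{lem:finding-prime-list}. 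For each $r_i$, reduce the input modulo $r_i$ to obtain an MME instance over $\F_{r_i}$, a field of small characteristic, and solve it with the small-characteristic MME algorithm; since $\log r_i$ is polylogarithmic in the parameters, each such call costs $(d^m + Nm)^{1+o(1)} \cdot \poly(m, d, \log p)$. Finally, for each of the $N$ output coordinates run Fast-CRT reconstruction (\cref{lem:CRT-reconstruction}) to recover the unique $V_i \in [0, M)$ satisfying $V_i \equiv f(\veca_i) \pmod{r_j}$ for every $j$, and output $V_i \bmod p$.

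A separate issue is that the small-characteristic algorithm is itself only nearly-linear in $(d^m + Nm)$ when the number of variables is sufficiently large, so the above black-box call can be slow when $m$ is small. I would address this by pre-processing the instance via the inverse Kronecker map $\Phi^{-1}_{d,m;c}$ of \cref{defn:kronecker} for a suitable $c$, trading individual degree for number of variables, and correspondingly transforming each evaluation point through $\psi_{d,m;c}$ using \cref{obs:kronecker-evaluations}. The main technical obstacle lies in the bookkeeping for the running time: one must verify that the $\tilde{O}(B)$ small-field MME calls, plus the $N$ Fast-CRT reconstructions at $\tilde{O}(B)$ cost each, plus the Kronecker reshaping, all collectively fit inside $(d^m + Nm)^{1+o(1)} \cdot \poly(m, d, \log |\F|)$. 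A secondary subtlety is ensuring that the inverse Kronecker step does not inflate the sparsity parameter $d^m$ by more than a $(1+o(1))$ factor in the exponent, which constrains the admissible choices of the Kronecker parameters and is the delicate point in matching the stated bound for all $m$.
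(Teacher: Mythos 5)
The theorem you are proving is stated in the paper as a direct citation of \cite{BhargavaGGKU2022}; the paper offers no proof and simply invokes it as a black box (inside \cref{alg:nearly-linear-mme-finite-fields}). Your proposal is therefore an attempt to re-derive one of the main technical contributions of \cite{BhargavaGGKU2022} from the small-characteristic result of \cite{BhargavaGKM2022} by lifting to $\Z$ and multi-modular reduction. This is the broad strategy that Kedlaya--Umans \cite{KedlayaU2011} introduced and that \cite{BhargavaGGKU2022} extend, so the high-level picture is right, but the sketch glosses over precisely the places where the real technical content sits.

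Two concrete gaps. First, the reduction from $\F_{p^k}$ to $\F_p$ is not ``represent as vectors and simulate arithmetic.'' After the change of representation, the coefficients and the coordinates of the points live in the ring $\F_p[y]/(g(y))$, not in $\F_p$, and an MME algorithm for $\F_p$ cannot be handed such an instance; the underlying algorithm must itself be extended to work over such extension rings. The paper flags exactly this subtlety when discussing the complex-number case, observing that \cite{BhargavaGGKU2022} deliberately state their results over rings of the form $\F_p[z]/\langle h(z)\rangle$. Second, after a single round of CRT the primes $r_i$ have magnitude $\tilde{O}(md\log|\F|)$, which is $\poly(d,m,\log|\F|)$ and in particular not $d^{o(1)}$; so the small-characteristic hypothesis of \cite{BhargavaGKM2022} is not obviously satisfied, and you would have to argue that that algorithm remains correct and fast when the characteristic is polynomially large in $d$. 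This is not automatic, and it is precisely why Kedlaya--Umans and \cite{BhargavaGGKU2022} iterate the multi-modular reduction and then handle the residual small-characteristic base case with a separate argument. Without addressing both points concretely, the proposal is a plausible outline but not a proof of the cited theorem.
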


The above running time is not \emph{quite} nearly-linear in the input considered as bits due to the factor of $\poly(\log|\F|)$. Also, in the setting when $m$ is a constant, we can no longer absorb $\poly(d)$ within $(Nm + d^m)^{o(1)}$. However, we show below that for the case of prime fields, we can get around these issues and obtain the following nearly linear-time bound.

\begin{theorem}[Nearly-linear time MME over prime fields] 
    \label{thm:nearly-linear-MME-finite-fields}
    There is a deterministic algorithm (namely \cref{alg:nearly-linear-mme-finite-fields}) that, when given as input the coefficient vector of an $m$-variate polynomial $f$ of degree less than $d$ in each variable over a prime field $\F_p$, and $N$ points $\veca^{(1)},\ldots, \veca^{(N)} \in \F^m$, outputs $f(\veca^{(1)}), \ldots, f(\veca^{(N)})$ in time 
    \[
    \inparen{(d^m + Nm) \cdot \log p}^{1 + o(1)}
    \]
    for all $m \in \N$ and sufficiently large $d \in \N$. 
\end{theorem}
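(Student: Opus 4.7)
The running time in \cref{thm:MME-finite-fields} has two super-linear overheads: the factor $\poly(\log p)$, and the factor $\poly(m, d)$, which only gets absorbed into $(d^m + Nm)^{o(1)}$ when $m$ grows with $d$. My plan is to handle these sequentially: I would first apply the inverse Kronecker map to artificially inflate the number of variables while shrinking the individual degree, and then lift everything to integers and use a Chinese remaindering with many small moduli to reduce to \cref{thm:MME-finite-fields} over prime fields of small size.

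For the Kronecker step, I would pick a parameter $k = k(m,d) \geq 1$ so that $d' := \lceil d^{1/k} \rceil$ satisfies $d' = (d^m + Nm)^{o(1)}$ and $mk = (d^m + Nm)^{o(1)}$ is growing (for example, $k = \lceil \log\log d \rceil$ when $m$ is constant, and $k = 1$ once $m$ itself is growing). The algorithm then computes the $mk$-variate polynomial $g := \Phi_{d',k;m}^{-1}(f)$ of individual degree less than $d'$ (padding $f$ with zero coefficients if $d'^k > d$), and transforms each input point $\veca^{(i)}$ into $\psi_{d',k;m}(\veca^{(i)}) \in \F_p^{mk}$ via repeated squaring in $\F_p$; the latter takes $\tilde{O}(Nm \log d \cdot \log p)$ bit operations in total. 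By \cref{obs:kronecker-evaluations}, $g(\psi_{d',k;m}(\veca^{(i)})) = f(\veca^{(i)})$, so it suffices to evaluate $g$ at the transformed points.

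For the CRT step, I view $g$ and the points $\psi_{d',k;m}(\veca^{(i)})$ as having integer entries in $[0, p-1]$. Each integer evaluation $\tilde{g}(\psi(\veca^{(i)})) \in \Z$ is at most $d'^{mk} \cdot p \cdot p^{mk(d'-1)}$, and hence has bit-size $B = O(mkd' \log p) = (\log p) \cdot (d^m + Nm)^{o(1)}$. I would use \cref{lem:finding-prime-list} to enumerate all primes below a threshold $Q := \poly(\log(d^m + Nm + \log p))$, keep the smallest $r$ of them whose product exceeds $2^B$ (so $r = O(B/\log Q)$), reduce the coefficients of $g$ and coordinates of the points modulo each $q_j$ via \cref{lem:CRT-moduli-computation}, and invoke \cref{thm:MME-finite-fields} over $\F_{q_j}$ to compute $\tilde{g}(\psi(\veca^{(i)})) \bmod q_j$ for every $i$. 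Finally, each integer $\tilde{g}(\psi(\veca^{(i)}))$ is reconstructed via fast CRT (\cref{lem:CRT-reconstruction}) and reduced modulo $p$ to output $f(\veca^{(i)})$.

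The main obstacle is parameter balancing. The constraints $d' = (d^m + Nm)^{o(1)}$ and $mk = (d^m + Nm)^{o(1)}$ together ensure that each of the $r$ invocations of \cref{thm:MME-finite-fields} over $\F_{q_j}$ costs $(d^m + Nm)^{1+o(1)} \cdot \poly(\log Q)$; summing over $r = O(B/\log Q)$ primes and using $\poly(\log Q)/\log Q = ((d^m + Nm)\log p)^{o(1)}$ yields a total of $((d^m + Nm) \log p)^{1+o(1)}$. The CRT reconstructions likewise take $\tilde{O}(NB) = ((d^m + Nm)\log p)^{1+o(1)}$, and the Kronecker and modular-reduction preprocessing fit comfortably within the same budget. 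The assumption that $d$ is sufficiently large is what makes the single choice $k = \lceil \log\log d \rceil$ (for small $m$) work uniformly across both the regimes $d^m \geq Nm$ and $Nm \geq d^m$.
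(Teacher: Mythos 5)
Your approach is the same as the paper's: inverse Kronecker to inflate the number of variables and shrink the individual degree so that $\poly(d', mk)$ becomes negligible, then lift to $\Z$, Chinese-remainder with many small primes, invoke \cref{thm:MME-finite-fields} over each $\F_{q_j}$, and reconstruct via fast CRT. The Kronecker parameters you pick ($k = \lceil\log\log d\rceil$, $d' = \lceil d^{1/k}\rceil$) are not the paper's ($d' = \lfloor\log d\rfloor$, $m'$ minimal with $(d')^{m'} > d$), but both achieve $d', mk = (d^m + Nm)^{o(1)}$ with $mk$ growing, so this is a cosmetic difference, and the rest of the pipeline is identical.

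There is, however, a genuine error in how you set up the CRT. You propose to enumerate all primes below a threshold $Q := \poly(\log(d^m + Nm + \log p))$ and pick $r$ of them whose product exceeds $2^B$, where $B = \Theta(mkd'\log p)$ is the bit-length of the integer evaluations. But by the prime number theorem, the product of \emph{all} primes below $Q$ is $e^{\vartheta(Q)} \approx e^Q$; you therefore need $Q = \Omega(B)$, and $B$ has the factor $\log p$ built in, which a polylogarithmic $Q$ cannot match. Equivalently, there are only $\Theta(Q/\log Q)$ primes below $Q$, and $Q/\log Q \ll B/\log Q = r$ whenever $Q \ll B$, so the $r$ primes you want to pick do not exist. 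The fix is simply to take $Q = \tilde{O}(B)$ (the paper computes the first $\tilde{L} = (dm+1)\log p + m\log d$ primes, which by the PNT live below $\tilde{O}(\tilde{L})$). The observation that actually drives the argument — and that you should lean on — is that $\log q_j = O(\log d + \log m + \log\log p)$ is polylogarithmic in the parameters, so the $\poly(\log q_j)$ overhead in \cref{thm:MME-finite-fields} is $\left((d^m + Nm)\log p\right)^{o(1)}$ and the total over $r = O(B/\log Q)$ invocations stays $\left((d^m + Nm)\log p\right)^{1+o(1)}$. Your running-time analysis, which already only uses $\log Q$ and not $Q$ itself, goes through unchanged once $Q$ is corrected.
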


We first discuss how we handle the two cases when the number of variables is constant and growing with the input respectively in the following two subsections and then prove \cref{thm:nearly-linear-MME-finite-fields}.

We first discuss how we handle the two cases when the number of variables is constant and growing with the input respectively in the following two subsections and then prove \cref{thm:nearly-linear-MME-finite-fields}.

\subsection{Handling cases when the number of variables is too small}

As mentioned above, in the setting when the number of variables is too small (say $m \leq c$ for a constant $c$), we may no longer have that $\poly(d) = d^{o(m)}$. However, we can use the inverse-Kronecker map (\cref{defn:kronecker}) to trade-off degree with the number of variables. 

To make the parameters more informative, we rename them and let $f$ be a $c$-variate polynomial of individual degree less than $D$, and let $\veca^{(1)}, \ldots, \veca^{(N)} \in \F_p^c$ be the points at which we wish to evaluate the polynomial. 

Let $d = \floor{\log D}$ and $m$ be the smallest integer such that $d^m > D$. Note that $d^m > D > d^{m-1}$ and $m = \Theta(\sfrac{\log D}{\log\log D})$. 
If $f(x_1,\ldots, x_c) = \sum_\vece f_\vece \cdot \vecx^\vece$, define the polynomial $g(y_{1,1}, \ldots, y_{c,m}) = \Phi_{d,m;c}^{-1}(f)$, as defined in \cref{defn:kronecker}. 

For all $i\in[N]$, define $\widetilde{\veca^{(i)}} = \psi_{d,m;c}(\veca^{(i)})$, as defined in \cref{defn:kronecker}. Then, from \cref{obs:kronecker-evaluations}, we have that $f(\veca^{(i)}) = g(\widetilde{\veca^{(i)}})$ for all $i \in [N]$. The following observation shows that $\widetilde{\veca^{(i)}}$ can be computed efficiently from $\veca^{(i)}$. 

\begin{observation}\label{obs:kronecker-ff-fast}
    Given $\veca \in \F_p^c$, the point $\widetilde{\veca} := \psi^{(c)}_{d,m}(\veca) \in \F_p^{cm}$ can be computed in $\poly(d,m, c) \cdot \tilde{O}(\log p)$ time. 
\end{observation}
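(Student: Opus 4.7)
The plan is to compute the coordinates of $\widetilde{\veca}$ by a straightforward iterated-squaring scheme, exploiting the fact that the powers $a_i^{d^j}$ form a chain under the map $z \mapsto z^d$. Recall by \cref{defn:kronecker} that
\[
\psi_{d,m;c}(a_1,\ldots,a_c) = \bigl(1, a_1^d, a_1^{d^2}, \ldots, a_1^{d^{m-1}},\; \ldots,\; 1, a_c^d, a_c^{d^2}, \ldots, a_c^{d^{m-1}}\bigr),
\]
so it suffices to compute, for each $i \in [c]$ and each $j \in \{0,1,\ldots,m-1\}$, the element $a_i^{d^j} \in \F_p$.

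For a fixed $i$, I would iteratively build up the sequence $b_{i,0} = 1$, $b_{i,1} = a_i^d$, $b_{i,2} = a_i^{d^2} = (b_{i,1})^d$, and in general $b_{i,j} = (b_{i,j-1})^d$ for $j \geq 2$. Each step reduces to raising a previously computed element of $\F_p$ to the $d$-th power, which by repeated squaring requires $O(\log d)$ multiplications in $\F_p$. Using fast integer multiplication modulo $p$, each field multiplication in $\F_p$ costs $\tilde{O}(\log p)$ bit operations.

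Therefore computing all $m$ entries associated with $a_i$ costs $O(m \log d) \cdot \tilde{O}(\log p)$ bit operations, and summing over the $c$ input coordinates gives a total of $O(c \cdot m \cdot \log d) \cdot \tilde{O}(\log p) = \poly(d,m,c) \cdot \tilde{O}(\log p)$ bit operations, as required. No step here presents a real obstacle; the only thing to be slightly careful about is that we do \emph{not} naively compute $a_i^{d^{m-1}}$ from scratch via $O(m \log d)$ squarings (which would still be fine for this bound, but is wasteful), and instead reuse the previously computed $b_{i,j-1}$ to go to $b_{i,j}$ using only $O(\log d)$ further multiplications.
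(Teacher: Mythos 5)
Your proof is correct and uses exactly the same idea as the paper: compute each $a_i^{d^j}$ from the previous power via $a_i^{d^j} = (a_i^{d^{j-1}})^d \bmod p$, with each $d$-th power done by repeated squaring in $O(\log d)$ field multiplications, each costing $\tilde{O}(\log p)$. Nothing is missing.
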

\begin{proof}
The running time bound follows from repeated exponentiation as $a^{d^k}\bmod p = (a^{d^{k-1}} \bmod p)^{d} \bmod{p}$ and the fact that additions and multiplications modulo $p$ can be performed in $\tilde{O}(\log p)$ time. 
\end{proof}

Thus, the task of computing $f(\veca^{(1)}), \ldots, f(\veca^{(N)})$ reduces to the task of computing the evaluations $g(\widetilde{\veca^{(1)}}), \ldots, g(\widetilde{\veca^{(N)}})$ where $\widetilde{\veca^{(i)}} = \psi^{(c)}_{d,m}(\veca)$. Also, the reduction runs in time $((D^c + Nc) \cdot \log p)^{1 + o(1)}$ since $d,m = D^{o(1)}$. 

\subsection{When individual degree and number of variables are moderately growing}

We return to the familiar variable convention of $f(x_1,\ldots, x_m) \in \F_p[x_1,\ldots, x_m]$ with degree in each variable less than $d$. From the previous section, may assume without loss of generality that $d,m = \omega(1)$ and hence $\poly(d,m) = (d^m + Nm)^{o(1)}$. Let $f$ be written as a sum of monomials as follows.
\[
    f(x_1,\ldots, x_m) = \sum_{\vece} f_\vece \cdot x_1^{e_1} \cdots x_m^{e_m}. 
\]
Interpreting the above as a polynomial over integers with each coefficient in $\set{0,1,\ldots, p-1}$, and for any $\veca \in \set{0,\ldots, p-1}^m$, the integer $f(\veca)$ is bounded by $d^m \cdot p \cdot p^{dm}$. The idea is to use Chinese Remainder Theorem to reduce the problem to MME over smaller prime fields. 

\medskip

\begin{algorithm}[H]
    \caption{\texttt{NearlyLinearTimeMME-PrimeFields}}
    \label{alg:nearly-linear-mme-finite-fields}
    \SetKwInOut{Input}{Input}\SetKwInOut{Output}{Output}

    \Input{$f(x_1,\ldots, x_m) \in \F_p[x_1,\ldots, x_m]$ with degree in each variable less than $d$, and $\veca^{(1)},\ldots, \veca^{(N)} \in \F_p^m$.}
    \Output{Evaluations $b_i = f(\veca^{(i)})$ for $i \in [N]$.}

    \BlankLine

    \If{$m < \log\log d$}{\label{alg:nearly-linear-mme-ff:small-m-case}
        Let $d' = \floor{\log d}$ and $m'$ be the smallest integer such that $(d')^{m'} > d$. 

        Replace $f$ by $\Phi_{d',m';m}^{-1}(f)$ and each $\veca^{(i)}$ by $\psi_{d',m'; m}(\veca^{(i)})$. \label{alg:nearly-linear-mme-ff:small-m-reduction}
    }

    \BlankLine

    Let $\tilde{L} = (dm + 1)\log p + m\log d$. Compute the first $\tilde{L}$ primes numbers $\set{p_1, \ldots, p_{\tilde{L}}}$. 
    \label{alg:nearly-linear-mme-ff:sieving}

    Let $L \leq \tilde{L}$ be the smallest integer such that $p_1 \cdots p_{L} =:M > d^m \cdot p \cdot p^{dm}$. 
    \label{alg:nearly-linear-mme-ff:finding-primes}

    \For{$\vece \in \set{0,\ldots, d-1}^m$}{\label{alg:nearly-linear-mme-ff:for-crt-coeffs}
        Compute $f_{\vece}^{(\ell)} = f_\vece \bmod p_{\ell}$ for $\ell \in L$ via fast-CRT-moduli-computation (\cref{lem:CRT-moduli-computation}).  \label{alg:nearly-linear-mme-ff:crt-coeffs}
    }

    \For{$i \in [N], k\in [m]$}{\label{alg:nearly-linear-mme-ff:for-crt-points}
        Compute $a_{i,k, \ell} = \veca^{(i)}_k \bmod p_\ell$ for $\ell \in L$ via fast-CRT-moduli-computation (\cref{lem:CRT-moduli-computation}).  \label{alg:nearly-linear-mme-ff:crt-points}
    }
    \For{$\ell \in L$}{\label{alg:nearly-linear-mme-ff:for-BGGKU}
        Let $f^{(\ell)}(x_1,\ldots, x_m) = \sum_{\vece} f_{\vece}^{(\ell)} \vecx^\vece \in \F_{p_i}[\vecx]$.

        Let $\veca^{(i,\ell)} = (a_{i,1,\ell}, \ldots, a_{i,m,\ell}) \in \F_{p_\ell}^m$ for each $i \in [N]$. 

        Compute $b_{i,\ell} = f^{(\ell)}(\veca^{(i,\ell)})$ for all $i \in [N]$ using \cref{thm:MME-finite-fields}. \label{alg:nearly-linear-mme-ff:BGGKU}
    }

    \For{$i \in [N]$}{\label{alg:nearly-linear-mme-ff:for-crt-reconstruct}
        Compute the unique $b_i \in [0,M)$ such that $b_{i} = b_{i,\ell}\bmod{p_\ell}$ for all $\ell \in [L]$, via fast-CRT-reconstruction (\cref{lem:CRT-reconstruction}). \label{alg:nearly-linear-mme-ff:crt-reconstruct}
    }

    \Return{$(b_1\bmod p, \ldots, b_N \bmod p)$.}
\end{algorithm}

\begin{proof}[Proof of \cref{thm:nearly-linear-MME-finite-fields}]
    The correctness of \cref{alg:nearly-linear-mme-finite-fields} is evident.

    As for the running time, \crefrange{alg:nearly-linear-mme-ff:small-m-case}{alg:nearly-linear-mme-ff:small-m-reduction} takes $(d^m + Nm)^{1 + o(1)}$ time by \cref{obs:kronecker-ff-fast} and reduces to the case when $m \geq \log\log d$. 
    In this case, \cref{alg:nearly-linear-mme-ff:sieving,alg:nearly-linear-mme-ff:finding-primes} require $\tilde{O}(\tilde{L})$ time (\cref{lem:finding-prime-list}), which is $\tilde{O}(\log p) \cdot \poly(d,m)$. 
    
    Using \cref{lem:CRT-moduli-computation}, we have that \crefrange{alg:nearly-linear-mme-ff:for-crt-coeffs}{alg:nearly-linear-mme-ff:crt-points} require time $(d^m + Nm) \cdot \tilde{O}(\log M) = ((d^m + Nm)\cdot \log p)^{1 + o(1)}$. 

    From \cref{thm:MME-finite-fields}, we have that \cref{alg:nearly-linear-mme-ff:BGGKU} runs in time $(d^m + Nm)^{1 + o(1)} \cdot \poly(d,m, \log p_i)$, and since $p_i < \tilde{O}(\tilde{L}) = \tilde{O}(d m \log p)$, the entire loop in \crefrange{alg:nearly-linear-mme-ff:for-BGGKU}{alg:nearly-linear-mme-ff:BGGKU} takes time $(d^m + Nm)^{1 + o(1)} \cdot \tilde{O}(\log p) = ((d^m + Nm) \log p)^{1 + o(1)}$. 

    And finally, from \cref{lem:CRT-reconstruction} we have that the entire loop in \crefrange{alg:nearly-linear-mme-ff:for-crt-reconstruct}{alg:nearly-linear-mme-ff:crt-reconstruct} takes time $(Nm) \cdot \tilde{O}(\log M) = ((d^m + Nm)\cdot \log p)^{1 + o(1)}$. Hence, \cref{alg:nearly-linear-mme-finite-fields} runs in time $((d^m + Nm) \log p)^{1 + o(1)}$. 
\end{proof}

\section{Exact-MME over integers with known output  bit complexity}
\label{sec:exact-MME-integers}

In this section, we study the following version of MME over integers. 

\begin{quote}
\textbf{Input:} An integer $s > 0$, a polynomial $f(x_1,\ldots, x_m) \in \Z[x_1,\ldots, x_m]$ of individual degree less than $d$, given as a list of $d^m$ integer coefficients, a set of points $\veca^{(1)}, \ldots, \veca^{(N)} \in \Z^m$ with each coordinate of magnitude at most $2^s$, with the guarantee that all coefficients of $f$, coordinates of $\veca^{(i)}$'s, and evaluations $f(\veca^{(i)})$ are bounded in magnitude by $2^s$. 

\textbf{Output:} Integers $b_1,\ldots, b_N$ that are the evaluations, i.e. $b_i = f(\veca^{(i)})$ for $i \in [N]$.
\end{quote}
\medskip

\begin{theorem}[Exact-MME over integers]\label{thm:exact-mme-over-integers}
    There is a deterministic algorithm (namely \cref{alg:exact-mme-integers}) that on input as mentioned above returns the required output as mentioned above and runs in deterministic time $((d^m + Nm) \cdot s)^{1 + o(1)}$ for all $m \in \N$ and sufficiently large $d \in \N$. 
\end{theorem}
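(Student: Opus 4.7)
The plan is to reduce exact-MME over integers to several instances of MME over small prime fields using the Chinese Remainder Theorem, in direct analogy with the internal reduction already used inside \cref{alg:nearly-linear-mme-finite-fields}. Concretely, I would first compute a list of distinct primes $p_1, \ldots, p_L$ using \cref{lem:finding-prime-list} such that their product $M := \prod_{\ell=1}^L p_\ell$ exceeds $2^{s+2}$, while keeping each $p_\ell$ of magnitude $O(s)$ so that $\log p_\ell = O(\log s)$. By Chebyshev-type bounds (equivalently the Prime Number Theorem), choosing $L = O(s/\log s)$ such primes suffices, and this step runs in $\tilde{O}(s)$ time.

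Next, using fast-CRT-moduli-computation (\cref{lem:CRT-moduli-computation}), I would reduce every coefficient $f_\vece$ of $f$ and every coordinate of each $\veca^{(i)}$ modulo all the $p_\ell$ simultaneously. Since every input integer has bit complexity $\tilde{O}(s)$ and $\log M = \tilde{O}(s)$, the full set of residues for one input integer costs $\tilde{O}(s)$ time, giving a total of $(d^m + Nm) \cdot \tilde{O}(s) = ((d^m + Nm)\cdot s)^{1 + o(1)}$ for this step (sign is handled by negating residues when the original integer is negative). For each $\ell \in [L]$, I then form the polynomial $f^{(\ell)} \in \F_{p_\ell}[\vecx]$ whose coefficients are the residues of the $f_\vece$, together with the reduced evaluation points, and invoke \cref{thm:nearly-linear-MME-finite-fields} to compute $b_{i,\ell} := f^{(\ell)}(\veca^{(i)} \bmod p_\ell)$ for all $i \in [N]$. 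Each such call costs $((d^m + Nm) \log p_\ell)^{1 + o(1)}$ time, and summing over the $L = O(s/\log s)$ primes (each of size $O(s)$, so $\log p_\ell = O(\log s)$) yields a total running time of $((d^m + Nm) \cdot s)^{1 + o(1)}$ for the whole evaluation phase.

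Finally, I would use fast-CRT-reconstruction (\cref{lem:CRT-reconstruction}) to compute, for each $i \in [N]$, the unique $B_i \in [0, M)$ with $B_i \equiv b_{i,\ell} \pmod{p_\ell}$ for every $\ell$. Because the true evaluation $f(\veca^{(i)})$ has magnitude at most $2^s$ and $M > 2^{s+2}$, one has $b_i = B_i$ when $B_i < M/2$ and $b_i = B_i - M$ otherwise, so the signed evaluation is recovered by a single comparison. This reconstruction costs $\tilde{O}(\log M) = \tilde{O}(s)$ per output point, for a total of $((d^m + Nm)\cdot s)^{1 + o(1)}$. Correctness is immediate from the input promise $|f(\veca^{(i)})| \le 2^s$ combined with CRT.

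The main obstacle, and the reason \cref{sec:optimization over finite fields} is needed as a preliminary, is controlling the aggregate cost of the $L$ multipoint evaluations over the small prime fields. Invoking the raw \cref{thm:MME-finite-fields} directly would contribute a $\poly(\log p_\ell)$ factor whose degree may exceed one, so summing over $L = \Theta(s/\log s)$ primes would give a bound of the form $((d^m + Nm)\cdot s^c)^{1+o(1)}$ for some $c > 1$ and break near-linearity. The genuinely linear-in-$\log p_\ell$ bound of \cref{thm:nearly-linear-MME-finite-fields} is precisely what makes the CRT-based plan go through at the desired total cost; every other step (sieving, CRT moduli-computation, CRT reconstruction) already runs in $\tilde{O}(s)$ per input or output integer.
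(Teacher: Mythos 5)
Your proposal is correct and follows essentially the same route as the paper: reduce modulo a collection of small primes whose product exceeds the output bound, invoke the near-linear-time MME over prime fields from \cref{thm:nearly-linear-MME-finite-fields} for each modulus, and recombine with fast CRT reconstruction, with the sieving and the two CRT phases each costing $\tilde{O}(s)$ per input or output integer. The only (inconsequential) differences are cosmetic: the paper takes $M > 2^{s+1}$ and reconstructs directly in $[-\sfrac{M}{2}, \sfrac{M}{2}]$, whereas you take $M > 2^{s+2}$ and recenter after reconstructing in $[0,M)$.
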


The main idea is to use the Chinese Remainder Theorem and reduce to the case of MME over finite fields. Since we wish to obtain a nearly-linear time algorithm, we would once again need to use Chinese Remainder Theorem implemented in nearly-linear time (\cref{lem:CRT-moduli-computation,lem:CRT-reconstruction}) and make use of the nearly-linear time algorithm for MME over prime fields (\cref{thm:nearly-linear-MME-finite-fields}). 

\medskip

\begin{algorithm}
    \caption{\texttt{ExactMME-integers}}
    \label{alg:exact-mme-integers}
    \SetKwInOut{Input}{Input}\SetKwInOut{Output}{Output}

    \Input{$f(x_1,\ldots, x_m) \in \Z[x_1,\ldots, x_m]$ and $\veca^{(1)},\ldots, \veca^{(N)} \in \Z^n$, and an integer $s > 0$ such that all coefficients of $f$, coordinates of $\veca^{(i)}$ and evaluations $f(\veca^{(i)})$ have magnitude bounded by $2^s$.}
    \Output{Evaluations $b_i = f(\veca^{(i)})$ for $i \in [N]$.}
    \BlankLine

    Compute the first $s$ primes numbers $\set{p_1, \ldots, p_s}$. \label{alg:exact-mme-integers:sieving}

    Let $L \leq s$ be the smallest integer such that $p_1 \cdots p_{L} =:M > 2^{s+1}$. \label{alg:exact-mme-integers:finding-primes}

    \For{$\vece \in \set{0,\ldots, d-1}^m$}{\label{alg:exact-mme-integers:for-crt-coeffs}
        Compute $f_{\vece}^{(\ell)} = f_\vece \bmod p_\ell$ for $\ell \in L$ using \cref{lem:CRT-moduli-computation}. \label{alg:exact-mme-integers:crt-coeffs}
    }
    \For{$i \in [N], k\in [m]$}{\label{alg:exact-mme-integers:for-crt-points}
        Compute $a_{i,k, \ell} = \veca^{(i)}_k \bmod p_\ell$ for $\ell \in L$ using \cref{lem:CRT-moduli-computation}. \label{alg:exact-mme-integers:crt-points}
    }

    \For{$\ell \in [L]$}{\label{alg:exact-mme-integers:for-ff-mme}
        Let $f^{(\ell)}(x_1,\ldots, x_m) = \sum_{\vece} f_{\vece}^{(\ell)} \vecx^\vece \in \F_{p_i}[\vecx]$.

        Let $\veca^{(i,\ell)} = (a_{i,1,\ell}, \ldots, a_{i,m,\ell}) \in \F_{p_\ell}^m$ for each $i \in [N]$. 

        Compute $b_{i,\ell} = f^{(\ell)}(\veca^{(i,\ell)})$ for all $i \in [N]$ using \cref{alg:nearly-linear-mme-finite-fields}. \label{alg:exact-mme-integers:ff-mme}
    }

    \For{$i \in [N]$}{\label{alg:exact-mme-integers:for-crt-reconstruct}
        Compute the unique $b_i \in [-\sfrac{M}{2},\sfrac{M}{2}]$ such that $b_{i} = b_{i,\ell}\bmod{p_\ell}$ for all $\ell \in [L]$, using \cref{lem:CRT-reconstruction}. \label{alg:exact-mme-integers:crt-reconstruct}
    }

    \Return{$\setdef{b_i}{i\in [N]}$.}
\end{algorithm}
\begin{proof}[Proof of \cref{thm:exact-mme-over-integers}]
    We are guaranteed that $\abs{f(\veca^{(i)})} < 2^s$ for all $i\in [N]$. Hence, by the Chinese Remainder Theorem, it is sufficient to compute $f(\veca) \bmod{p_i}$ for each $i \in [L]$ since $p_1 \cdots p_L > 2^{s + 1}$. Hence, the correctness of \cref{alg:exact-mme-integers} is evident. As for the running time, we will do an analysis very similar to the analysis for \cref{alg:nearly-linear-mme-finite-fields}. 

    Using \cref{lem:CRT-moduli-computation}, we have that \crefrange{alg:exact-mme-integers:sieving}{alg:exact-mme-integers:finding-primes} require time $O(\tilde{s})$. By the Prime Number Theorem \cite{Hadamard1896,Poussin1897}, we also have that each $p_i = \tilde{O}(s)$ and hence $p_1\cdots p_L < 2^{s+1} \cdot \tilde{O}(s)$. 

    From \cref{thm:MME-finite-fields}, we have that \cref{alg:exact-mme-integers:ff-mme} runs in time $((d^m + Nm) \cdot \log p_\ell)^{1 + o(1)}$ the entire loop in \crefrange{alg:exact-mme-integers:for-ff-mme}{alg:exact-mme-integers:ff-mme} takes time $((d^m + Nm) (\sum_\ell \log p_\ell))^{1 + o(1)} = ((d^m + Nm) \cdot s)^{1 + o(1)}$. 

    And finally, from \cref{lem:CRT-reconstruction} we have that the entire loop in \crefrange{alg:exact-mme-integers:for-crt-reconstruct}{alg:exact-mme-integers:crt-reconstruct} takes time $(Nm) \cdot \tilde{O}(\log M) = ((d^m + Nm)\cdot s)^{1 + o(1)}$. Hence, \cref{alg:exact-mme-integers} runs in time $((d^m + Nm) \cdot s)^{1 + o(1)}$ as claimed.
\end{proof}

\begin{remark}\rm
    If we are only given that all coefficients of $f$ and all coordinates of the points are integers bounded in magnitude by $2^s$ with no a-priori bound on the bit complexity of the evaluations, a \naive{} bound on the size of evaluations is 
    \[
    \abs{f(\veca)} \leq d^m \cdot 2^s \cdot 2^{sdm} \leq 2^{sdm + s + m\log d}.
    \]
    Thus, we may use $s' = (sdm + s + m\log d)$ in \cref{thm:exact-mme-over-integers} to get the time complexity bounded by $\inparen{(d^m + Nm) \cdot (sdm)}^{1 + o(1)}$. If $m$ is a growing function, then the output complexity is nearly-linear in the input complexity since $\poly(d) = (d^m + Nm)^{o(1)}$. But, in the regime when $m$ is a constant, this is super-linear in the input size $(d^m + Nm)\cdot s$ because of the additional factor of $d$. However, a slightly worse running time is to be expected in this case since the output complexity is $\Omega(N \cdot sdm)$ in the worst case. 
\end{remark}

\section{Approximate-MME over reals}\label{sec:approx mme}

Throughout this section, we will assume that all real numbers as part of the input are in the interval $(-1,1)$. 

\begin{remark}[On the restriction on absolute value of constants]
    \label{remark:absolute-value-constants}
    Given any arbitrary polynomial $f(\vecx) \in \R[\vecx]$, we can scale the polynomial by the largest coefficient to obtain and run the approximate-MME on the scaled polynomial $\tilde{f}$. If we have $\abs{\tilde{f}(\veca) - \beta_i} \leq \epsilon$, then we immediately have $\abs{f(\veca) - \inparen{\max |f_\vece| } \beta_i} \leq \epsilon \cdot \inparen{\max |f_\vece| }$. Thus, we may assume without loss of generality that all coefficients of $f$ have absolute value at most $1$. 

    However, the assumption that coordinates of all evaluation points have absolute value bounded by one is \emph{not} without loss of generality but is well-motivated nevertheless. Even in the case of univariate integer polynomials, the evaluation $f(\veca)$ could be as large as $\abs{\veca}^d$ where $\abs{\veca} = \max \abs{\veca_i}$. Therefore, the output bit-complexity for MME is potentially $O(d\cdot N)$ which is super-linear in the input bit-complexity. 

    The restriction of insisting that evaluation points consist of coordinates with absolute value at most $1$ ensures that the evaluations are never prohibitively large in magnitude, thereby making the quest for approximate-MME in nearly-linear time more meaningful.
\end{remark}

\subsection{The problem statement and algorithm}

We now state the precise problem statement and our results for approximate-MME over the field of real numbers. 

\begin{quote}
\textbf{Input:} A polynomial $f(x_1,\ldots, x_m) \in \R_{(-1,1)}[x_1,\ldots, x_m]$ of individual degree less than $d$, given as a list of $d^m$ efficient approximation oracles for each coefficient,  a set of points $\veca^{(1)}, \ldots, \veca^{(N)} \in (-1,1)^m$ each of whose coordinates are also provided via efficient approximation oracles, and an accuracy parameter $t$. 

\textbf{Output:} Rational numbers $b_1,\ldots, b_N$ such that $\abs{f(\veca^{(i)}) - b_i} < \sfrac{1}{2^t}$ for all $i \in [N]$. 
\end{quote}

\begin{theorem}[approximate-MME over reals] \label{thm:approx-MME-reals}
    There is a deterministic algorithm (namely \cref{alg:approx-mme-real}) that on input as mentioned above returns the required output as mentioned above and runs in time $((d^m + Nm) \cdot t)^{1 + o(1)}$ for all $m \in \N$ and sufficiently large $d \in \N$. 
\end{theorem}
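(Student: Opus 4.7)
The plan is to reduce approximate-MME over reals to exact-MME over integers (\cref{thm:exact-mme-over-integers}). The steps are: (i)~a Kronecker-style preprocessing to handle the small-$m$ regime, (ii)~truncating all real inputs to $k$ bits of precision using the approximation oracles, (iii)~building a ``scaled'' integer polynomial whose evaluations at integer points encode the approximate real evaluations, (iv)~invoking \cref{thm:exact-mme-over-integers}, and (v)~dividing and truncating to obtain the output.

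\textbf{Preprocessing and rounding.} Mimicking the small-$m$ case of \cref{alg:nearly-linear-mme-finite-fields}, I would first apply a Kronecker/inverse-Kronecker composition (\cref{defn:kronecker}) to transform the input into an equivalent MME instance whose new parameters $M, D$ satisfy $D^M = d^m$ and $MD \leq (\log d)^{O(1)}$; the new evaluation points inherit efficient approximation oracles via \cref{lem:approximation-oracle-exponentiation}. After relabeling, I will assume $m \cdot d \leq (d^m + Nm)^{o(1)}$ going forward, and set $\Delta := m(d-1)$. Now, choosing $k = C(t + m\log d)$ for a sufficiently large constant $C$, I use the approximation oracles to round each coefficient $f_\vece$ to $a_\vece/2^k$ and each coordinate $a^{(i)}_j$ to $b^{(i)}_j/2^k$, where $a_\vece, b^{(i)}_j \in [-2^k, 2^k] \cap \Z$. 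Writing $\hat f(\vecx) := \sum_\vece (a_\vece/2^k) \vecx^\vece$ and $\hat \veca^{(i)} := (b^{(i)}_1,\ldots, b^{(i)}_m)/2^k$, a standard telescoping argument combined with \cref{lem:bounds-binomial} (to bound $(1+2^{-k})^\Delta$) yields
\[
\abs{f(\veca^{(i)}) - \hat f(\hat \veca^{(i)})} \;\leq\; O(d^m \cdot \Delta \cdot 2^{-k}) \;\leq\; 2^{-t-1}.
\]

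\textbf{Scaled integer polynomial and invocation.} Define the integer polynomial
\[
\hat F(\vecx) \;:=\; \sum_\vece a_\vece \cdot 2^{k(\Delta - |\vece|)} \cdot \vecx^\vece,
\]
which is $m$-variate with individual degree less than $d$ (the same shape as $f$). A direct calculation shows $\hat F(\vecb^{(i)}) = 2^{k(\Delta + 1)} \cdot \hat f(\hat \veca^{(i)})$, where $\vecb^{(i)} := (b^{(i)}_1,\ldots, b^{(i)}_m)$. The coefficients of $\hat F$, the integer coordinates $b^{(i)}_j$, and the evaluations $\hat F(\vecb^{(i)})$ (bounded in magnitude by $d^m \cdot 2^{k(2\Delta + 1)}$) all have bit complexity at most $s = O(k\Delta)$. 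I then invoke \cref{thm:exact-mme-over-integers} on $\hat F$, $\set{\vecb^{(i)}}$, $s$ to compute $\hat F(\vecb^{(i)})$ for all $i$ in time $((d^m + Nm) \cdot s)^{1+o(1)}$. The outputs $\beta_i$ are obtained by truncating $\hat F(\vecb^{(i)}) / 2^{k(\Delta + 1)}$ to $t$ bits after the binary point; combined with the error bound above this yields $\abs{\beta_i - f(\veca^{(i)})} \leq 2^{-t}$ as required. Since the preprocessing guarantees $\Delta \leq (\log d)^{O(1)} \leq (d^m + Nm)^{o(1)}$ and $k = O(t)$ for sufficiently large $t$, we have $s = O(t) \cdot (d^m + Nm)^{o(1)}$, giving total running time $((d^m + Nm)\cdot t)^{1+o(1)}$.

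\textbf{Main obstacle.} The subtle point, and the reason the Kronecker preprocessing is crucial, is that the scaled polynomial $\hat F$ has coefficients of bit complexity $\Theta(k\Delta)$---a factor of $\Delta$ larger than the ``information-theoretic'' bit complexity $k$ of the rounded input. Without the preprocessing, this would introduce a factor of $md$ into the running time that is fatal in the small-$m$ regime (e.g.\ univariate or bivariate inputs). Using the Kronecker trick to keep $\Delta$ polylogarithmic absorbs this overhead into $(d^m + Nm)^{o(1)}$.
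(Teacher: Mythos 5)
Your proposal follows essentially the same route as the paper's proof of \cref{thm:approx-MME-reals}: inverse-Kronecker preprocessing to make $md$ polylogarithmic when $m$ is small, rounding coefficients and coordinates to $O(t + m\log d)$ bits, forming the rescaled integer polynomial $\hat F(\vecx) = \sum_\vece a_\vece 2^{k(\Delta - |\vece|)}\vecx^\vece$, invoking \cref{thm:exact-mme-over-integers}, and dividing/rounding at the end; you also correctly identify the $\Delta$-factor blow-up in $\hat F$'s coefficients as the reason the Kronecker step is necessary. The only difference is cosmetic (a single rounding precision $k$ versus the paper's $k_1, k_2$, and truncation rather than rounding in the final step, which needs one extra bit of precision to land exactly within $2^{-t}$), so the argument is correct.
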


\noindent
The rest of the section is devoted to the proof of the above theorem.

\paragraph{High-level idea:} The algorithm is a suitable reduction to the task of exact-MME over integers (\cref{thm:exact-mme-over-integers}). We will replace each of the real numbers by appropriately chosen approximations of the form $\sfrac{a_i}{2^k}$ (for a suitable large $k = O(t)$) so that the evaluations of the perturbed polynomial at the perturbed points are not too far from the original evaluations. Since we now have all denominators of the form $2^k$, we can \emph{clear} the denominators and reduce to the case of computing MME over integers. 

As expected, there are some subtleties that need to be addressed to make sure that the entire algorithm runs in nearly-linear time. 

\subsection*{Rounding coefficients of $f$}

Let $k$ be a parameter to be chosen shortly. Define the polynomial $\round{f}{k}$ as 
\[
    \round{f}{k}(x_1,\ldots, x_m) := \sum_{\vece} \round{f_\vece}{k} \cdot \vecx^{\vece}. 
\]

\begin{observation}[Error due to rounding coefficients of $f$]\label{obs:error-rounding-coeffs}
    For any $\veca \in (-1,1)^m$, we have that 
    \[
    \abs{f(\veca) - \round{f}{k}(\veca)} \leq \sfrac{1}{2^{k-m\log d}}.
    \]
\end{observation}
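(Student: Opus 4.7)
The plan is to expand $f(\veca) - \round{f}{k}(\veca)$ monomial-by-monomial using linearity, then bound each term by combining two simple facts: the approximation oracle guarantee for each coefficient, and the assumption that every coordinate of $\veca$ has absolute value less than one.

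Concretely, I would first write
\[
    f(\veca) - \round{f}{k}(\veca) = \sum_{\vece \in \set{0,\ldots,d-1}^m} (f_\vece - \round{f_\vece}{k}) \cdot \veca^{\vece}.
\]
By the definition of $\round{\cdot}{k}$ as produced by the approximation oracle (\cref{def:approx oracles}), we have $\abs{f_\vece - \round{f_\vece}{k}} < \sfrac{1}{2^k}$ for every exponent vector $\vece$. Meanwhile, since each coordinate of $\veca$ lies in $(-1,1)$, we have $\abs{\veca^\vece} \leq 1$ for every $\vece$. The number of monomials is exactly $d^m$, so the triangle inequality yields
\[
    \abs{f(\veca) - \round{f}{k}(\veca)} \leq \sum_{\vece} \abs{f_\vece - \round{f_\vece}{k}} \cdot \abs{\veca^\vece} \leq d^m \cdot \sfrac{1}{2^k} = \sfrac{1}{2^{k - m\log d}}.
\]

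There is essentially no obstacle here: the argument is a one-line triangle inequality once the two bounds above are in place. The only very mild point to note is that the bound $\abs{\veca^\vece} \leq 1$ uses $\veca \in (-1,1)^m$ (and would already fail on the closed box, let alone outside it), which is exactly the restriction motivated in \cref{remark:absolute-value-constants}. The $m \log d$ loss in the exponent is the inevitable cost of having $d^m$ monomials contribute their individual errors; the algorithm will later simply pick $k$ to be $t + m \log d + O(1)$ so that the final error bound matches the target precision $\sfrac{1}{2^t}$.
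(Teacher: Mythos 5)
Your proof is correct and takes exactly the same route as the paper: expand the difference monomial-by-monomial, apply the triangle inequality, bound each coefficient error by $\sfrac{1}{2^k}$ and each $\abs{\veca^\vece}$ by $1$, and multiply by the $d^m$ monomial count. Nothing to add.
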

\begin{proof}
    \begin{align*}
        f(\veca) - \round{f}{k}(\veca) & = \sum_{\vece} (f_\vece - \round{f_\vece}{k}) \cdot \veca^\vece\\
        \implies \abs{f(\veca) - \round{f}{k}(\veca)} & \leq \sum_{\vece} \abs{f_\vece - \round{f_\vece}{k}} \cdot \abs{\veca^\vece} \leq d^m \cdot \sfrac{1}{2^k}.\qedhere
    \end{align*}
\end{proof}

\subsection*{Rounding points}

Let $k$ be a parameter to be chosen shortly. For any $\veca = (a_1,\ldots, a_m) \in (-1,1)^m$, define $\round{\veca}{k}$ as
\[
    \round{\veca}{k} := \inparen{\round{a_1}{k}, \ldots, \round{a_m}{k}}.
\]

\begin{observation}[Error due to rounding points]\label{obs:error-rounding-points}
    Let $\vece = (e_1,\ldots, e_m) \in \set{0,\ldots, d-1}^m$ and  $\veca \in (-1,1)^m$. Suppose $k \in \N$ such that $2^k > 4d^2 m^2$. Then, 
    \[
    \abs{\veca^\vece - \round{\veca}{k}^{\vece}} \leq \sfrac{1}{2^{k - \log(4dm)}}
    \]
\end{observation}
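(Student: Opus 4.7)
The plan is to run a telescoping argument across the $m$ coordinates: replace $a_i$ by $\hat{a}_i := \round{a_i}{k}$ one index at a time, and control each single-coordinate error using the standard factorization together with the binomial estimate from \cref{lem:bounds-binomial}. Concretely, I would define the hybrid products
\[
    P_i := \hat{a}_1^{e_1}\cdots \hat{a}_i^{e_i}\cdot a_{i+1}^{e_{i+1}}\cdots a_m^{e_m} \qquad (i = 0, 1, \ldots, m),
\]
so that $P_0 = \veca^{\vece}$ and $P_m = \round{\veca}{k}^{\vece}$, and then bound $\abs{\veca^{\vece} - \round{\veca}{k}^{\vece}} \leq \sum_{i=1}^m \abs{P_i - P_{i-1}}$ by the triangle inequality.

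For each term, I would use the identity
\[
    \hat{a}_i^{e_i} - a_i^{e_i} = (\hat{a}_i - a_i)\sum_{j=0}^{e_i-1}\hat{a}_i^{e_i-1-j}a_i^{j},
\]
together with $\abs{a_i - \hat{a}_i} < 1/2^k$ and $\abs{a_j}, \abs{\hat{a}_j} \leq 1 + 1/2^k =: M$, to obtain $\abs{P_i - P_{i-1}} \leq (d/2^k)\cdot M^{dm}$. The only real computation is bounding $M^{dm}$, which is where the hypothesis $2^k > 4d^2m^2$ is used: it guarantees $1/2^k < 1/(dm)^2$, so \cref{lem:bounds-binomial} applied with $\epsilon = 1/2^k$ and exponent $dm$ yields
\[
    M^{dm} \leq 1 + \frac{dm}{2^k} + \frac{(dm)^2}{2^{2k}} \leq 1 + \frac{2dm}{2^k} \leq 2,
\]
where the last two steps use $2^k \geq dm$ and $2^k \geq 4dm$ respectively, both implied by $2^k > 4d^2m^2$.

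Putting these together gives $\abs{P_i - P_{i-1}} \leq 2d/2^k$ for each $i$, and summing over $i \in [m]$ yields $\abs{\veca^{\vece} - \round{\veca}{k}^{\vece}} \leq 2dm/2^k \leq 4dm/2^k = 1/2^{k - \log(4dm)}$, which is the claimed bound. I do not anticipate any real obstacle here; the only bookkeeping that needs care is making sure the constant $4dm$ in the final exponent is actually attained, and the factor-of-$2$ slack in $M^{dm} \leq 2$ is precisely what provides the room. A slightly tighter analysis (using $\abs{\hat a_i} \leq 1$, which is the case when $\round{\cdot}{k}$ comes from the approximation oracle of \cref{def:approx oracles}) would in fact yield the sharper $dm/2^k$, but the looser bound above suffices for the downstream use in \cref{sec:approx mme}.
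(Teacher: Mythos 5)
Your proof is correct, and it follows a genuinely different route from the paper's. The paper expands the product $(a_1+\delta_1)^{e_1}\cdots(a_m+\delta_m)^{e_m}$ via the multinomial theorem in one shot, bounds the absolute value of the error sum by $\prod_{i=1}^m(1+\abs{\delta_i})^{e_i} - 1 \leq (1+\sfrac{1}{2^k})^{dm} - 1$ (using $\abs{a_i}<1$ to discard the $a_i^{e_i-j_i}$ factors), and then applies \cref{lem:bounds-binomial} twice, first with exponent $d$ and then with exponent $m$, to reach $4dm/2^k$. Your telescoping hybrid argument instead swaps one coordinate at a time, controls each swap via the elementary factorization of $x^e - y^e$, and invokes \cref{lem:bounds-binomial} only once (with exponent $dm$) to show $(1+\sfrac{1}{2^k})^{dm} \leq 2$; the hypothesis $2^k > 4d^2m^2$ is used in the same way in both arguments, to license the lemma and absorb the quadratic term. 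The two decompositions deliver the same constant $4dm$. Yours is arguably a bit more modular, isolating the contribution of each coordinate, while the paper's is more compressed. Your closing remark is also right: since $\round{\cdot}{k}$ from \cref{def:approx oracles} always lies in $[-1,1]$, one can take $M=1$ and obtain the sharper $dm/2^k$; the stated $4dm/2^k$ is exactly what the choice of $k_2$ in \cref{sec:approx mme} budgets for, so nothing downstream needs the improvement.
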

\begin{proof} Note that all $a_i \in (-1,1)$. Let $\delta_i = \round{a_i}{k} - a_i$ for $i\in [m]$; we have that  $\abs{\delta_i} \leq \sfrac{1}{2^k} \leq \sfrac{1}{4d^2m^2}$. Hence,
    \begin{align*}
        \round{a_1}{k}^{e_1} \cdots \round{a_m}{k}^{e_m} & = (a_1 + \delta_1)^{e_1} \cdots (a_m + \delta_m)^{e_m}\\
        & = a_1^{e_1} \cdots a_m^{e_m} + \sum_{\substack{j_1 \leq e_1, \ldots, j_m \leq e_m\\\text{not all $j_i = 0$}}} \binom{e_1}{j_1} \cdots \binom{e_m}{j_m} \cdot \prod_{i=1}^m \inparen{a_i^{e_i - j_i} \cdot \delta_i^{j_i}}\\
    \end{align*}
    \begin{align*}
        \implies \abs{\round{a_1}{k}^{e_1} \cdots \round{a_m}{k}^{e_m} - a_1^{e_1}\cdots a_m^{e_m}} &\leq 
        \abs{\sum_{\substack{j_1 \leq e_1, \ldots, j_m \leq e_m\\\text{not all $j_i = 0$}}} \binom{e_1}{j_1} \cdots \binom{e_m}{j_m} \cdot \prod_{i=1}^m \delta_i^{j_i}}\\
        & \leq \abs{\prod_{i=1}^m \inparen{1 + \delta_{j_i}}^d - 1}\\
        & \leq (1 + 2d(\sfrac{1}{2^k}))^m - 1 \leq 4dm(\sfrac{1}{2^k}). \quad \text{(\cref{lem:bounds-binomial})}\qedhere
    \end{align*}
\end{proof}

\subsection*{Handling the case when number of variables is too small}

To make the variables suggestive, we will rename them and say $f(x_1,\ldots, x_c)$ is a $c$-variate polynomial in $\R_{(-1,1)}[x_1,\ldots, x_c]$ with degree in each variable less than $D$. We wish to evaluate the polynomial on points $\veca^{(1)}, \ldots, \veca^{(N)} \in (-1,1)^c$. 

Once again, let $d = \floor{\log D}$ and let $m$ be the smallest integer such that $d^m > D$. Note that $d^m > D \geq d^{m-1}$ and $m = \Theta(\sfrac{\log D}{\log\log D})$. Define the polynomial $g(y_{1,1}, \ldots, y_{c,m}) = \Phi_{d,m;c}^{-1}(f)$, as defined in \cref{defn:kronecker}. Define $\widetilde{\veca^{(i)}} = \psi_{d,m;c}(\veca^{(i)})$. From \cref{obs:kronecker-evaluations}, we have that $f(\veca^{(i)}) = g(\widetilde{\veca^{(i)}})$ for all $i \in [N]$. 

Even if $\veca^{(i)}$ consisted of only rational numbers, unlike the setting in \cref{thm:nearly-linear-MME-finite-fields} where we could use \cref{obs:kronecker-ff-fast}, the rational numbers in $\widetilde{\veca^{(i)}}$ have much larger bit complexity due to the exponentiation. However, by \cref{lem:approximation-oracle-exponentiation}, we have efficient approximation oracles for $\widetilde{\veca^{(i)}}$ and that suffices for our algorithm.

\subsection{Reduction to exact-MME over integers}

From the previous subsection, we may now assume without loss of generality that we are working with an $m$-variate polynomial $f(x_1,\ldots, x_n)$ of individual degree less than $d$, with both $m,d$ as growing parameters, and wish to evaluate this polynomial on $N$ points $\veca^{(1)}, \ldots, \veca^{(N)} \in (-1,1)^m$, with all coefficients and coordinates provided via approximation oracles running in time $\tilde{O}(k + O(m \log d))$. We wish to compute integers $b_1,\ldots, b_N$ such that $\abs{f(\veca^{(i)})- \sfrac{b_i}{2^t}} < \sfrac{1}{2^t}$. We now describe the algorithm (\cref{alg:approx-mme-real}). 

\begin{algorithm}
    \caption{\texttt{approximate-MME-Reals}}
    \label{alg:approx-mme-real}
    \SetKwInOut{Input}{Input}\SetKwInOut{Output}{Output}

    \Input{An $m$-variate polynomial $f(x_1,\ldots, x_m) \in \R_{(-1,1)}[\vecx]$ of individual degree less than $d$, and points $\veca^{(1)}, \ldots, \veca^{(N)} \in \R_{(-1,1)}^m$ (with all real numbers provided via approximation oracles) and an integer $t > 0$.}
    \Output{Integers $b_1,\ldots, b_N$ such that $\abs{f(\veca^{(i)}) -  \sfrac{b_i}{2^t}} < \sfrac{1}{2^t}$ for all $i \in [N]$.}
    \BlankLine

    \If{$m < \log\log d$}{\label{alg:approx-mme-real:if-kronecker}
        Let $d' = \floor{\log d}$ and $m'$ be the smallest integer such that $(d')^{m'} > d$. 

        Replace $f$ by $\Phi_{d',m';m}^{-1}(f)$ and each $\veca^{(i)}$ by $\psi_{d',m'; m}(\veca^{(i)})$. \label{alg:approx-mme-real:kronecker}
    }

    \BlankLine

    Let $k_1 = \ceil{t + m\log d + 2}$ and $k_2 = \ceil{t + m\log d + \log(4md) + 2}$; let $k = \max(k_1,k_2) = k_2$. 

    \BlankLine

    Compute $\round{f}{k_1} = \sum_\vece \sfrac{g_{\vece,k_1}}{2^{k_1}} \cdot \vecx^{\vece} = \sfrac{1}{2^{k_1}} \cdot \sum_\vece g_{\vece,k_1} \cdot \vecx^{\vece}$. \label{alg:approx-mme-real:round-f}

    \For{$i \in [N]$}{\label{alg:approx-mme-real:for-round-points}
        Compute $\round{\veca^{(i)}}{k_2} = (\sfrac{a_{i,1,k_2}}{2^{k_2}}, \ldots, \sfrac{a_{i,m,k_2}}{2^{k_2}}) = \sfrac{1}{2^{k_2}} \cdot (a_{i,1,k_2}, \ldots, a_{i,m,k_2})$.

        Let $\widehat{\veca^{(i)}} = (a_{i,1,k_2}, \ldots, a_{i,m,k_2})$.\label{alg:approx-mme-real:round-points}
    }

    \BlankLine

    Compute the polynomial $G(x_1,\ldots, x_m)$ defined as
    \[
        G(x_1,\ldots, x_n)  = \sum_{\vece \in \set{0,\ldots,d-1}^m} g_{\vece, k_1} \cdot 2^{(k_2 d m) - k_2\abs{\vece}} \cdot \vecx^\vece
    \]
    where $\abs{\vece}$ refers to the sum of the coordinates (i.e., the degree of the monomial $\vecx^{\vece}$). \label{alg:approx-mme-real:compute-G}

    Run \cref{alg:exact-mme-integers} (\texttt{Exact-MME-integers}) with inputs $\inparen{G, \inparen{\widehat{\veca^{(1)}}, \ldots, \widehat{\veca^{(N)}}}, s = 3kdm}$ to obtain $B_1,\ldots, B_N$ such that, for all $i \in [N]$, we have
    \[
        B_i = G(\widehat{\veca^{(i)}}).
    \]
    Let $b_i = \round{\sfrac{B_i}{2^{k_1 + k_2 d m - t}}}{}$ for each $i \in [N]$. \label{alg:approx-mme-real:call-mme-integers}

    \Return{$\inparen{b_1,\ldots, b_N}$.}
\end{algorithm}

\paragraph{Proof of correctness:} 
Without loss of generality, we may assume that $d,m$ are growing parameters (from \crefrange{alg:approx-mme-real:if-kronecker}{alg:approx-mme-real:kronecker}).

Note that for any $\veca^{(i)}$, we have
\begin{align*}
\abs{f(\veca^{(i)}) - \round{f}{k_1}(\round{\veca^{(i)}}{k_2})} & \leq \abs{f(\veca^{(i)}) - \round{f}{k_1}(\veca^{(i)})} + \abs{\round{f}{k_1}(\veca^{(i)}) - \round{f}{k_1}(\round{\veca^{(i)}}{k_2})}\\
\leq \sfrac{1}{2^{t+2}} + \sfrac{1}{2^{t+2}} \leq \sfrac{1}{2^{t+1}}.
\end{align*}
\SGnote{Not sure why cref is unable to combine}
where the last inequality uses \cref{obs:error-rounding-coeffs} and \cref{obs:error-rounding-points} with our choice of $k_1$ and $k_2$. Thus, it suffices to compute $\round{f}{k_1}\inparen{\round{\veca^{(i)}}{k_2}}$ for each $i \in [N]$. The polynomial $\round{f}{k_1}$ is computed in \cref{alg:approx-mme-real:round-f} and the points $\round{\veca^{(i)}}{k_2}$ are computed in \crefrange{alg:approx-mme-real:for-round-points}{alg:approx-mme-real:round-points}. Let $\widehat{\veca^{(i)}} = 2^{k_2} \round{\veca^{(i)}}{k_2} \in (-2^{k_2}, 2^{k_2})^m$. 

\noindent
Since each coefficient of $2^{k_1} \cdot \round{f}{k_1}$ is bounded in magnitude by $2^{k_1}$, we have
\begin{align*}
    \abs{G(\widehat{\veca^{(i)}})} & = \abs{\sum_{\vece \in \set{0,\ldots,d-1}^m} g_{\vece, k_1} \cdot 2^{(k_2 d m) - k_2\abs{\vece}} \cdot \widehat{\veca^{(i)}}^\vece} \leq d^m \cdot 2^{k_1} \cdot 2^{k_2 dm} \cdot 2^{k_2 d m} \leq 2^{3kdm}. 
\end{align*}
From the definition of $G(x_1,\ldots, x_m)$, note that
\begin{align*}
    G(\widehat{\veca^{(i)}}) & = \sum_{\vece \in \set{0,\ldots,d-1}^m} g_{\vece, k_1} \cdot 2^{(k_2 d m) - k_2\abs{\vece}} \cdot \widehat{\veca^{(i)}}^\vece\\
    & = \sum_{\vece \in \set{0,\ldots,d-1}^m} g_{\vece, k_1} \cdot 2^{(k_2 d  m)} \cdot \inparen{\sfrac{1}{2^{k_2}} \cdot \widehat{\veca^{(i)}}}^\vece\\
    & = 2^{(k_2 \cdot d \cdot m)} \cdot \sum_{\vece \in \set{0,\ldots,d-1}^m} g_{\vece, k_1} \cdot \round{\veca^{(i)}}{k_2}^\vece\\
    & = 2^{k_1 + k_2 d m} \cdot \round{f}{k_1}(\round{\veca^{(i)}}{k_2}).
\end{align*}
Since \cref{thm:exact-mme-over-integers} correctly computes the evaluations of $G(\vecx)$ on $\widehat{\veca^{(i)}}$'s, we have we have for each $i \in [N]$
\[
    \sfrac{1}{2^{k_1 + k_2dm}} \cdot G\inparen{\widehat{\veca^{(i)}}} = \round{f}{k_1}(\round{\veca^{(i)}}{k_2}) = \sfrac{B_i}{2^{k_1 + k_2 d m}}. 
\]
Finally, if $b_i = \round{\sfrac{B_i}{2^{k_1 + k_2 d m - t}}}{}$, then 
    $$\abs{\sfrac{b_i}{2^t} - \sfrac{B_i}{2^{k_1 + k_2 d m}}} = \sfrac{1}{2^t} \cdot \abs{b_i - \sfrac{B_i}{2^{k_1 + k_2 d m - t}}} \leq \sfrac{1}{2^{t+1}}.$$
\noindent
Hence, $$ \abs{f(\veca^{(i)}) - \sfrac{b_i}{2^t}} \leq \abs{f(\veca^{(i)}) - \round{f}{k_1}(\round{\veca^{(i)}}{k_2})} +  \abs{\round{f}{k_1}(\round{\veca^{(i)}}{k_2}) - \sfrac{b_i}{2^t}} \leq \sfrac{1}{2^t}.$$

\paragraph{Running time analysis:}

After \crefrange{alg:approx-mme-real:if-kronecker}{alg:approx-mme-real:kronecker}, we may assume that $d,m = \omega(1)$ and all coefficients of $f$ and coordinates of points are provided via approximation oracles with running time $\tilde{O}(r + m\log d)$ to compute an $r$-bit approximation. 

\crefrange{alg:approx-mme-real:round-f}{alg:approx-mme-real:round-points} overall takes time 
\[
    (d^m + Nm) \cdot \tilde{O}(k + O(m\log d)) = (d^m + Nm) \cdot \tilde{O}(t + O(m\log d)) = ((d^m + Nm) \cdot t)^{1 + o(1)}.
\]
Computing the coefficients of $G(\vecx)$ takes time $(d^m) \cdot \tilde{O}(kdm)$. By \cref{thm:exact-mme-over-integers}, \cref{alg:approx-mme-real:call-mme-integers} takes time 
\[
    ((d^m + Nm) \cdot 3kdm)^{1 + o(1)} = ((d^m + Nm) \cdot t)^{1 + o(1)}.
\]
Therefore, \cref{alg:approx-mme-real} takes $((d^m + Nm) \cdot t)^{1 + o(1)}$ overall. \\

\noindent
This completes the proof of \cref{thm:approx-MME-reals}. \qed

\section{Exact-MME over rationals with known output complexity}\label{sec:exact mme rationals}
We now use our algorithm for approximate-MME over real numbers to obtain a fast algorithm for exact-MME over the field of rational numbers. We start by formally stating the precise problem that we solve and then build upon some necessary preliminaries that we need for our algorithm. 

\subsection{The problem statement}

\paragraph{Input:}A polynomial $f(x_1,\ldots, x_m) \in \Q_{(-1,1)}[x_1,\ldots, x_m]$ of individual degree less than $d$, given as a list of $d^m$, a list of points $\veca^{(1)}, \ldots, \veca^{(N)} \in \Q_{(-1,1)}^m$, an integer parameter $s > 0$ such that all rational numbers in the coefficients of $f$, the coordinates of points and evaluations $f(\veca^{(i)})$ are expressible as rational numbers of the form $\sfrac{p}{q}$ with $\abs{p}, \abs{q} < 2^s$. 

\paragraph{Output:} Integers $b_1,\ldots, b_N, c_1,\ldots, c_N$ such that $f(\veca^{(i)}) = \sfrac{b_i}{c_i}$ for all $i \in [N]$. 

\begin{theorem}[Exact-MME over rationals] \label{thm:exact-mme-rationals}
    There is a deterministic algorithm (namely \cref{alg:exact-mme-rationals}) that on input as mentioned above returns the required output as mentioned above and runs in time $((d^m + Nm) \cdot s)^{1 + o(1)}$ for all $m \in \N$ and sufficiently large $d \in \N$. 
\end{theorem}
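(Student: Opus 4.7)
The plan is to reduce to approximate-MME (\cref{thm:approx-MME-reals}) followed by a rational-reconstruction step, exactly as outlined in \cref{sec: exact MME overview}. First, I would set the accuracy parameter $t := 2s + 3$ and invoke \cref{alg:approx-mme-real} on $f$ and $\veca^{(1)},\ldots,\veca^{(N)}$. Each input rational $\sfrac{p}{q}$ with $|p|, |q| < 2^s$ is fed to the approximation-oracle interface through the obvious oracle that, on query $k$, returns $\round{\sfrac{p\cdot 2^k}{q}}{} \in [-2^k,2^k]$ using fast integer division in $\tilde{O}(k + s)$ time. Since the approximate-MME algorithm queries its oracles only at precision $O(t + m\log d) = O(s + m\log d)$, the overall cost of this step remains $((d^m+Nm)\cdot s)^{1+o(1)}$.

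The output is a list of integers $B_1,\ldots,B_N$ satisfying $\abs{f(\veca^{(i)}) - \sfrac{B_i}{2^t}} < \sfrac{1}{2^t}$ for every $i$. By the input promise, $f(\veca^{(i)}) = \sfrac{b_i}{c_i}$ for some coprime integers with $|b_i|,|c_i| < 2^s$. Any two distinct rationals of height less than $2^s$ differ by at least $\sfrac{1}{2^{2s}}$, so our choice of $t$ guarantees that $\sfrac{b_i}{c_i}$ is the \emph{unique} rational of height less than $2^s$ that lies within $\sfrac{1}{2^t}$ of $\sfrac{B_i}{2^t}$; in particular $\abs{\sfrac{B_i}{2^t} - \sfrac{b_i}{c_i}} < \sfrac{1}{2c_i^2}$.

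For each $i$ I would then run a fast rational-reconstruction subroutine on the pair $(B_i, 2^t)$ to recover $(b_i, c_i)$. A classical theorem of Legendre asserts that if $\abs{x - \sfrac{p}{q}} < \sfrac{1}{2q^2}$ with $\gcd(p,q) = 1$, then $\sfrac{p}{q}$ must appear as a convergent of the (finite) continued-fraction expansion of $x$. Using the standard half-GCD-based algorithm for computing continued fractions (see, e.g., Chapter 11 of \cite{GathenG-MCA}), the complete list of $O(s)$ convergents of $\sfrac{B_i}{2^t}$ can be produced in $\tilde{O}(s)$ time; we then scan this list and output the unique convergent satisfying $|b_i|,|c_i| < 2^s$. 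Summed over the $N$ points, this contributes $\tilde{O}(Ns) = ((d^m+Nm)\cdot s)^{1+o(1)}$ to the running time.

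The main obstacle is essentially parameter bookkeeping: one must choose $t$ large enough that the target evaluation $\sfrac{b_i}{c_i}$ lies strictly inside the ``Legendre window'' around $\sfrac{B_i}{2^t}$, yet keep $t$ linear in $s$ so that the approximate-MME call itself runs in nearly-linear time. No new analytic content is required beyond (i) the separation bound $\abs{\sfrac{p_1}{q_1} - \sfrac{p_2}{q_2}} \geq \sfrac{1}{|q_1 q_2|}$ for distinct rationals, and (ii) Legendre's criterion together with the fast half-GCD-based computation of convergents.
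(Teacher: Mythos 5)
Your proposal follows the same route as the paper: run approximate-MME to precision $t = \Theta(s)$ and then recover each evaluation by rational-number reconstruction via continued fractions and Legendre's criterion (\cref{lem:convergent-properties}~\cref{lem:convergent-properties:good-approx-implies-conv}), which is exactly what \cref{alg:exact-mme-rationals} together with \cref{lem:fast-rnr} does. Your parameter choice $t=2s+3$ is slightly more generous than the paper's $t=2s+1$; both work since the exact evaluation $\sfrac{b_i}{c_i}$ is $0$-close to $f(\veca^{(i)})$, so the rounding error alone is below the $\sfrac{1}{2c_i^2}$ Legendre threshold. One small slip: you claim the ``complete list of $O(s)$ convergents can be produced in $\tilde{O}(s)$ time'' and then scanned. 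In the worst case (e.g.\ all partial quotients equal to $1$) there are $\Theta(s)$ convergents each with $\Theta(s)$-bit numerator and denominator, so writing them all down — let alone scanning them — costs $\Theta(s^2)$ per point, which would push your per-point cost to $\tilde{O}(s^2)$ and break the claimed nearly-linear bound when $s$ is large relative to $m\log d$. The paper sidesteps this by keeping only the compact quotient sequence (total size $O(s)$ bits) and binary-searching for the largest index $i$ with $b_i < 2^s$, evaluating only $O(\log s)$ candidate convergents via \cref{cor:fast-computation-convergents}, each in $\tilde{O}(s)$ time. This is a local implementation fix; the conceptual structure of your argument matches the paper's.
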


\paragraph{Main idea:} The main idea would be a reduction to approximate-MME (\cref{thm:approx-MME-reals}) followed by a \emph{rational number reconstruction} step. If we can compute $f(\veca^{(i)})$ to a reasonable degree of accuracy (depending on the output guarantee $s$), we can recover the rational number exactly from it. Before we present the algorithm for the above theorem, we discuss the notion of continued fractions which would be the key to reconstructing the rational number of interest. 

\subsection{Continued fractions, rational approximations, and extended Euclid's algorithm}

\begin{definition}[Continued fractions] 
    A \emph{finite continued fraction} expressed by a sequence of integers $[q_1,\ldots, q_t]$ computes the rational number
    \[
        q_1 + \frac{1}{q_2 + \frac{1}{\ddots + \frac{1}{q_{t-1} + \frac{1}{q_t}}}}.
    \]

    An \emph{infinite continued fraction} expressed by an infinite sequence of integers $[q_1,q_2,\ldots]$ satisfying\footnote{Traditionally, continued fractions with this condition are called `simple' continued fractions but we will drop this qualifier as we will only deal with continued fractions with this additional constraint.} $q_2,\ldots, q_n > 0$  is said to compute a real number $\alpha$ if
    \[
        \alpha = q_1 + \frac{1}{q_2 + \frac{1}{q_3 + \frac{1}{\ddots}}}.
    \]
    in the sense that $\lim_{n\rightarrow \infty} \abs{\alpha - [q_1,\ldots, q_n]} = 0$. 
\end{definition}

We note some basic properties of continued fractions which may be found in most standard texts (cf. Schmidt~\cite[Chapter 1]{Schmidt-diophantine}).

\begin{proposition}[Uniqueness of continued fractions (Lemma 4C, 4D in \cite{Schmidt-diophantine})]
    \label{prop:continued-fraction-unique}
    Every real number has a unique continued fraction expansion up to the following exceptions:
    \begin{enumerate}
        \item \label{item:continued-fraction-unique:integer} If $\alpha$ is an integer, then there are exactly two continued fraction representations for $\alpha$ namely $[\alpha]$ and $[\alpha-1, 1]$.
        \item \label{item:continued-fraction-unique:rational} If $\alpha$ is a non-integral rational number, then there are exactly two continued fraction representations for $\alpha$: one of the form $[q_1,\ldots, q_n]$ with $q_n \geq 2$, and $[q_1,\ldots, q_{n}-1, 1]$ being the other. 
        \item If $\alpha$ is irrational, then there is exactly one continued fraction representation for $\alpha$.
    \end{enumerate}
\end{proposition}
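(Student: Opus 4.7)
The plan is to prove existence and uniqueness together via the ``floor-and-reciprocate'' algorithm. Given a real $\alpha$, define $\alpha_1 = \alpha$ and iteratively set $q_i = \lfloor \alpha_i \rfloor$; if $\alpha_i = q_i$ we stop, otherwise set $\alpha_{i+1} = 1/(\alpha_i - q_i)$, which is strictly greater than $1$. For existence, I would show that this produces a valid continued fraction representation of $\alpha$: the process terminates at some step $n$ if and only if $\alpha$ is rational (by the Euclidean algorithm observation that the denominator written in lowest terms strictly decreases at each step), yielding $[q_1, \ldots, q_n]$; otherwise we get an infinite sequence whose convergents $[q_1, \ldots, q_n]$ satisfy $|[q_1, \ldots, q_n] - \alpha| \to 0$ via the standard bound in terms of the recurrence for numerators and denominators of convergents.

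For uniqueness, suppose $\alpha$ admits two representations $[q_1, q_2, \ldots]$ and $[q_1', q_2', \ldots]$. The key step is to establish $q_1 = q_1' = \lfloor \alpha \rfloor$ whenever neither representation is degenerate, and then to recurse on $1/(\alpha - q_1)$. Concretely, if a representation $[q_1, q_2, \ldots]$ is either infinite, or finite of length $n \geq 2$ with $q_n \geq 2$, then the tail $[q_2, q_3, \ldots]$ evaluates to a real number strictly greater than $1$, which forces $\alpha - q_1 \in (0,1)$ and hence $q_1 = \lfloor \alpha \rfloor$. The only ways this can fail are: (i) the representation has length $1$, in which case $\alpha$ is an integer and we obtain the ambiguity $[\alpha] = [\alpha - 1, 1]$; or (ii) the representation has length $n \geq 2$ and ends in $q_n = 1$, in which case collapsing the last two entries via $q_{n-1} + 1/1 = q_{n-1} + 1$ produces an equivalent representation $[q_1, \ldots, q_{n-1} + 1]$. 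These are exactly the two exceptions listed.

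For irrationals, termination would yield a rational $\alpha$, a contradiction, so only the infinite branch occurs; the argument above then pins down $q_i = \lfloor \alpha_i \rfloor$ uniquely at every level, giving claim~(3). For rationals, an induction on the length of the shorter of the two representations, combined with the above reduction, confirms that the only ambiguities are the ones identified in (1) and (2).

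The main obstacle is the sharp boundary claim: for any finite $[q_1, \ldots, q_n]$ with $n \geq 2$ and $q_n \geq 2$, the tail $[q_2, \ldots, q_n]$ evaluates to something \emph{strictly} greater than $1$ (not merely $\geq 1$). This strict inequality is the linchpin that forces $q_1 = \lfloor \alpha \rfloor$ and so rules out all non-listed duplications. I would verify it by a short induction on $n$ using the positivity constraint $q_i \geq 1$ for $i \geq 2$: the base case $n = 2$ gives tail $q_2 \geq 2 > 1$; for $n \geq 3$, the tail equals $q_2 + 1/[q_3, \ldots, q_n]$, and since $[q_3, \ldots, q_n] \geq 1$ the tail is at least $q_2 + $ something positive, hence $> 1$. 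Once this lemma is in hand, the rest of the argument proceeds cleanly by induction.
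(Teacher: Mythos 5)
The paper cites this proposition directly from Schmidt's text without giving a proof, so there is no in-paper argument to compare against; I will assess your proposal on its own. Your floor-and-reciprocate argument is the standard textbook proof and is essentially correct. You have correctly identified the load-bearing lemma --- that the tail of any non-degenerate representation is strictly greater than one --- and your induction establishing it is sound: for $n \geq 3$ the tail is $q_2 + 1/[q_3,\ldots,q_n]$, and $[q_3,\ldots,q_n] \geq q_3 \geq 1$ gives the strict bound since $q_2 \geq 1$ and the reciprocal is positive. The remaining case analysis (a degenerate representation either has length one, or ends in $1$ and collapses by rewriting $[q_{n-1},1]$ as $[q_{n-1}+1]$) then yields all three clauses after peeling off floors recursively.

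One step worth making explicit, because it underlies the well-foundedness of your ``induction on the shorter representation'' for rationals, is that a rational $\alpha$ has \emph{no} infinite continued fraction representation. This falls out of your own machinery but is never stated: the recursion pins $q_i = \lfloor \alpha_i \rfloor$ at every level; for rational $\alpha$ the Euclidean-algorithm observation forces $\alpha_k$ to be an integer at some finite $k$; and an infinite representation would then give $\alpha_k - q_k = 1/[q_{k+1},\ldots] \in (0,1)$, contradicting $\alpha_k \in \Z$. Once that is made explicit, the rest of your bookkeeping also recovers the condition $q_n \geq 2$ in clause (2): the degenerate length-one case occurring at a level $i \geq 2$ has $\alpha_i = 1/(\alpha_{i-1}-q_{i-1}) > 1$, so $q_n = \alpha_i \geq 2$. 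These are write-up omissions rather than gaps in the approach; the proposal is correct.
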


\begin{definition}[Convergents]
    For a real number $\alpha$ with $[q_1,q_2,\ldots]$ being the unique\footnote{As a convention, for rational numbers, we will only consider continued fraction representations of the first form described in \cref{prop:continued-fraction-unique}~\cref{item:continued-fraction-unique:integer,item:continued-fraction-unique:rational}.}, the rational number $\sfrac{a_i}{b_i}$ corresponding to the $i$-th prefix $[q_1,\ldots, q_i]$ is called \emph{the $i$-th convergent of $\alpha$}. 
\end{definition}

\begin{lemma}[Properties of convergents]\label{lem:convergent-properties}
    Suppose $\set{\sfrac{a_i}{b_i}}_i$ be the convergents of a real number $\alpha = [q_1,q_2,\ldots]$. Then
    \begin{enumerate}
        \item \label{lem:convergent-properties:denominator} For any $n \geq 3$, we have
        \begin{align*}
            a_n & = q_n a_{n-1} + a_{n-2},\\
            b_n & = q_n b_{n-1} + b_{n-2}.
        \end{align*}
        In particular, the denominator sequence $\set{b_n}_{n\geq 2}$ is increasing.
        \item \label{lem:convergent-properties:successive-error} For all $n\geq 1$, 
        \begin{align*}
            \frac{a_{n+1}}{b_{n+1}} - \frac{a_n}{b_n} = \frac{(-1)^{n-1}}{b_n (q_{n+1} b_n + b_{n-1})} = \frac{(-1)^{n-1}}{b_n b_{n+1}}.
        \end{align*}
        \item \label{lem:convergent-properties:error} For any $n \geq 1$, unless $\alpha = \frac{a_n}{b_n}$, we have
        \[
            \frac{1}{b_n (b_{n} + b_{n+1})} \leq \abs{\alpha - \frac{a_n}{b_n}} \leq \frac{1}{b_n b_{n+1}}.
        \]
        \item \label{lem:convergent-properties:good-approx-implies-conv} Suppose $\sfrac{a}{b}$ is a rational number satisfying $\abs{\alpha - \sfrac{a}{b}} < \sfrac{1}{2b^2}$
        Then, $\sfrac{a}{b}$ is one of the convergents of $\alpha$.
    \end{enumerate}
\end{lemma}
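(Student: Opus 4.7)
For part~\ref{lem:convergent-properties:denominator}, I would proceed by induction on $n$, taking the base values $a_0 = 1, b_0 = 0$ so that the familiar cases $a_1 = q_1, b_1 = 1$ and $a_2 = q_1 q_2 + 1, b_2 = q_2$ are consistent. The cleanest formalism is the matrix identity
\[
    \begin{pmatrix} a_n & a_{n-1} \\ b_n & b_{n-1} \end{pmatrix} = \begin{pmatrix} q_1 & 1 \\ 1 & 0 \end{pmatrix} \begin{pmatrix} q_2 & 1 \\ 1 & 0 \end{pmatrix} \cdots \begin{pmatrix} q_n & 1 \\ 1 & 0 \end{pmatrix},
\]
from which the recurrence is obtained by peeling off the last factor; monotonicity of $\{b_n\}_{n\geq 2}$ is automatic since $q_n \geq 1$ and $b_{n-2} \geq 1$ for $n \geq 3$. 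Part~\ref{lem:convergent-properties:successive-error} falls out of the same identity: each factor has determinant $-1$, so $a_n b_{n-1} - a_{n-1} b_n = (-1)^n$, and dividing by $b_n b_{n+1}$ (after shifting the index) produces the claimed signed identity.

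For part~\ref{lem:convergent-properties:error}, I would introduce the tail $\beta_{n+1} := [q_{n+1}, q_{n+2}, \ldots] \in \R$, which satisfies $q_{n+1} \leq \beta_{n+1} < q_{n+1} + 1$ (strictly, since otherwise the expansion of $\alpha$ would terminate at step $n$, yielding the excluded case $\alpha = a_n/b_n$). Extending the recurrence of part~\ref{lem:convergent-properties:denominator} to a real tail parameter gives $\alpha = (\beta_{n+1} a_n + a_{n-1})/(\beta_{n+1} b_n + b_{n-1})$, and combining with the determinant identity of part~\ref{lem:convergent-properties:successive-error} yields
\[
    \alpha - \frac{a_n}{b_n} = \frac{(-1)^{n-1}}{b_n(\beta_{n+1} b_n + b_{n-1})}.
\]
The two-sided bound then follows from the sandwich $b_{n+1} = q_{n+1} b_n + b_{n-1} \leq \beta_{n+1} b_n + b_{n-1} \leq (q_{n+1}+1)b_n + b_{n-1} = b_{n+1} + b_n$.

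Part~\ref{lem:convergent-properties:good-approx-implies-conv} is Legendre's classical theorem and is where essentially all the real care lives. Assuming $\gcd(a,b) = 1$, I would invoke \cref{prop:continued-fraction-unique} to choose the continued fraction representation $a/b = [q_1, \ldots, q_n]$ with $q_n \geq 2$, so that the final convergent $a_n/b_n$ of this finite expansion equals $a/b$ and $b_n = b$. Define $\omega \in \R$ by $\alpha = (\omega a_n + a_{n-1})/(\omega b_n + b_{n-1})$. Exactly as in part~\ref{lem:convergent-properties:error}, this gives $\abs{\alpha - a_n/b_n} = 1/(b_n(\omega b_n + b_{n-1}))$, and combining with the hypothesis $\abs{\alpha - a/b} < 1/(2 b_n^2)$ forces $\omega b_n + b_{n-1} > 2 b_n$, whence $\omega > 2 - b_{n-1}/b_n > 1$. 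Since $\omega > 1$, the sequence $[q_1, \ldots, q_n, \lfloor \omega \rfloor, \ldots]$ extends to a bona fide continued fraction representation of $\alpha$, so by uniqueness $a/b = a_n/b_n$ appears as one of the convergents of $\alpha$.

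\textbf{Main obstacle.} Parts~\ref{lem:convergent-properties:denominator}--\ref{lem:convergent-properties:error} are essentially mechanical unwindings of the recurrence and the determinant identity, so the real care is in part~\ref{lem:convergent-properties:good-approx-implies-conv}. The critical point there is the parity/uniqueness choice: one must select the representation of $a/b$ with $q_n \geq 2$ so that the continuation $[q_1, \ldots, q_n, \lfloor \omega \rfloor, \ldots]$ genuinely aligns with the unique continued fraction of $\alpha$ promised by \cref{prop:continued-fraction-unique}. Without this choice, $\omega$ could land at exactly $1$ (giving a non-convergent $a/b$) or in a range that breaks the inductive description of convergents of $\alpha$.
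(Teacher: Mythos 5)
Your proposal is more ambitious than the paper's own proof: the paper proves only part~\ref{lem:convergent-properties:error} and cites Schmidt~\cite[Lemma 3A, Lemma 3E, Theorem 5C]{Schmidt-diophantine} for parts~\ref{lem:convergent-properties:denominator}, \ref{lem:convergent-properties:successive-error}, and~\ref{lem:convergent-properties:good-approx-implies-conv}. Your argument for part~\ref{lem:convergent-properties:error}, via the real tail $\beta_{n+1}$ and the extended recurrence, matches the paper's argument essentially line for line, and your derivations of parts~\ref{lem:convergent-properties:denominator} and~\ref{lem:convergent-properties:successive-error} from the matrix identity are correct and standard.

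The gap is in part~\ref{lem:convergent-properties:good-approx-implies-conv}, and it is exactly the parity issue you half-identify but then resolve incorrectly. Always choosing the representation with $q_n \geq 2$ is \emph{not} the right choice; you must choose the representation (between the two allowed by \cref{prop:continued-fraction-unique}) whose length $n$ satisfies $(-1)^{n-1} = \operatorname{sign}(\alpha - \sfrac{a}{b})$. The step ``combining with the hypothesis $\abs{\alpha - \sfrac{a}{b}} < \sfrac{1}{2b_n^2}$ forces $\omega b_n + b_{n-1} > 2b_n$'' silently drops an absolute value: the identity $\alpha - \sfrac{a_n}{b_n} = \sfrac{(-1)^{n-1}}{b_n(\omega b_n + b_{n-1})}$ together with the hypothesis only gives $\abs{\omega b_n + b_{n-1}} > 2b_n$, and if the parity of $n$ does not match the sign of $\alpha - \sfrac{a}{b}$ then $\omega b_n + b_{n-1} < -2b_n$, forcing $\omega < 0$ and destroying the continuation $[q_1,\ldots, q_n, \lfloor\omega\rfloor, \ldots]$. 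A concrete failure: take $\alpha = \sfrac{51}{100}$ and $\sfrac{a}{b} = \sfrac{1}{2}$, for which $\abs{\alpha - \sfrac{1}{2}} = \sfrac{1}{100} < \sfrac{1}{8}$. Your choice $\sfrac{1}{2} = [0,2]$ (so $n=2$, $a_2/b_2 = 1/2$, $a_1/b_1 = 0/1$) yields $\omega = -25.5$, while the correct parity-matched choice $[0,1,1]$ yields $\omega = 24.5 > 1$ and recovers the genuine continued fraction $[0,1,1,24,2]$ of $\sfrac{51}{100}$. So the argument is salvageable, but the selection rule must be sign-dependent rather than the fixed ``$q_n \geq 2$'' convention.
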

\begin{proof}
    \cref{lem:convergent-properties:denominator,lem:convergent-properties:successive-error,lem:convergent-properties:good-approx-implies-conv} are just \cite[Lemma 3A, Lemma 3E, Theorem 5C]{Schmidt-diophantine} respectively. 

    For \cref{lem:convergent-properties:error}, if $\alpha \neq \sfrac{a_n}{b_n}$, we have that $q_{n+1}$ exists. Let $\alpha_{n+1} = [q_{n+1}, \ldots]$. Then, we may abuse notation and express $\alpha$ as the ``continued fraction'' $\alpha = [q_1, \ldots, q_n, \alpha_{n+1}]$. \cref{lem:convergent-properties:successive-error} for this expression yields
    \begin{align*}
        \abs{\alpha - \frac{a_n}{b_n}} & = \frac{1}{b_n (\alpha_{n+1} b_n + b_{n-1})}.
    \end{align*}
    Note that $q_{n+1} \leq \alpha_{n+1} \leq q_{n+1} + 1$ and hence 
    \begin{align*}
        \abs{\alpha - \frac{a_n}{b_n}} & = \frac{1}{b_n (\alpha_{n+1} b_n + b_{n-1})}\\
        & \leq\frac{1}{b_n (q_{n+1} b_n + b_{n-1})} = \frac{1}{b_n b_{n+1}}, \quad\text{(by \cref{lem:convergent-properties:denominator})}\\
        \text{and} \quad \abs{\alpha - \frac{a_n}{b_n}} & = \frac{1}{b_n (\alpha_{n+1} b_n + b_{n-1})}\\
        & \geq\frac{1}{b_n (q_{n+1} b_n + b_{n-1} + b_n)} = \frac{1}{b_n (b_n + b_{n+1})}.\qedhere
    \end{align*}
\end{proof}

\subsubsection*{Extended Euclid's Algorithm}

Closely related to continued fractions is the classical Extended Euclid's Algorithm for computing the greatest common divisor of two numbers. 

\begin{definition}[Remainder and quotient sequences] \label{defn:remainder-quotient-sequences}
    For a pair of integers $a, b > 0$, we define the \emph{remainder sequence} $\set{r_i}_{i = 0, \ldots, t+1}$ and the \emph{quotient sequence} $\set{q_i}_{i = 1, \ldots, t}$ for the pair $(a,b)$ as follows:
    \begin{itemize}\itemsep 0pt
        \item $r_0 = a$ and $r_1 = b$,
        \item For all $i \geq 1$, define $q_i, r_{i+1}$ as the quotient and remainder respectively when $r_{i-1}$ is divided by $r_i$. Thus,
        \[
            r_{i+1} = r_{i-1} \bmod{r_i} = r_{i-1} - q_i r_{i}.
        \]
        \item $r_{t+1}$ is the first element of the sequence that is equal to zero. \qedhere
    \end{itemize}
\end{definition}

\begin{observationwp}[Continued fractions for a rational number and quotient sequences]\label{obs:continued-fractions-eea}
    Suppose $a,b > 0$ are a pair of integers and $\set{q_1,\ldots, q_t}$ is the associated quotient sequence. Then, the continued fraction representation of the rational number $\sfrac{a}{b}$ is $[q_1,\ldots, q_t]$:
    \[
        \frac{a}{b} = q_1 + \frac{1}{q_2 + \frac{1}{\ddots + \frac{1}{q_t}}}.\qedhere
    \]
\end{observationwp}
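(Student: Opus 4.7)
The plan is to induct on the length $t$ of the quotient sequence. The base case $t=1$ corresponds to $b \mid a$, in which case $r_2 = 0$ and $q_1 = a/b$, so the continued fraction $[q_1]$ equals $a/b$ as required.

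For the inductive step, assume the claim holds whenever the quotient sequence has length less than $t$, and consider $(a,b)$ with quotient sequence $q_1, \ldots, q_t$ of length $t \geq 2$. The defining recurrence $r_{i-1} = q_i r_i + r_{i+1}$ at $i=1$ gives $a = q_1 b + r_2$ with $0 < r_2 < b$, hence
\[
    \frac{a}{b} \;=\; q_1 + \frac{r_2}{b} \;=\; q_1 + \frac{1}{b/r_2}.
\]
Now observe that the remainder sequence for the pair $(b, r_2)$ is exactly $r_1, r_2, r_3, \ldots, r_{t+1}$ (the tail of the original remainder sequence starting from $r_1$), and the corresponding quotient sequence is $q_2, q_3, \ldots, q_t$, which has length $t-1$. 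By the inductive hypothesis applied to $(b, r_2)$,
\[
    \frac{b}{r_2} \;=\; q_2 + \cfrac{1}{q_3 + \cfrac{1}{\ddots + \cfrac{1}{q_t}}}.
\]
Substituting this into the previous display yields the desired continued fraction expansion $a/b = [q_1, q_2, \ldots, q_t]$.

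There is no real obstacle here; the only mild care required is to confirm that the quotient sequence produced by the Euclidean algorithm on $(b, r_2)$ is literally the tail $(q_2, \ldots, q_t)$ of the original sequence, which is immediate from \cref{defn:remainder-quotient-sequences} since that definition only looks at consecutive pairs $(r_{i-1}, r_i)$. One may also note that the resulting continued fraction satisfies the positivity condition $q_2, \ldots, q_t > 0$ (since each $r_{i+1} < r_i$ forces $q_i \geq 1$ for $i \geq 2$) and that $q_t \geq 2$ unless $t = 1$ (since otherwise the algorithm would have terminated one step earlier), matching the canonical form from \cref{prop:continued-fraction-unique}.
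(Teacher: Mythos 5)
Your argument is correct. The paper states this as an observation without proof (the \texttt{observationwp} environment suppresses the proof), since the correspondence between the Euclidean quotient sequence and the continued fraction expansion of $\sfrac{a}{b}$ is a standard fact. Your induction on the length $t$ of the quotient sequence is exactly the canonical proof: the base case $t=1$ handles $b \mid a$, and the inductive step uses $\sfrac{a}{b} = q_1 + \sfrac{1}{(b/r_2)}$ together with the observation that the Euclidean algorithm on $(b, r_2)$ produces the tail $(q_2,\ldots,q_t)$, which follows directly from \cref{defn:remainder-quotient-sequences}. Your closing remark that $q_i \geq 1$ for $i \geq 2$ (since $0 < r_i < r_{i-1}$) and $q_t \geq 2$ for $t \geq 2$ correctly confirms that the resulting expansion is the canonical one singled out in \cref{prop:continued-fraction-unique}, which is the form the paper implicitly relies on downstream (e.g., in \cref{cor:fast-computation-convergents} and \cref{lem:fast-rnr}).
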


Computing the $\gcd$ of two given integers, and more generally computing the entire quotient sequence, can be done in deterministic nearly-linear time; this is attributed to Knuth and \Schonhage{} (cf. M\"oller~\cite{Moller2008} for a complete description and a detailed history).

\begin{theorem}[Fast Extended Euclid Algorithm (cf. M\"oller~\cite{Moller2008} )] \label{thm:fast-eea}
    There is a deterministic algorithm that, on input a pair of integers $a > b > 0$ with $a,b \leq 2^s$, computes the entire quotient sequence $q_1,\ldots, q_t$ for the pair $(a,b)$ in time $\tilde{O}(s)$. 
\end{theorem}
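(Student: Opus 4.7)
The plan is to implement the Knuth--Sch\"onhage ``half-GCD'' (HGCD) paradigm. The starting observation is that a single Euclidean step with quotient $q_i$ is encoded by left-multiplication by the $2\times 2$ unimodular integer matrix
\[
    S_i = \begin{pmatrix} 0 & 1 \\ 1 & -q_i \end{pmatrix},
\]
so that $(r_{i-1}, r_i)^{\top} \mapsto (r_i, r_{i+1})^{\top} = S_i \cdot (r_{i-1}, r_i)^{\top}$. Consequently, any block of consecutive Euclidean steps is captured by a product $M = S_{i_k} \cdots S_{i_1}$ of such matrices, and the quotient sequence can be read off from its factors. The goal is to compute, and keep track of, this matrix product in a divide-and-conquer fashion rather than ever materialising the $s^2$-sized intermediate remainders explicitly.

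The main engine is a recursive procedure $\mathrm{HGCD}(a,b)$ which, on input a pair with $\max(a,b) \leq 2^s$, returns (a compressed representation of) a matrix product $M$ such that the pair $M \cdot (a,b)^{\top}$ has its larger entry of magnitude roughly $2^{s/2}$. The recursion goes as follows: truncate $a$ and $b$ to their top $\approx s/2$ bits to form $a', b'$; recursively compute $M_1 = \mathrm{HGCD}(a', b')$; apply $M_1$ to the original $(a,b)$ using fast $O(s)$-bit integer multiplication (e.g.\ Sch\"onhage--Strassen); execute one explicit Euclidean step $S$ on the resulting pair; truncate again and recurse to obtain $M_2$; and return $M_2 \cdot S \cdot M_1$, kept as a product tree rather than expanded into a single matrix of large entries. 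The work outside recursive calls is $\tilde{O}(s)$, so the recurrence $T(s) = 2T(s/2) + \tilde{O}(s)$ solves to $T(s) = \tilde{O}(s)$.

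The main obstacle is justifying the truncation step: the matrix $M_1$ returned by the recursive call on $(a', b')$ must coincide with a genuine prefix of the Euclidean matrix product for $(a,b)$. The key lemma, essentially Lehmer's observation, is that if two input pairs agree in their top $k$ bits, then the sequence of quotients produced by the Euclidean algorithm is identical on both inputs for as long as the current pair has magnitude at least some threshold depending on $k$. The proof is a careful interval analysis: having observed only the top bits of the current pair, one maintains an interval of candidate values for the true ratio $r_{i-1}/r_i$ and argues that this interval stays short enough to pin down $q_i = \lfloor r_{i-1}/r_i \rfloor$ unambiguously until the remainders drop below the threshold. Unimodularity of each $S_i$ is used to control how this ``uncertainty window'' propagates across a block of steps, so that a cumulative error bound can be maintained throughout the recursion.

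Finally, to produce the entire quotient sequence for $(a,b)$, one wraps HGCD in an outer loop: call HGCD to halve the pair, perform a constant number of explicit Euclidean steps to break past its stopping point, and recurse on the resulting smaller pair. By \cref{lem:convergent-properties}\cref{lem:convergent-properties:denominator}, the denominators of the convergents satisfy $b_{i+1} \geq q_{i+1} b_i$ and $b_t \leq a \leq 2^s$, so $\sum_i \log q_i = O(s)$; hence both the output size and the total cost of writing it out are $\tilde{O}(s)$. The outer recurrence $T(s) = T(s/2) + \tilde{O}(s) = \tilde{O}(s)$ together with the HGCD bound yields the claimed running time.
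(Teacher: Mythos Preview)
The paper does not give its own proof of this theorem: it is stated as a known result attributed to Knuth and Sch\"onhage, with a pointer to M\"oller~\cite{Moller2008} for a full description. Your sketch of the half-GCD approach is the standard one and is essentially what the cited reference contains, so there is nothing to compare against here; your outline is correct and matches the intended external proof.
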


\begin{corollary}[Fast computation of convergents]\label{cor:fast-computation-convergents}
    There is a deterministic algorithm that, on input a pair of integers $M, N > 0$ with $M,N \leq 2^s$, and an integer $i > 0$ computes integers $a_i, b_i$ such that $\sfrac{a_i}{b_i}$ is the $i$-th convergent of the rational number $\sfrac{M}{N}$, with running time $\tilde{O}(s)$.
\end{corollary}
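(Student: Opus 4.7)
The plan is to first invoke \cref{thm:fast-eea} on the pair $(M, N)$ to recover the entire quotient sequence $q_1, \ldots, q_t$ in time $\tilde{O}(s)$. By \cref{obs:continued-fractions-eea}, this sequence is exactly the continued fraction expansion $[q_1, \ldots, q_t]$ of $M/N$, so for $i \le t$ the $i$-th convergent equals $[q_1, \ldots, q_i]$, while for $i > t$ we simply output the $t$-th convergent (which is $M/N$ in lowest terms).

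To extract $(a_i, b_i)$ from the truncated quotient sequence, I would use the matrix identity
\[
\begin{pmatrix} a_j & a_{j-1} \\ b_j & b_{j-1} \end{pmatrix} \;=\; \prod_{k=1}^{j} \begin{pmatrix} q_k & 1 \\ 1 & 0 \end{pmatrix},
\]
which follows by a straightforward induction from the recurrence in \cref{lem:convergent-properties} with the natural initializations $(a_0, b_0) = (1, 0)$ and $(a_{-1}, b_{-1}) = (0, 1)$. Thus computing $(a_i, b_i)$ reduces to computing the first column of a product of $i$ two-by-two integer matrices with nonnegative entries.

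The crucial size bound making a nearly-linear running time feasible is that the total bit complexity of the quotients satisfies $\sum_{k=1}^t \log(q_k + 2) = O(s)$: indeed, the recurrence $b_k = q_k b_{k-1} + b_{k-2}$ combined with $b_0 = 0$, $b_1 = 1$ gives $b_t \ge \prod_{k \ge 2} q_k$, and $b_t \le N \le 2^s$ since $a_t / b_t$ is $M/N$ in lowest terms. I would compute the matrix product by a balanced divide-and-conquer on the index range, recursively multiplying the products over the two halves of $\{1, \ldots, i\}$ and combining with one $2 \times 2$ multiplication. At each of the $O(\log i) = O(\log s)$ levels of the recursion tree, the sum of the bit lengths of all matrix entries appearing is $O(s)$, so with fast integer multiplication the work per level is $\tilde{O}(s)$, giving $\tilde{O}(s)$ overall.

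The main technical point requiring care is precisely this running-time accounting: a naive left-to-right accumulation would pay $\Omega(s^2)$ when the $q_k$ have comparable bit lengths, so the divide-and-conquer structure is essential. Alternatively, one could simply read $(a_i, b_i)$ off the $2 \times 2$ matrix state already maintained internally by the half-GCD-based implementation underlying \cref{thm:fast-eea}, bypassing the need to first materialize the full quotient sequence.
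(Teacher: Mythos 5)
Your proposal follows the same route as the paper's proof: invoke \cref{thm:fast-eea} to obtain the quotient sequence, identify it with the continued fraction expansion via \cref{obs:continued-fractions-eea}, and recover $(a_i,b_i)$ as the first column of the product $\prod_{k\le i}\begin{pmatrix} q_k & 1 \\ 1 & 0 \end{pmatrix}$, using the bound $\prod q_k \le M \le 2^s$ to keep the product small. The paper simply asserts that this matrix product can be formed in $\tilde O(s)$ time; you correctly supply the missing detail (balanced divide-and-conquer over the index range, with the total bit-length per level bounded by $O(s)$), and you also note the legitimate shortcut of reading the matrix state directly off a half-GCD implementation — so your write-up is actually a slightly more careful rendering of the same argument.
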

\begin{proof}
    Let $q_1,\ldots, q_t$ be the quotient sequence for the pair $(M,N)$, which may be computed using \cref{thm:fast-eea} in $\tilde{O}(s)$ time. By \cref{obs:continued-fractions-eea}, this is the continued fraction representation of $\sfrac{M}{N}$. Thus, it is easy to note that
    \[
        \begin{bmatrix}
            a_{i} & a_{i-1}\\
            b_{i} & b_{i-1}
        \end{bmatrix} = \begin{bmatrix} q_1 & 1 \\ 1 & 0 \end{bmatrix} \cdots \begin{bmatrix} q_i & 1 \\ 1 & 0 \end{bmatrix}
    \]
    where $\sfrac{a_j}{b_j}$ is the $j$-th convergent. 
    Note that $\abs{q_j} < \sfrac{r_{j-1}}{r_j}$ where $\set{r_0,\ldots, r_t}$ is the associated remainder sequence and hence we have $\abs{q_1 \cdots q_t} \leq M \leq 2^s$. Thus, this matrix product can be computed in $\tilde{O}(s)$ time.
\end{proof}

\subsection{Rational number reconstruction}

\begin{lemma}[Fast rational number reconstruction]\label{lem:fast-rnr}
    There is a deterministic algorithm (namely \cref{alg:fast-rnr}) that, given as input an integer parameter $s > 0$ and integers $A,B$ with the guarantee that $\abs{B} < 2^{2s + 1}$ and there exist a unique rational number (in reduced form) $\sfrac{a}{b}$ with $\abs{b} < 2^s$ and
    \[
        \abs{\frac{A}{B} - \frac{a}{b}} < \frac{1}{2^{2s + 1}},
    \]
    finds the integers $a,b$ in time $\tilde{O}(s)$. 
\end{lemma}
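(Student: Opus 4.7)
The plan is to realize $a/b$ as a specific convergent of the continued fraction expansion of $A/B$, and then extract that convergent using the fast Extended Euclidean algorithm.

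First, I would argue that $a/b$ is necessarily a convergent of $A/B$. Since $|b| < 2^{s}$, we have $\frac{1}{2b^{2}} > \frac{1}{2^{2s+1}}$, so the hypothesised approximation bound $|A/B - a/b| < 1/2^{2s+1}$ yields $|A/B - a/b| < 1/(2b^{2})$. Invoking \cref{lem:convergent-properties:good-approx-implies-conv} (with $\alpha = A/B$) then forces $a/b$ to appear among the convergents $\{a_i/b_i\}$ of $A/B$. Note also that consecutive convergents are automatically in lowest terms (via the standard identity $a_n b_{n-1} - a_{n-1} b_n = \pm 1$), so this is consistent with $a/b$ being given in reduced form.

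Next, I would pin down \emph{which} convergent $a/b$ must be. By \cref{lem:convergent-properties:denominator}, the denominator sequence is strictly increasing, so the convergents with denominator below $2^{s}$ form an initial segment $a_1/b_1, \ldots, a_{i^{*}}/b_{i^{*}}$ of the full list. Moreover, by \cref{lem:convergent-properties:error} the approximation $|A/B - a_i/b_i|$ strictly decreases in $i$, so among these candidates the best approximation to $A/B$ is $a_{i^{*}}/b_{i^{*}}$ with the \emph{largest} denominator strictly below $2^{s}$. The uniqueness hypothesis then identifies $a/b = a_{i^{*}}/b_{i^{*}}$. (Note $i^{*} \geq 1$ always, since $b_1 = 1 < 2^{s}$.)

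The remaining task is algorithmic: compute $(a_{i^{*}}, b_{i^{*}})$ in $\tilde{O}(s)$ time. I would invoke \cref{thm:fast-eea} on $(|A|,|B|)$ to produce the full quotient sequence $q_1,\ldots,q_t$ of $A/B$ in time $\tilde{O}(s)$; by \cref{obs:continued-fractions-eea}, this is precisely the continued fraction expansion of $|A|/|B|$ (signs of $a,b$ are then restored from the signs of $A,B$). Since $|B| < 2^{2s+1}$ and denominators of convergents grow at least geometrically, we have $t = O(s)$. To locate $i^{*}$, I would binary search over $i \in \{1,\ldots,t\}$, computing a candidate $(a_i, b_i)$ at each probe via \cref{cor:fast-computation-convergents} in $\tilde{O}(s)$ time. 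This uses $O(\log s)$ probes, for a total of $\tilde{O}(s)$.

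The only real subtlety, and the step I would be most careful about, is verifying that the ``largest-denominator-below-$2^{s}$'' convergent is indeed the unique $a/b$ promised by the hypothesis (rather than, say, a nearby non-convergent in lowest terms sneaking in). This is handled by combining \cref{lem:convergent-properties:good-approx-implies-conv} (which forces $a/b$ to be a convergent) with the uniqueness assumption (which then forces it to be $a_{i^{*}}/b_{i^{*}}$); everything else is a black-box combination of fast EEA and fast convergent computation.
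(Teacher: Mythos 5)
Your proof is correct and follows essentially the same strategy as the paper: realize $a/b$ as a convergent of $A/B$ via \cref{lem:convergent-properties:good-approx-implies-conv}, identify it as the largest-denominator convergent with denominator below $2^s$, and extract it via fast EEA plus binary search. The one small divergence is in ruling out the earlier convergents $a_j/b_j$ with $j < i^*$: the paper directly lower-bounds $\abs{A/B - a_j/b_j} \geq \sfrac{1}{b_j(b_j + b_{j+1})} > \sfrac{1}{2^{2s+1}}$ using \cref{lem:convergent-properties:error} and \cref{lem:convergent-properties:denominator}, while you invoke monotonicity of the approximation errors together with the uniqueness hypothesis; note that the strict monotonicity you cite is not stated verbatim in \cref{lem:convergent-properties:error} but does follow from it combined with the recurrence in \cref{lem:convergent-properties:denominator} (since $b_{n+1}b_{n+2} \geq b_{n+1}(b_{n+1}+b_n) > b_n(b_n+b_{n+1})$), so your argument is sound.
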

\begin{proof}
    The algorithm is straightforward given \cref{cor:fast-computation-convergents} and \cref{lem:convergent-properties}.

    \medskip

    \begin{algorithm}[H]
        \caption{\texttt{Fast-Rational-Number-Reconstruction}}
        \label{alg:fast-rnr}
        \SetKwInOut{Input}{Input}\SetKwInOut{Output}{Output}
    
        \Input{Integers $A,B$ and an integer parameter $s > 0$ such that $\abs{A}, \abs{B} \leq 2^{2s + 1}$ and there is some rational number $\sfrac{a}{b}$ such that $\abs{b} < 2^s$ and $\abs{\sfrac{A}{B}- \sfrac{a}{b}} < \sfrac{1}{2^{2s+1}}$.}
        \Output{The integers $a,b$.}
        \BlankLine

        Using \cref{thm:fast-eea}, compute the quotient sequence $q_1,\ldots, q_\ell$ for the pair $A, B$.

        Using \cref{cor:fast-computation-convergents} and binary search, compute the largest index $i$ such that the $i$-th convergent $\sfrac{a_i}{b_i}$ satisfies $\abs{b_i} < 2^s$. \label{alg:fast-rnr:binary-search}

        \Return{$a_i, b_i$.}
    \end{algorithm}

    \medskip

    The running time of the algorithm is clearly $\tilde{O}(s)$ as claimed as $\ell = O(\log (A+B)) = O(s)$ and thus we have at most $O(\log \ell) = O(\log s)$ uses of \cref{cor:fast-computation-convergents} in \cref{alg:fast-rnr:binary-search}.

    For correctness, assume that $\sfrac{A}{B}$ is in its reduced form. Since we know $b_1 = 1$, let $i$ be the largest index with the denominator $b_i$ of the convergent $\sfrac{a_i}{b_i}$ satisfies $b_i < 2^s$. If this is the last convergent, then $\sfrac{A}{B} = \sfrac{a_i}{b_i}$ and we are done. Thus, we may assume that $\sfrac{A}{B} \neq \sfrac{a_i}{b_i}$. 

    Since we are given that $\abs{\sfrac{A}{B} - \sfrac{a}{b}} < \sfrac{1}{2^{2s+1}} < \sfrac{1}{2b^2}$, by \cref{lem:convergent-properties} \cref{lem:convergent-properties:good-approx-implies-conv}, $\sfrac{a}{b}$ is one of the convergents of $\sfrac{A}{B}$. For any $\ell > i$, the $\ell$-th convergent $\sfrac{a_\ell}{b_\ell}$ has denominator larger than $2^s$. For any $j < i$, from \cref{lem:convergent-properties} \cref{lem:convergent-properties:error} and \cref{lem:convergent-properties:denominator} we have
    \[
        \abs{\frac{A}{B} - \frac{a_j}{b_j}} \geq \frac{1}{b_j (b_{j} + b_{j+1})} > \frac{1}{2\cdot b_i^2} \geq \frac{1}{2^{2s+1}}.
    \]
    Thus, $\sfrac{a}{b}$ must be the $i$-th convergent $\sfrac{a_i}{b_i}$. 
\end{proof}

\subsection{Algorithm for exact-MME over rationals}

We now have all the necessary ingredients to describe the algorithm to prove \cref{thm:exact-mme-rationals}. 

\medskip

\begin{algorithm}[H]
    \caption{\texttt{Exact-MME-Rationals}}
    \label{alg:exact-mme-rationals}
    \SetKwInOut{Input}{Input}\SetKwInOut{Output}{Output}

    \Input{A polynomial $f(x_1,\ldots, x_m) \in \Q_{(-1,1)}[\vecx]$, points $\veca^{(1)}, \ldots, \veca^{(N)} \in \Q_{(-1,1)}^m$, with all rational numbers provided via the numerator and denominator, and an integer parameter $s$ such that all numerators and denominators of the coefficients of $f$, coordinates of the points, and evaluations $f(\veca^{(i)})$ are at most $2^s$.}
    \Output{Integers $b_1,\ldots, b_N$ and $c_1,\ldots, c_N$ such that $f(\veca^{(i)}) = \sfrac{b_i}{c_i}$ for all $i\in [N]$.}
    \BlankLine

    Using the numerators and denominators for the required approximation oracles, run $\texttt{approximate-MME-Reals}\inparen{f, \set{\veca^{(1)}, \ldots, \veca^{(N)}}, t = 2s+1}$ (\cref{alg:approx-mme-real}) to obtain integers $(B_1,\ldots, B_N)$ such that 
    \[
        \abs{f(\veca^{(i)}) - \frac{B_i}{2^t}} < \frac{1}{2^{t}} = \frac{1}{2^{2s+1}}. 
    \]    \label{alg:exact-mme-rationals:run-approx-mme}

    \For{$i\in N$}{\label{alg:exact-mme-rationals:for-rnr}
        Run $\texttt{Fast-Rational-Number-Reconstruction}\inparen{B_i, 2^{2s+1}, s}$ (\cref{alg:fast-rnr}) to get $b_i,c_i$ with $\abs{c_i} < 2^s$ such that 
        \[
            \abs{\frac{B_i}{2^{2s+1}} - \frac{b_i}{c_i}} < \frac{1}{2^{2s+1}}.
        \]\label{alg:exact-mme-rationals:rnr}
    }

    \Return{$\inparen{b_1,\ldots, b_N}, \inparen{c_1,\ldots, c_N}.$}
\end{algorithm}

\medskip

The correctness of \cref{alg:exact-mme-rationals} is evident from \cref{thm:approx-MME-reals} and \cref{lem:fast-rnr}. \cref{thm:approx-MME-reals} asserts that \cref{alg:approx-mme-real} correctly provides the required approximations for the  evaluations, and \cref{lem:fast-rnr} asserts that \cref{alg:fast-rnr} reconstructs the correct rational number. 

As for running time, given the numerator and denominators, we can build approximation oracles for each rational number with nearly-linear running time. Thus, \cref{alg:exact-mme-rationals:run-approx-mme} takes $((d^m + Nm) \cdot s)^{1 + o(1)}$ time and the loop in \crefrange{alg:exact-mme-rationals:for-rnr}{alg:exact-mme-rationals:rnr} takes $\tilde{O}(N \cdot s)$ time. Thus, the total running time is $((d^m + Nm) \cdot s)^{1 + o(1)}$ as claimed. \\

This completes the proof of \cref{thm:exact-mme-rationals}. \qed

\section{Approximate-MME over complex numbers}
In this section, we briefly discuss the extension of \cref{thm:approx-MME-reals} to the field of complex numbers. As discussed in the preliminaries, the field constants in this case are given by two approximation oracles, one for the real part of the complex number, and one for the imaginary part. The ideas needed for this extension, on top of the ideas in the proof of \cref{thm:MME-finite-fields} are quite standard and were introduced by Kedlaya \& Umans \cite{KedlayaU2011} for designing fast algorithms for MME for finite fields that are not prime. This approach also found a subsequent application in the work of Bhargava et al. \cite{BhargavaGGKU2022}, again in the context of dealing with non-prime finite fields while designing algorithms for MME. In the interest of keeping this discussion succinct and to avoid repetition, we outline the main steps needed for this generalization, but skip the formal details. The structure of the algorithm closely follows that of \cref{alg:approx-mme-real}, with some additional care.  

As in the proof of \cref{alg:approx-mme-real}, we first make sure that the number of underlying variables is growing. Next, we round each of the field constants (both the real and the imaginary parts) by rational numbers with denominator $2^k$ for some sufficiently large integer $k$ to be chosen later. At this point, we have introduced some error (which turns out to be small if $k$ is sufficiently large), but have reduced the problem instance over $\C$ to an instance over $\Q[\omega]$, where $\omega$ is a square root of $-1$. Moreover, all the denominators of the field constants in the problem are of the form $2^k$. We now clear out the denominators, as in \cref{alg:approx-mme-real}, and get  an instance of MME where the constants in the problem are from the ring $\Z[\omega]$. At this point, we replace $\omega$ in the constants in the input by a new formal variable $z$, and instead of working over the ring $\Z[\omega]$, we work over the ring $\Z[z]/\langle z^2 + 1 \rangle$. Note that this is sufficient, since given a solution to MME over this ring, we can obtain a solution to the original problem by just replacing $z$ by $\omega$. Now, the idea is to just invoke the algorithm for exact MME over integers (\cref{alg:exact-mme-integers}) for this problem instance. However, we cannot quite do that directly since the instance at hand is over $\Z[z]/\langle z^2 + 1 \rangle$ and not over $\Z$ as desired. Nevertheless, we proceed as in \cref{alg:exact-mme-integers} by picking sufficiently many primes $p_1, p_2, \ldots, p_s$ and reducing the problem instance over $\Z[z]/\langle z^2 + 1 \rangle$ modulo these primes to obtain instances over the rings $\F_{p_i}[z]/\langle z^2 + 1 \rangle$ for every $i$. In \cref{alg:exact-mme-integers}, we just invoked the result of \cite{BhargavaGGKU2022} over prime fields at this stage, and then combined the outputs using fast Chinese remaindering. However, in this case, what we have are instances over the finite rings $\F_{p_i}[z]/\langle z^2 + 1 \rangle$. But this does not turn out to be an issue as  the algorithm of Bhargava et al continues to work over such rings, and indeed the results and proofs in the \cite{BhargavaGGKU2022} are stated in this form. One final thing to note is that the small optimizations that we do over the results in \cite{BhargavaGGKU2022} in \cref{sec:optimization over finite fields} to make sure that the dependence of the running time on the field size is nearly-linear continues to be true for the extension rings that we have here. Once we have solved all the instances over $\F_{p_i}[z]/\langle z^2 + 1 \rangle$, we can recover the solution over $\Z[z]/\langle z^2 + 1 \rangle$ by an application of fast Chinese remaindering as in \cref{alg:exact-mme-integers}, and an appropriate scaling of these evaluations (again, as in \cref{alg:approx-mme-real}) gives us approximations of the original evaluations over $\C$. The error analysis and the bound on the running time essentially the same as that in the analysis of \cref{alg:approx-mme-real}. We skip the rest of the details.

\section{Discussion and open problems} \label{sec: open problems}
We conclude with some open problems. 

\begin{enumerate}
    \item Perhaps the most natural question here is to seek an algebraic algorithm for multivariate multipoint evaluation over general fields, both finite and infinite. Currently, we only know such algebraic algorithms over finite fields of small characteristic \cite{Umans2008, BhargavaGKM2022}. 
    \item The aforementioned question of having an algebraic algorithm for MME is also interesting in the non-uniform setting. For instance, we do not know if the linear transformation given by a multivariate Vandermonde matrices can be computed by an arithmetic circuit of nearly-linear (or even sub-quadratic) size  over fields other than finite fields of small characteristic. 
    \item It would be interesting to have additional applications of these faster algorithms and the ideas therein, beyond the applications already mentioned by Kedlaya and Umans \cite{KedlayaU2011}. 
\end{enumerate}

{\small 

  \bibliographystyle{prahladhurl}
  \bibliography{MME-bib.bib}

\begin{thebibliography}{BGGKU22}

\bibitem[BGGKU22]{BhargavaGGKU2022}
\textsc{Vishwas Bhargava}, \textsc{Sumanta Ghosh}, \textsc{Zeyu Guo},
  \textsc{Mrinal Kumar}, and \textsc{Chris Umans}.
\newblock \href{https://doi.org/10.1109/FOCS54457.2022.00028} {\emph{Fast
  multivariate multipoint evaluation over all finite fields}}.
\newblock In \textsc{Jelani Nelson}, ed., \emph{Proc.\ $63$rd IEEE Symp.\ on
  Foundations of Comp.\ Science (FOCS)}, pages 221--232. 2022.
\newblock \href{http://arxiv.org/abs/2205.00342}{\path{arXiv:2205.00342}},
  \href{https://eccc.weizmann.ac.il/eccc-reports/2022/TR22-063}{\path{eccc:2022/TR22-063}}.

\bibitem[BGKM22]{BhargavaGKM2022}
\textsc{Vishwas Bhargava}, \textsc{Sumanta Ghosh}, \textsc{Mrinal Kumar}, and
  \textsc{Chandra~Kanta Mohapatra}.
\newblock \href{https://doi.org/10.1145/3519935.3519968} {\emph{Fast, algebraic
  multivariate multipoint evaluation in small characteristic and
  applications}}.
\newblock In \textsc{Stefano Leonardi} and \textsc{Anupam Gupta}, eds.,
  \emph{Proc.\ $54$th ACM Symp.\ on Theory of Computing (STOC)}, pages
  403--415. 2022.
\newblock \href{http://arxiv.org/abs/2111.07572}{\path{arXiv:2111.07572}},
  \href{https://eccc.weizmann.ac.il/eccc-reports/2021/TR21-162}{\path{eccc:2021/TR21-162}}.

\bibitem[BM74]{BorodinM1974}
\textsc{Allan Borodin} and \textsc{Robert~Thomas Moenck}.
\newblock \href{https://doi.org/10.1016/S0022-0000(74)80029-2} {\emph{Fast
  modular transforms}}.
\newblock J. Comput.\ Syst.\ Sci., 8(3):366--386, 1974.

\bibitem[GG13]{GathenG-MCA}
\textsc{Joachim von~zur Gathen} and \textsc{J{\"{u}}rgen Gerhard}.
\newblock \href{https://doi.org/10.1017/CBO9781139856065} {\emph{Modern
  Computer Algebra}}.
\newblock Cambridge University Press, 3 edition, 2013.

\bibitem[Had96]{Hadamard1896}
\textsc{Jacques~S. Hadamard}.
\newblock \href{https://doi.org/10.24033/bsmf.545} {\emph{Sur la distribution
  des z{\'e}ros de la fonction ${\zeta(s)}$ et ses cons{\'e}quences
  arithm{\'e}tiques}}.
\newblock Bulletin de la Soci{\'e}t{\'e} Math{\'e}matique de France,
  24:199--220, 1896.

\bibitem[KU08]{KedlayaU2008}
\textsc{Kiran~S. Kedlaya} and \textsc{Christopher Umans}.
\newblock \href{https://doi.org/10.1109/FOCS.2008.13} {\emph{Fast modular
  composition in any characteristic}}.
\newblock In \textsc{R.~Ravi}, ed., \emph{Proc.\ $49$th IEEE Symp.\ on
  Foundations of Comp.\ Science (FOCS)}, pages 146--155. 2008.

\bibitem[KU11]{KedlayaU2011}
---{}---{}---.
\newblock \href{https://doi.org/10.1137/08073408X} {\emph{Fast polynomial
  factorization and modular composition}}.
\newblock SIAM J. Comput., 40(6):1767--1802, 2011.
\newblock (Preliminary version in \emph{40th STOC}, 2008 and \emph{49th FOCS},
  2008).

\bibitem[LVP97]{Poussin1897}
\textsc{Charles~Jean de~La~Vall{\'e}e~Poussin}.
\newblock \href{https://books.google.co.in/books?id=7e0GAAAAYAAJ}
  {\emph{Recherches analytiques sur la th{\'e}orie des nombres premiers
  ({F}rench) [Analytical research on the theory of prime numbers]}}, volume
  1--5.
\newblock Hayez, 1897.

\bibitem[M{\"{o}}l08]{Moller2008}
\textsc{Niels M{\"{o}}ller}.
\newblock \href{https://doi.org/10.1090/S0025-5718-07-02017-0} {\emph{On
  {S}ch{\"{o}}nhage's algorithm and subquadratic integer {GCD} computation}}.
\newblock Math. Comput., 77(261):589--607, 2008.

\bibitem[Mor21]{Moroz2021}
\textsc{Guillaume Moroz}.
\newblock \href{https://doi.org/10.1109/FOCS52979.2021.00108} {\emph{New data
  structure for univariate polynomial approximation and applications to root
  isolation, numerical multipoint evaluation, and other problems}}.
\newblock In \textsc{Nisheeth Vishnoi}, ed., \emph{Proc.\ $62$nd IEEE Symp.\ on
  Foundations of Comp.\ Science (FOCS)}, pages 1090--1099. 2021.
\newblock \href{http://arxiv.org/abs/2106.02505}{\path{arXiv:2106.02505}}.

\bibitem[NZ04]{NuskenZ2004}
\textsc{Michael N{\"{u}}sken} and \textsc{Martin Ziegler}.
\newblock \href{https://doi.org/10.1007/978-3-540-30140-0\_49} {\emph{Fast
  multipoint evaluation of bivariate polynomials}}.
\newblock In \textsc{Susanne Albers} and \textsc{Tomasz Radzik}, eds.,
  \emph{Proc. $12$th Annual European Symp.\ of Algorithms (ESA)}, volume 3221
  of \emph{LNCS}, pages 544--555. Springer, 2004.
\newblock \href{http://arxiv.org/abs/cs/0403022}{\path{arXiv:cs/0403022}}.

\bibitem[Sch80]{Schmidt-diophantine}
\textsc{Wolfgang~M. Schmidt}.
\newblock \href{https://doi.org/10.1007/978-3-540-38645-2} {\emph{Diophantine
  Approximation}}, volume 785 of \emph{LNM}.
\newblock Springer, 1980.

\bibitem[Uma08]{Umans2008}
\textsc{Christopher Umans}.
\newblock \href{https://doi.org/10.1145/1374376.1374445} {\emph{Fast polynomial
  factorization and modular composition in small characteristic}}.
\newblock In \textsc{Cynthia Dwork}, ed., \emph{Proc.\ $40$th ACM Symp.\ on
  Theory of Computing (STOC)}, pages 481--490. 2008.

\end{thebibliography}
}

\end{document}